\newtheorem{theorem}{Theorem}[section]
\newtheorem{corollary}[theorem]{Corollary}
\newtheorem{lemma}[theorem]{Lemma}
\newtheorem{fact}[theorem]{Fact}
\newtheorem{definition}[theorem]{Definition}
\newtheorem{example}[theorem]{Example}
\newtheorem{remark}[theorem]{Remark}
\newenvironment{proof}{\noindent {\bf Proof}.\ }{\ \\}
\newenvironment{myitemize}{\begin{list}{$\bullet$}{\setlength{\leftmargin}{10pt}
\setlength{\itemindent}{0.1\labelwidth}}} {\end{list}}
\newcommand{\spath}[1]{\mbox{[$#1$]-}path}
\newcommand{\adj}[1]{\mbox{[$#1$]-}adjacent}
\newcommand{\component}[1]{\mbox{[$#1$]-}component}
\newcommand{\components}[1]{\mbox{[$#1$]-}components}
\newcommand{\connected}[1]{\mbox{[$#1$]-}connected}
\newcommand{\touches}[1]{\mbox{[$#1$]-}touches}
\newcommand{\nodes}{\mathit{nodes}}
\newcommand{\edges}{\mathit{edges}}
\newcommand{\HD}{H\!D}
\newcommand{\HG}{{\cal H}}
\newcommand{\JT}{J\!T}
\newcommand{\node}{{\mathcal N}}
\renewcommand{\root}{\mathit{root}}
\newcommand{\vertices}{\mathit{vertices}}
\newcommand{\Pol}{\mbox{\rm P}}
\newcommand{\NP}{\mbox{\rm NP}}
\newcommand{\SL}{\mbox{\rm SL}}
\newcommand{\A}{\mathcal{A}}
\newcommand{\B}{\mathcal{B}}
\newcommand{\fontOn}{\tiny}
\newcommand{\U}{\mathcal{U}}
\newcommand{\DB}{{\rm \mbox{\rm DB}}}
\newcommand{\barDB}{\overline{\rm \mbox{\rm DB}}}
\newcommand{\DS}{{\mbox{$\mathcal DS$}}}
\newcommand{\WDS}{{\mbox{$\mathcal A(DS)$}}}
\newcommand{\onDB}{{\mbox{\rm \tiny DB}}}
\newcommand{\onbarDB}{{\overline{\mbox{\rm \tiny DB}}}}
\newcommand{\onDBp}{{\mbox{\rm \fontOn DB}'}}
\newcommand{\onDBpp}{{\mbox{\rm \fontOn DB}''}}
\newcommand{\onDBpiu}{{\mbox{\rm \fontOn DB}^+}}
\newcommand{\onDBred}{{\fontOn{\it \fontOn red}({\mbox{\rm \fontOn DB}},\W)}}
\newcommand{\red}{\mathsf{red}}
\newcommand{\vars}{\mathit{vars}}
\newcommand{\atoms}{\mathit{atoms}}
\newcommand{\atom}{\mathit{atom}}
\newcommand{\views}{\mathit{views}}
\newcommand{\chase}{{\tt chase}}
\newcommand{\fchase}{{\tt fchase}}
\newcommand{\cores}{{\tt cores}}
\newcommand{\homEquiv}{\approx_{\tt hom}}
\newcommand{\V}{\mathcal{V}}
\newcommand{\W}{\mathcal{V}}
\newcommand{\HGUnoSet}{\HG_{Q'}}
\newcommand{\HGDue}{\HG_\mathcal{V}}
\newcommand{\lc}{\mathsf{lc}}
\newcommand{\gc}{\mathsf{gc}}
\renewcommand{\A}{\mathbb{A}}
\renewcommand{\B}{\mathbb{B}}
\newcommand{\lDM}{v\mbox{-\tt DM}}
\newcommand{\rDM}{d\mbox{-\tt DM}}
\newcommand{\CR}{\mbox{\rm R\&C}}
\newcommand{\C}{\mathcal{C}}
\newcommand{\E}{\mathrm{ED}}
\newcommand{\F}{\mathrm{Fr}}
\newcommand{\ecomponent}[1]{\mbox{[$#1$]-}option}
\newcommand{\ecomponents}[1]{\mbox{[$#1$]-}options}
\newcommand{\iecomponent}[1]{\mbox{\em [$#1$]-}option}
\newcommand{\tuple}[1]{\langle#1\rangle}
\newcommand{\tpCovered}{\mbox{\it tp-covered}}
\newcommand{\vcovers}{\mbox{\it covers}}
\newcommand{\FPT}{\mbox{\rm FPT}}
\newcommand{\nop}[1]{}
\newcommand{\longv}[1]{}
\begin{document}

%

\title{Tree Projections and Structural Decomposition Methods: The Power of Local Consistency and
Larger Islands of Tractability}

\author{Gianluigi Greco and Francesco Scarcello\\
\ \\
\small
  Dept. of Mathematics and DEIS,\\
\small  University of Calabria, Via P. Bucci 30B, 87036, Rende, Italy\\
\small  {\tt \{ggreco\}@mat.unical.it}, {\tt \{scarcello\}@deis.unical.it} }

\date{}

\maketitle

\begin{abstract}
Evaluating conjunctive queries and solving constraint satisfaction problems are fundamental problems in database theory and artificial
intelligence, respectively. These problems are {\rm NP}-hard, so that several research efforts have been made in the literature for identifying
tractable classes, known as islands of tractability, as well as for devising clever heuristics for solving efficiently real-world instances.

Many heuristic approaches are based on enforcing on the given instance a property called {\em local consistency}, where (in database terms)
each tuple in every query atom matches at least one tuple in every other query atom. Interestingly, it turns out that, for many well-known
classes of queries, such as for the acyclic queries, enforcing local consistency is even sufficient to solve the given instance correctly.
However, the precise power of such a procedure was unclear, but for some very restricted cases.

The paper provides full answers to the long-standing questions about the precise power of algorithms based on enforcing local consistency. In
particular, the paper deals with both the general framework of {\em tree projections}, where local consistency is enforced among arbitrary
views defined over the given database instance, and the specific cases where such views are computed according to so-called {\em structural
decomposition methods}, such as generalized hypertree width, component hypertree decompositions, and so on.

The classes of instances where enforcing local consistency turns out to be a correct query-answering procedure are however not efficiently
recognizable. In fact, the paper finally focuses on certain subclasses defined in terms of the novel notion of {\em greedy tree projections}.
These latter classes are shown to be efficiently recognizable and strictly larger than most islands of tractability known so far, both in the
general case of tree projections and for specific structural decomposition methods.
\end{abstract}

\raggedbottom

\section{Introduction}

\subsection{Acyclic Conjunctive Queries}\label{sec:acyclic}

Answering conjunctive queries to relational databases is a basic problem in database theory, and it is equivalent to many other fundamental
problems, such as conjunctive query containment and constraint satisfaction. Recall that conjunctive queries are defined through conjunctions
of atoms (without negation), and are known to be equivalent to Select-Project-Join queries. The problem of evaluating such queries is
$\NP$-hard in general, but it is feasible in polynomial time on the class of acyclic queries (we omit ``conjunctive,'' hereafter), which was
the subject of many seminal research works since the early ages of database theory (see, e.g.,~\cite{BFMY83}). This class contains all queries
$Q$ whose associated query hypergraph $\HG_Q$ is acyclic,\footnote{For completeness, observe that different notions of hypergraph acyclicity
have been proposed in the literature. This paper follows the standard definition of acyclic conjunctive queries, so that hypergraph acyclicity
always refers to the most liberal notion, known as $\alpha$-acyclicity~\cite{fagi-83}.} where $\HG_Q$ is a hypergraph having the variables of
$Q$ as its nodes, and the (sets of variables occurring in the) atoms of $Q$ as its hyperedges.
It is well known that acyclic queries enjoy a number of highly desirable properties, recalled next.

First, {\em acyclic queries can be efficiently solved}. From any acyclic query, we can build (in linear time) a \emph{join
tree}~\cite{bern-good-81}, which is a tree whose vertices correspond to the various atoms and where the subgraph induced by vertices containing
any given variable is a tree. According to Yannakakis's algorithm~\cite{yann-81}, {Boolean} acyclic queries can be evaluated by processing any
of their join trees bottom-up, by performing upward semijoins between the relations associated with the query atoms,
thus keeping the size of the intermediate relations small. At the end, if the relation associated with the root of the join tree is not empty,
then the answer of the query is not empty. For non-Boolean queries, after the bottom-up step described above, one can perform the opposite
top-down step by filtering each child vertex from those tuples that do not match with its parent tuples. The filtered database, called
\emph{full reducer}, then enjoys the {\em global consistency} property:
every tuple in every relation participates in some solution. By exploiting this property, all solutions can be computed with a backtrack-free
procedure (i.e., with backtracks used to look for further solutions, and never caused by wrong choices).

Second, \emph{the class of acyclic instances coincides with the class of queries where local consistency entails global consistency}. We say
that local (also, pairwise) consistency holds if the relations associated with the query atoms are not empty and we do not miss any tuple by
taking semijoins between any pair of them.
The acyclic instances that fulfil this property also fulfil the global consistency property~\cite{BFMY83}. Note that local consistency may
easily be enforced by taking the semijoins between all pairs of atoms until a fixpoint is reached. Therefore, in abstract terms, any acyclic
query can be answered by means of ``local'' computations only, without any additional knowledge about the whole structure, in particular
without computing any join tree of the query. In addition, and more surprisingly, if a class of instances can be answered by means of this
approach, then it only contains acyclic instances~\cite{BFMY83}.\footnote{Actually, this classical result holds only for queries where every
relation symbol is used at most once. The precise power of local computations in the general case is identified in this paper (for acyclic
queries too).}

Finally, \emph{acyclicity is efficiently recognizable}. Deciding whether a hypergraph is acyclic is feasible in linear
time~\cite{tarj-yann-84}, and also in deterministic logspace. In fact, this latter property follows from the fact that hypergraph acyclicity
belongs to $\SL$~\cite{gott-etal-01}, and that $\SL$ is equal to deterministic logspace~\cite{rein-04}. Note that, in the light of this
property and the first one above, these queries identify a so-called (accessible) ``island of tractability'' for the query answering
problem~\cite{K03}.

\subsection{Generalization of Acyclicity}

Queries arising from real applications are hardly precisely acyclic. Yet, they are often not very intricate and, in fact, tend to exhibit some
limited degree of cyclicity, which suffices to retain most of the nice properties of acyclic ones.

Several efforts have been spent to investigate invariants that are best suited to identify nearly-acyclic hypergraphs, leading to the
definition of a number of so-called {\em (purely) structural decomposition-methods}, such as the \emph{(generalized)
hypertree}~\cite{gott-etal-99}, \emph{fractional hypertree}~\cite{GM06}, \emph{spread-cut}~\cite{CJG08}, and \emph{component
hypertree}~\cite{GMS07} decompositions. These methods aim at transforming a given cyclic hypergraph into an acyclic one, by organizing its
edges (or its nodes) into a polynomial number of clusters, and by suitably arranging these clusters as a tree, called decomposition tree. The
original problem instance can then be evaluated over such a tree of subproblems, with a cost that is exponential in the cardinality of the
largest cluster, also called {\em width} of the decomposition, and polynomial if this width is bounded by some constant.

Despite their different technical definitions, there is a simple mathematical framework that encompasses all the above decomposition methods,
which is the framework of the \emph{tree projections}~\cite{GS84}. In this setting, a query $Q$ is given together with a set $\V$ of atoms,
called views, which are defined over the variables in $Q$.
The question is whether (parts of) the views can be arranged as to form a tree projection (playing the role of a decomposition tree), i.e., a
novel acyclic query that still ``covers'' $Q$.
By representing $Q$ and $\V$ via the hypergraphs $\HG_Q$ and $\HG_\V$, where hyperedges one-to-one correspond with query atoms and views,
respectively, the tree projection problem reveals its graph-theoretic nature. For a pair of hypergraphs $\HG_1,\HG_2$, let $\HG_1\leq \HG_2$
denote that each hyperedge of $\HG_1$ is contained in some hyperedge of $\HG_2$. Then, a tree projection of $\HG_Q$ w.r.t.~$\HG_\V$ is any
acyclic hypergraph $\HG_a$ such that $\HG_Q\leq \HG_a\leq \HG_\V$. If such a hypergraph exists, then we say that the pair of hypergraphs
$(\HG_Q,\HG_\V)$ has a tree projection.

\begin{figure}[t]
  \centering
  \includegraphics[width=0.99\textwidth]{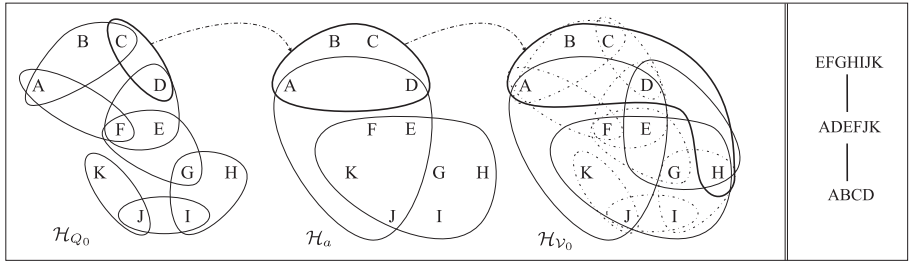}
  \caption{A tree projection $\HG_a$ of $\HG_{Q_0}$ w.r.t.~$\HG_{\V_0}$; 
  On the right: A join tree $\JT_a$ for $\HG_a$.
  }\label{fig:hypergraph}
\end{figure}

\begin{example}\em Consider the conjunctive query
\vspace{-1mm}
\[
\begin{array}{ll}
   Q_0:  & r_1(A,B,C)\wedge r_2(A,F)\wedge r_3(C,D)\wedge r_4(D,E,F)\wedge\\
            & r_5(E,F,G) \wedge r_6(G,H,I)\wedge r_7(I,J)\wedge r_8(J,K),\\
\end{array}
\]

\vspace{-1mm}\noindent whose associated hypergraph $\HG_{Q_0}$ is depicted in Figure~\ref{fig:hypergraph}, together with other hypergraphs that
are discussed next.

To answer $Q_0$, assume that a set $\V_0$ of views is available comprising some views, called {\em query views}, playing the role of query
atoms, plus four additional views. The set of variables of each view is a hyperedge in the hypergraph $\HG_{\V_0}$ (query views are depicted as
dashed hyperedges).
In the middle between $\HG_{Q_0}$ and $\HG_{\V_0}$, Figure~\ref{fig:hypergraph} reports the hypergraph $\HG_a$ which covers $\HG_{Q_0}$, and
which is in its turn covered by $\HG_{\V_0}$---e.g., $\{C,D\}\subseteq \{A,B,C,D\}\subseteq\{A,B,C,D,H\}$. Since $\HG_a$ is in addition acyclic
(just check the join tree $\JT_a$ in the figure), $\HG_a$ is a tree projection of $\HG_{Q_0}$ w.r.t.~$\HG_{\V_0}$.~\hfill~$\lhd$
\end{example}

Observe that, in the tree projection framework, views can be arbitrary, i.e, they do not depend on the specific conjunctive query $Q$, and can
be reused to answer different queries. In particular, views may be the materialized output of any procedure over the database, possibly much
more powerful than conjunctive queries.
Moreover, it is known and easy to see that any decomposition method based on clustering subproblems can be viewed as an instance of this
general setting, identifying  a specific set of views to answer a given query $Q$ efficiently (see Section~\ref{sec:framework} and
Section~\ref{SDM}).

For example (see, e.g., \cite{adler08,GS08,GS10}), for any fixed natural number $k$, the {generalized hypertree decomposition method}
associates with any query $Q$ a set $\it v\mbox{-}hw_k(Q)$ of views, containing one distinct view over each set of variables that can be
covered by at most $k$ query-atoms.~
For any hypergraph $\HG$, let $\HG^k$ be the hypergraph whose hyperedges are all possible sets obtained by the union of at most $k$ hyperedges
of $\HG$, and notice that $\HG_Q^k$ is precisely the hypergraph associated with $\it v\mbox{-}hw_k(Q)$. A query $Q$ has \emph{generalized
hypertree width} bounded by $k$ {if, and only if,} there is a tree projection of $\HG_Q$ w.r.t.~$\HG_Q^k$.

For another example, we recall the \emph{tree decomposition} method~\cite{DP89,Fre90}, based on the notion of treewidth~\cite{RS84}, which is
the most general decomposition method over classes of \emph{bounded-arity} queries (see, e.g, \cite{gott-etal-00,G07}). For any fixed natural
number $k$, the method defines the set $\it v\mbox{-}tw_k(Q)$ of views containing one distinct view over each set of at most $k+1$ variables
occurring in $Q$. Let $\HG_Q^{tk}$ be the hypergraph associated with $\it v\mbox{-}tw_k(Q)$, i.e., the hypergraph whose hyperedges are all
possible sets of at most $k+1$ variables. Then, a query $Q$ has \emph{treewidth} bounded by $k$ {if, and only if,} there is a tree projection
of $\HG_Q$ w.r.t.~$\HG_Q^{tk}$ (see, e.g., \cite{GS08,GS10}).

In fact, the notion of tree projection is quite natural and may be exploited in different applications where hypergraphs naturally represent
structural properties of input instances. For example, Adler~\cite{adler-thesis} pointed out that the notion of acyclicity for a conjunctive
query with negation $Q$, as defined in~\cite{FFG02}, can be immediately recast as the existence of a tree projection of $\HG_Q$ w.r.t.
$\HG_{Q^+}$, where the hyperedges of $\HG_{Q^+}$ are the sets of variables occurring in the positive atoms of $Q$ only, while the hyperedges of
$\HG_{Q}$ correspond to all atoms, including the negative ones. Then, we can generalize this notion to obtain larger classes of tractable
instances, by saying that a query with negation $Q$ has {\em generalized hypertree width} at most $k$ if the pair $(\HG_Q,\HG_{Q^+}^k)$ has a
tree projection. Indeed, following the same reasoning as in~\cite{FFG02}, it is easy to see that, given such a tree projection, the query $Q$
can be evaluated in polynomial time.

\subsection{Open Questions About Tree Projections and Structural Decomposition Methods}

The interest on the tree projection framework goes back to the eighties, when it was noticed that queries that admit a tree projection can be
evaluated in polynomial time~\cite{GS84} (see, also, \cite{SS93}). Thus, tree projections smoothly preserve the first crucial property of
acyclic queries discussed in Section~\ref{sec:acyclic}.
Our knowledge on the preservation of the other properties of acyclic queries was less clear, instead.
In fact, the following two questions have been posed in the literature for the general tree projection framework as well as for structural
decomposition methods specifically tailored to deal with classes of queries without a fixed arity bound. Such questions were in particular open
for the generalized hypertree decomposition method, which on classes of unbounded-arity queries is a natural counterpart of the tree
decomposition method.

\medskip

\noindent \textbf{(Q1) What is the precise power of local-consistency based algorithms?} This question was firstly raised in~\cite{BFMY83} and
specifically for the general case of tree projections in~\cite{SS93}, and remained open so far, despite it was attacked via different
approaches and proof techniques, which gave some partial results, reported below.

Let $\W$ be an arbitrary set of views, which also contains the query views representing the atoms of a given query $Q$. Let $\lc(\W,\DB)$
denote that the views in $\V$ evaluated over a database $\DB$ enjoy the local consistency property, i.e., they are non-empty and we do not miss
any tuple by taking the semijoin between any pair of views.
Let $\red(\W,\DB)$ be the \emph{reduct} of $\DB$ according to $\W$, computed by taking all possible semijoins until a fixpoint is reached. More
precisely, $\red(\W,\DB)$ is the (set-inclusion) maximal subset of $\DB$ such that $\lc(\W,\DB)$ holds, or $\red(\W,\DB)=\emptyset$, whenever
such a maximal subset does not exist.
Let $\gc(\W,\DB,Q)$ denote that the global consistency property holds, i.e., every tuple in every query view (evaluated over $\DB$)
participates in the query answer. Let $Q^\onDB\neq \emptyset$ denote that the answer of $Q$ on $\DB$ is not empty. Then, the picture emerging
from the literature is as follows:

\begin{myitemize}
\item[--] The existence of a tree projection of $\HG_Q$ w.r.t.~$\HG_\V$ entails that, $\forall \DB$, $\lc(\W,\DB)\Rightarrow
    \gc(\W,\DB,Q)$~\cite{SS93}. In words, the existence of a tree projection is a sufficient condition for the global
    consistency property to hold, whenever the database is local consistent. Thus, if a tree projection exists, then both
    deciding whether the query is not empty and computing a query answer (if any) are feasible in polynomial time, by enforcing local
    consistency. Observe that such a procedure is based on local computations only, and hence there is no need to actually compute a tree
    projection. This is a remarkable result, since computing a tree projection is instead not feasible in polynomial time, unless
    $\Pol=\NP$~\cite{GMS07}.
    \emph{It was conjectured that the existence of a tree projection is also a necessary condition for
    having this property~{\em \cite{GS84,SS93}}}.

\smallskip

\item[--] Consider classes of bounded-arity queries $Q$, and the tree decomposition method, hence the view set $\it
    v\mbox{-}tw_k(Q)$ with its associated hypergraph $\HG_Q^{tk}$. For any database $\DB$, let ${\it d\mbox{-}tw_k(Q,\DB)}$ be the database obtained
    by associating each view in $\it v\mbox{-}tw_k(Q)$ with the cartesian product of the set of constants that variables occurring in it may
    take. It is known that $\forall \DB$, $(\red(\it v\mbox{-}tw_k(Q),{\it d\mbox{-}tw_k(Q,\DB)})\neq\emptyset) \Rightarrow$ ($Q^\onDB\neq \emptyset)$
    if~\cite{DKV02}, and only if~\cite{ABD07}, there is a tree projection of $\HGUnoSet$ w.r.t.~$\HG_Q^{tk}$, for some {\em core} $Q'$~of~$Q$.
    In fact, the result holds for any query $Q'$ that is homomorphically equivalent to $Q$, denoted by $Q' \homEquiv Q$ (instead of just for a core,
    which is any smallest one). This result provides a necessary and sufficient condition for query answering via local consistency,
    without computing any tree-decomposition of such a subquery $Q'$, which would be an $\NP$-hard task~\cite{DKV02}. Observe that \emph{the necessary
    condition holds only for structures of bounded arity, and the result provides only information about the decision problem (i.e., checking whether the answer is empty or not)}.

\smallskip
\item[--] For the general case of queries $Q$ with unbounded arity, consider the generalized hypertree decomposition method and hence the
    view set $\it v\mbox{-}hw_k(Q)$, containing one distinct view over each set of variables that can be covered by at most $k$ query-atoms, and its associated hypergraph $\HG_Q^{k}$.
    Moreover, for any database $\DB$, let ${\it d\mbox{-}hw_k(Q,\DB)}$ be the database obtained by associating each view in $\it v\mbox{-}tw_k(Q)$ with the (natural) join of all
    query-views over which it is defined. It is known that $\forall \DB$, $(\red(\it v\mbox{-}hw_k(Q),{\it d\mbox{-}hw_k(Q,\DB)})\neq\emptyset)\Rightarrow (Q^\onDB\neq
    \emptyset)$ if there exists a tree projection of $\HGUnoSet$ w.r.t.~$\HG_{Q'}^k$, where $Q'$ is any query such that $Q' \homEquiv
    Q$~\cite{CD05}. Note that, when we focus on generalized hypertree decompositions, instead of looking at views in $\it v\mbox{-}hw_k(Q)$ and tree
    projections, we may directly look at the consistency between every pair of sets of $k$ atoms, also called \emph{$k$-local consistency}.
    Hence, the result states a sufficient condition for deciding whether the answer is empty or not by enforcing $k$-local consistency,
    (again) without actually identifying such a subquery $Q'$ and without computing a generalized hypertree decomposition of $Q'$, which
    are both $\NP$-hard tasks. It was open whether the condition is also necessary~\cite{CD05}. Moreover, \emph{as in the above point about tree
    decompositions, the relationship  with global consistency and hence with the related problem of computing solutions was missing}.
\end{myitemize}

\noindent From these results, it emerges that the precise power of local-consistency based computations and of their relationships with tree
projections and with the other structural decomposition methods (in particular, tree decompositions and generalized hypertree decompositions)
was far from being clear: Is it possible that there are queries where such local computations do work even if no decomposition (or tree
projection) exists?

For instance, from the above recent results based on homomorphically equivalent subqueries for tree decompositions and generalized hypertree
decompositions, one may deduce that the mentioned conjecture in~\cite{GS84,SS93} (i.e., that local consistency implies global consistency if,
and only if, a tree projection of the query hypergraph exists) may not hold, in general. This is because in the case of queries with multiple
occurrences of the same relation symbol, the concept of {core} of the query plays a crucial role~\cite{DKV02}, as it should be clear from the
next example.

\begin{example}\label{ex:NeedOfCore}\em
Consider the following queries:
\[
\begin{array}{ll}
Q_1:\ r(A,B)\wedge r(B,C) \wedge r(C,D)\wedge r(D,A)\\
Q_2:\ r(A,B)\wedge r(B,C) \wedge r(D,C)\wedge r(A,D)\\
Q_3:\ r(B,A)\wedge r(C,B) \wedge r(C,D)\wedge r(D,A)\\
\end{array}
\]
These queries are completely equivalent as far as their hypergraphs are concerned, since $\HG_{Q_1}=\HG_{Q_2}=\HG_{Q_3}$. However, $Q_1$ is
already a core, while a core of $Q_2$ (resp., $Q_3$) is the acyclic sub-query $r(A,B)\wedge r(B,C)$ (resp., $r(C,D)\wedge r(D,A)$).
Thus, by focusing on $Q_2$ and $Q_3$ rather than on their cores, we could overestimate their intricacy.~\hfill~$\lhd$
\end{example}

However, the above conjecture might still hold in the original setting considered in~\cite{GS84}, where all relation symbols in a query are
distinct.

\medskip

\noindent \textbf{(Q2) Are there unexplored islands of tractability based on tree projections?}
An \emph{island of tractability} 
in the tree projection framework is a class $\C$ of pairs $(Q,\V)$ that can be efficiently recognized, and such that $Q$ can be efficiently
evaluated on every database, by possibly exploiting the views that are available in~$\V$.

Many specializations of tree projections, such as \emph{tree decompositions}~\cite{RS84}, \emph{hypertree decompositions}~\cite{gott-etal-99},
\emph{component decompositions}~\cite{GMS07}, and \emph{spread-cuts decompositions}~\cite{CJG08}, define islands of tractability whenever some
fixed bound is imposed on their widths. This is also the case for \emph{fractional hypertree decompositions}~\cite{GM06}, whenever the
resources sufficient for computing their $O(w^3)$ approximation~\cite{M09} are used as available  views.
However, this is not the case for general tree projections. Indeed, while Goodman and Shmueli~\cite{GS84} observed that queries that admit a
tree projection can be evaluated in polynomial time, Gottlob et al.~\cite{GMS07} proved that checking whether a tree projection exists or not
is an $\NP$-hard problem. Hence, the class $\C_{tp}=\{(Q,\V) \mid \HG_Q \mbox{ has a tree projection w.r.t.~} \HG_\V  \}$, which includes all
the above mentioned islands of tractability, is not an island of tractability in its turn.
In fact, in addition to the above result, we also know that:
\begin{myitemize}
\item[--] Deciding whether a tree projection of $\HG_Q$ w.r.t.~$\HG_Q^{tk}$\vspace{-1mm} (corresponding to a {tree decomposition}) exist is
    feasible in
    time $O({\small 2^{ck^2}}\times n)$, where $n$ is the size of $\HG_Q$, $k$ is the treewidth, and $c$ is a
    constant~\cite{bodl-96}, hence in linear time for a fixed width~$k$.

\smallskip
\item[--] The problem remains $\NP$-hard for the case of generalized hypertree decompositions, that is, when we have to decide the existence of
    a tree projection of $\HG_Q$ w.r.t.~$\HG_Q^k$, even if $k$ is a fixed number (greater than 2)~\cite{GMS07}.
\end{myitemize}

Moreover, recall that the sufficient conditions we have discussed in the previous point \textbf{(Q1)} do not identify (accessible) islands of
tractability, because their recognition problems are $\NP$-hard, too.
Such conditions are particularly useful in those settings where it is intractable to compute any tree projection, so that answers are computed
via procedures enforcing local consistency. However, having a tree projection at hands allows queries to be evaluated more efficiently
w.r.t.~techniques based on ``blind'' local-consistency enforcing. Intuitively, by having such a projection $\HG_a$ and hence a join tree for
$\HG_a$, we are able to exploit all the well known algorithms developed for acyclic queries. In particular, in this approach, only the views
occurring in the join tree are involved in the query evaluation,  while all available views should be used if no tree projection is available.
Furthermore, the number of semijoin operations to be performed having the join tree is at most the number of nodes in such a tree and does not
depend on the database, as it happens instead while enforcing local consistency.
Therefore, a natural question is whether there is any subclass of $\C_{tp}$, at least including all the tractable classes mentioned above,
which identifies an actual island of tractability where tree projections can be computed efficiently.

\subsection{Contribution}

In this paper, we provide a clear picture of the power of tree projections and structural decomposition methods, by answering the two questions
illustrated above.

It is worthwhile noting that our answers, summarized below, find applications in all those problems that can be solved efficiently on acyclic
and quasi-acyclic instances, even outside the Database area. In particular, our results can be exploited immediately for solving Constraint
Satisfaction Problems (CSPs) where constraints are represented as finite relations encoding allowed tuples of values (see,
e.g.,~\cite{gott-etal-00}).

\medskip

\noindent \textbf{(Q1)}
The first achievement of this paper is to solve the long-standing question about the power of local-consistency based computations, by
addressing in the analysis both the decision problem of checking whether the query is not empty, and the problem of characterizing a necessary
and sufficient condition guaranteeing that local consistency entails global consistency, which is useful from the query answering perspective.

Concerning the decision problem of checking whether the query has a solution, we show that the sufficient conditions identified for some
specializations of tree decompositions are also necessary, even in the most general framework. However, the technical machinery needed for
obtaining our results is quite different from the one used in~\cite{ABD07} for tree decompositions, which does not work when we have arbitrary
signatures or arbitrary views. Our first contribution is to show that:

\begin{center} \hspace{-5mm}{ \centering \fbox{\parbox{0.88\textwidth}{ \em \vspace{1mm}
The following are equivalent:

\vspace{-1mm}
\begin{enumerate}
\item[(1)] For every database \emph{$\DB$}, \emph{$\lc(\W,\DB)$} entails \emph{$Q^\onDB \neq \emptyset$}.

\item[(2)] There is a subquery $Q' \homEquiv Q$ for which $(\HGUnoSet,\HGDue)$ has a tree projection.
\end{enumerate}\vspace{-1mm}}} }
\end{center}

\smallskip

Our second contribution is then to single out the (stronger) conditions under which local consistency entails global consistency. We show that
finding a necessary and sufficient condition requires to exploit possible endomorphisms of the query. It emerged that to characterize when, at
local consistency, an atom  $p$ contains \emph{all, and only,} the correct tuples of the query $Q$ projected over the variables
$\vars(p)=\{X_1,...,X_n\}$ of $p$, we must look for tree projections of some ``output-aware'' substructures of $Q$. We say that
$\{X_1,...,X_n\}$ is $\tpCovered$ in $Q$ (w.r.t.~$\V$) if there is a tree projection of $(\HG_{Q_p},\HGDue)$, where $Q_p$ is a core of the
novel query $Q\wedge r(X_1,...,X_n)$, in which $r$ is a fresh relation symbol. Intuitively, $r$ is used to force any such a core to contain the
desired variables $\{X_1,...,X_n\}$. It turns out that, for having global consistency guaranteed by local consistency,
 for each query atom $p$, a tree projection $\HG_p$ of such a $Q_p$ must exist.

\begin{center} \hspace{-5mm}{ \centering \fbox{\parbox{0.88\textwidth}{ \em \vspace{1mm}The following are equivalent:

\vspace{-1mm}
\begin{enumerate} \item[(1)] For every database \emph{$\DB$}, \emph{$\lc(\W,\DB)$} entails \emph{$\gc(\W,\DB,Q)$}.

\item[(2)] For each query atom $q$, $\vars(q)$ is $\tpCovered$ in $Q$.
\end{enumerate}\vspace{-1mm}}} }
\end{center}

Thus, if \emph{(2)} holds and one is interested in computing query answers over output variables included in some query atom, then all
solutions are immediately available. In fact, the above result comes in the paper as a specialization of a more general result dealing with
those cases where one is interested in computing answers over an arbitrary subset of variables covered by some available view.

Moreover, observe that in the above condition different tree projections for different query atoms are allowed. That is, global consistency can
hold even if there is no tree projection that is able to cover all query atoms at once. However, if every relation symbol is used at most once
in the query, it is easy to see that \emph{(2)} is equivalent to requiring that a tree projection of the whole query exists. Hence, the
conjecture of~\cite{GS84} about the necessity of having a tree projection of the query does not hold in general, but it does hold for such a
restricted setting (in fact, the one considered in \cite{GS84}).

Actually, in this informal statement we have implicitly assumed databases where views are not more restrictive than the query; otherwise, using
such views may clearly lead to missing some tuple in the query answer. Note that this condition trivially holds whenever views are computed
from parts of the query (i.e., they are in fact subqueries), which happens in structural decomposition methods. However, this is not
necessarily true if one would like to exploit existing materialized views. Anyway, we show that soundness of query answers is always
guaranteed. If views are too restrictive w.r.t.~$Q$, then we may just miss completeness.

\medskip

\noindent \textbf{(Q1: Application to Decomposition Methods)}
As a direct consequence of our contribution w.r.t.~question \textbf{(Q1)}, we get in a unique result the generalization of all tractability
results known for purely structural decompositions methods (because all of them are specializations of the notion of tree projections).
Moreover, we provide the precise characterization of the power of \emph{$k$-local consistency}  for classes of queries without a fixed bound on
the arity, which was missing in~\cite{ABD07} and~\cite{CD05}.

In particular, we provide a necessary and sufficient condition such that $k$-local consistency entails global consistency, which is useful for
computing solutions.
Furthermore, concerning the decision problem (query non-emptiness), we show that the sufficient condition identified in~\cite{CD05} is in fact
necessary, too:

\begin{center} \hspace{-5mm}{ \centering \fbox{\parbox{0.88\textwidth}{ \em \vspace{1mm}
The following are equivalent:

\vspace{-1mm}\begin{enumerate} \item[(1)] For every database \emph{$\DB$}, \emph{$\red({\it v\mbox{-}hw_k(Q)},{\it d\mbox{-}hw_k(Q,\DB)})\neq
\emptyset$} entails $Q^\onDB\neq\emptyset$.

\item[(2)] $Q$ has a core having generalized hypertree width at most $k$.
\end{enumerate}\vspace{-1mm}}} }
\end{center}

We point out that the result is not an immediate corollary of the previous one about tree projections (by setting $\HGDue=\HG_Q^k$, where
$\HG_Q^k$ is the hypergraph where each hyperedge is the set of variables occurring in some group of at most $k$ query-atoms). Indeed, let $Q'$
be any core of $Q$, and recall that $Q'$ may be much smaller than $Q$. Thus, the set of views that can be used to form a $k$-width generalized
hypertree decomposition of $Q'$ only come from groups of at most $k$ atoms occurring in $Q'$. It follows that this set can be much smaller than
$\W_k$, which is built from the full query $Q$.
For another difference between our general result and the above one, note that the database ${\it d\mbox{-}hw_k(Q,\DB)}$ for the available
views, over which local consistency is considered, is functionally determined by the relations of query atoms in $\DB$ (instead of being almost
arbitrary).

Note that, for $k=1$, local consistency is required to hold only on the query views playing the role of the original query atoms. We thus
obtain {\em the precise characterization of the power of local consistency in acyclic queries}, generalizing the classical result
in~\cite{BFMY83} given for queries without multiple occurrences of the same relation symbol: for every database \emph{$\DB$}, {\em local
consistency (of query views) entails  $Q^\onDB\neq\emptyset$ if, and only if, $Q$ has an acyclic core}.

\medskip

\noindent \textbf{(Q2)} As discussed above, the classes of instances where enforcing local consistency is a correct query-answering procedure
are not efficiently recognizable. Therefore, it is natural to look for subclasses that are efficiently recognizable and that are strictly
larger than the islands of tractability known so far.
Addressing this issue is the second main achievement of the paper.
To this end, we exploit the game-theoretic characterization of tree projections in terms of the \emph{Captain and Robber} game~\cite{GS08}. The
game is played on a pair of hypergraphs $(\HG_1,\HG_2)$ by a Captain controlling, at each move, a squads of cops encoded as the nodes in a
hyperedge $h\in \edges(\HG_2)$, and by a Robber who stands on a node and can run at great speed along the edges of $\HG_1$, while being not
permitted to run trough a node that is controlled by a cop. In particular, the Captain may ask any cops in the squad $h$ to run in action, as
long as they occupy nodes that are currently reachable by the Robber, thereby blocking an escape path for the Robber. While cops move, the
Robber may run trough those positions that are left by cops or not yet occupied. The goal of the Captain is to place a cop on the node occupied
by the Robber, while the Robber tries to avoid her capture. The Captain has a winning strategy if, and only if, there is a tree projection of
$\HG_1$ w.r.t.~$\HG_2$. Then,

\begin{itemize}
\item[$\blacktriangleright$] We define the notion of {\em greedy strategies}, which are winning strategies for the Captain, possibly
    non-monotone, where it is required that \emph{all} cops available at the current squad $h$ and reachable by the Robber enter in action.
    If all of them are in action, then a new squad $h'$ is selected, again requiring that all the active cops, i.e., those in the frontier,
    enter in action. In the Captain and Robber game, it is known that there is no incentive for the Captain to play a strategy that is not
    monotone~\cite{GS08}. Instead, by focusing on greedy strategies, we can exhibit examples where there exists non-monotone winning
    strategies but no monotone winning one.

\item[$\blacktriangleright$] We show that {\em greedy strategies} can be computed in polynomial time, and that based on them (even on
    non-monotone ones) it is possible to construct, again in polynomial time, tree projections, which are called \emph{greedy}. Therefore,
    the class $\C_{gtp}\subset \C_{tp}$ of all {greedy tree projections} turns out to be an island of tractability.

\item[$\blacktriangleright$] Finally, we show that $\C_{gtp}$ properly includes most previously known islands of tractability (based on
    structural properties), precisely because of the power of non-monotonic strategies. In particular, the novel notion of greedy tree
    projections allows us to define new islands of tractability from any known structural decomposition method, such as the {\em greedy
    (generalized) hypertree decomposition} or the {\em greedy component decomposition}, which are tractable and strictly more powerful than
    their original versions.
\end{itemize}

\subsection{Organization}

The paper is organized as follows.
Section~\ref{sec:framework} illustrates some basic notions and concepts. The characterization of the power of local consistency is given in
Section~\ref{CP}, while its application to structural decomposition methods is reported in Section~\ref{SDM}. Islands of tractability for tree
projections are singled out in Section~\ref{SECGAMES}, and an application of the results to structures having ``small'' arities is presented in
Section~\ref{smallArities}. A few further remarks and open issues are discussed in Section~\ref{sec:conclusion}.

\section{Preliminaries}\label{sec:framework}

\noindent \textbf{Hypergraphs and Acyclicity.} A \emph{hypergraph} $\HG$ is a pair $(V,H)$, where $V$ is a finite set of nodes and $H$ is a set
of hyperedges such that, for each $h\in H$, $h\subseteq V$.
If $|h|=2$ for each (hyper)edge $h\in H$, then $\HG$ is a {\em graph}.
For the sake of simplicity, we always denote $V$ and $H$ by $\nodes(\HG)$ and $\edges(\HG)$, respectively.

A hypergraph $\HG$ is {\em acyclic} (more precisely, $\alpha$-acyclic~\cite{fagi-83}) if, and only if, it has a join tree~\cite{bern-good-81}.
A {\em join tree} $\JT$ for a hypergraph $\HG$ is a tree whose vertices are the hyperedges of $\HG$ such that, whenever a node $X\in V$ occurs
in two hyperedges $h_1$ and $h_2$ of $\HG$, then $h_1$ and $h_2$ are connected in $\JT$, and $X$ occurs in each vertex on the unique path
linking $h_1$ and $h_2$. In words, the set of vertices in which $X$ occurs induces a (connected) subtree of $\JT$. We will refer to this
condition as the {\em connectedness condition} of join trees.

\medskip \noindent \textbf{Tree Decompositions.}
A \emph{tree decomposition}~\cite{RS84} of a graph $G$ is a pair $\tuple{T,\chi}$, where $T=(N,E)$ is a tree, and $\chi$ is a labeling function
assigning to each vertex $v\in N$ a set of vertices $\chi(v)\subseteq \nodes(G)$, such that the following conditions are satisfied: (1) for
each node $Y\in \nodes(G)$, there exists $p\in N$ such that $Y\in \chi(p)$; (2) for each edge  $\{X,Y\}\in \edges(G)$, there exists $p\in N$
such that $\{X,Y\}\subseteq \chi(p)$; and (3) for each node $Y\in \nodes(G)$, the set $\{p\in N \mid  Y\in \chi(p)\}$ induces a (connected)
subtree of $T$. The \emph{width} of $\tuple{T,\chi}$ is the number $\max_{p\in N}(|\chi(p)|-1)$.

The \emph{Gaifman graph} of a hypergraph $\HG$ is defined over the set $\nodes(\HG)$ of the nodes of $\HG$, and contains an edge $\{X,Y\}$ if,
and only if, $\{X,Y\}\subseteq h$ holds, for some hyperedge $h\in\edges(\HG)$. The {\em treewidth} of $\HG$ is the minimum width over all the
tree decompositions of its Gaifman graph. Deciding whether a given hypergraph has treewidth bounded by a fixed natural number $k$ is known to
be feasible in linear time~\cite{bodl-96}.

\medskip \noindent \textbf{(Generalized) Hypertree Decompositions.}
A {\em hypertree for a hypergraph $\HG$} is a triple $\tuple{T,\chi,\lambda}$, where $T=(N,E)$ is a rooted tree, and $\chi$ and $\lambda$ are
labeling functions which associate each vertex $p\in N$ with two sets $\chi(p)\subseteq \nodes(\HG)$ and $\lambda(p)\subseteq \edges(\HG)$. If
$T'=(N',E')$ is a subtree of $T$, we define $\chi(T')= \bigcup_{v\in N'} \chi(v)$.
In the following, for any rooted tree $T$, we denote the set of vertices $N$ of $T$ by $\vertices(T)$, and the root of $T$ by $\root(T)$.
Moreover, for any $p\in N$, $T_p$ denotes the subtree of $T$ rooted at $p$.

A {\em generalized hypertree decomposition}~\cite{gott-etal-03} of a hypergraph $\HG$ is a hypertree $\HD=\tuple{T,\chi,\lambda}$ for $\HG$
such that: (1) for each hyperedge $h\in \edges(\HG)$, there exists $p\in \vertices(T)$ such that $h\subseteq \chi(p)$; (2) for each node $Y\in
\nodes(\HG)$, the set $\{ p\in \vertices(T) \mid  Y \in \chi(p) \}$ induces a (connected) subtree of $T$; and (3) for each $p\in \vertices(T)$,
$\chi(p)\subseteq \nodes(\lambda(p))$.
The {\em width} of a generalized hypertree decomposition $\tuple{T,\chi,\lambda}$ is $max_{p\in \vertices(T)} |\lambda(p)|$. The {\em
generalized hypertree width} $ghw(\HG)$ of $\HG$ is the minimum width over all its
generalized hypertree decompositions. 

A \emph{hypertree decomposition}~\cite{gott-etal-99} of $\HG$ is a generalized hypertree decomposition $\HD=\tuple{T,\chi,\lambda}$ where: (4)
for each $p\in \vertices(T)$, $\nodes(\lambda(p)) \cap \chi(T_p) \;\subseteq\; \chi(p)$. Note that the inclusion in the above condition is
actually an equality, because Condition~(3) implies the reverse inclusion. The {\em hypertree width} $hw(\HG)$ of $\HG$ is the minimum width
over all its hypertree decompositions.
Note that, for any hypergraph $\HG$, it is the case that $ghw(\HG)\leq hw(\HG)\leq 3\times ghw(\HG)+1$~\cite{AGG07}. Moreover, for any fixed
natural number $k>0$, deciding whether $hw(\HG)\leq k$ is feasible in polynomial time (and, actually, it is
highly-parallelizable)~\cite{gott-etal-99}, while deciding whether $ghw(\HG)\leq k$ is $\NP$-complete~\cite{GMS07}.

\medskip \noindent\textbf{Tree Projections.}
For two hypergraphs $\HG_1$ and $\HG_2$, we write $\HG_1\leq \HG_2$ if, and only if, each hyperedge of $\HG_1$ is contained in at least one
hyperedge of $\HG_2$. Let $\HG_1\leq \HG_2$; then, a \emph{tree projection} of $\HG_1$ with respect to $\HG_2$ is an acyclic hypergraph $\HG_a$
such that $\HG_1\leq \HG_a \leq \HG_2$. Whenever such a hypergraph $\HG_a$ exists, we say that the pair of hypergraphs $(\HG_1,\HG_2)$ has a
tree projection.

Note that the notion of tree projection is more general than the above mentioned (hyper)graph based notions. For instance, consider the
generalized hypertree decomposition approach. Given a hypergraph $\HG$ and a natural number $k>0$, let $\HG^k$ denote the hypergraph over the
same set of nodes as $\HG$, and whose set of hyperedges is given by all possible unions of $k$ edges in $\HG$, i.e., $\edges(\HG^k)= \{ h_1
\cup h_2 \cup \cdots \cup h_k \mid \{h_1,h_2,\ldots,h_k\}\subseteq \edges(\HG)\}$. Then, it is well known and easy to see that $\HG$ has
generalized hypertree width at most $k$ if, and only if, there is a tree projection for $(\HG,\HG^k)$.

Similarly, for tree decompositions, let $\HG^{tk}$ be the hypergraph over the same set of nodes as $\HG$, and whose set of hyperedges is given
by all possible clusters $B\subseteq\nodes(\HG)$ of nodes such that $|B| \leq k+1$. Then, $\HG$ has treewidth at most $k$ if, and only if,
there is a tree projection for $(\HG,\HG^{tk})$.

\medskip \noindent\textbf{Relational Structures and Homomorphisms.}
Let $\U$ and $\mathcal{X}$ be disjoint infinite sets that we call the {\em universe of constants} and the \emph{universe of variables},
respectively. A (relational) vocabulary $\tau$ is a finite set of relation symbols of specified (finite) arities. A {\em relational structure}
$\A$ over $\tau$ (short: $\tau$-structure) consists of a universe $A\subseteq \U\cup \mathcal{X}$ and, for each relation symbol $r$ in $\tau$,
of a relation $r^\A\subseteq A^\rho$, where $\rho$ is the arity of $r$.

Let $\A$ and $\B$ be two $\tau$-structures with universes $A$ and $B$, respectively.
A {\em homomorphism} from $\A$ to $\B$ is a mapping $h: A \mapsto B$ such that $h(c)=c$ for each constant $c$ in $A\cap \U$, and such that, for
each relation symbol $r$ in $\tau$ and for each tuple $\tuple{a_1,\ldots,a_\rho}\in r^\A$, it holds that $\tuple{h(a_1),\ldots,h(a_\rho)}\in
r^\B$. For any mapping $h$ (not necessarily a homomorphism), $h(\tuple{a_1,\ldots,a_\rho})$ is used, as usual, as a shorthand for
$\tuple{h(a_1),\ldots,h(a_\rho)}$.

A $\tau$-structure $\A$ is a substructure of a $\tau$-structure $\B$ if $A \subseteq B$ and $r^\A \subseteq r^\B$, for each relation symbol $r$
in $\tau$.

\medskip \noindent\textbf{Relational Databases.}
Let $\tau$ be a given vocabulary. A {\em database instance} (or, simply, a database) $\DB$ over $D\subseteq \U$ is a $\tau$-structure $\DB$
whose universe is the set $D$ of constants. For each relation symbol $r$ in $\tau$, $r^\onDB$ is a \emph{relation instance} (or, simply,
relation) of $\DB$. Sometimes, we adopt the logical representation of a database~\cite{ullm-89,abit-etal-95}, where a tuple
$\tuple{a_1,...,a_\rho}$ of values from $D$ belonging to the $\rho$-ary relation (over symbol) $r$ is identified with the {\em ground atom}
$r(a_1,...,a_\rho)$. Accordingly, a database $\DB$ can be viewed as a set of ground atoms. Unless otherwise stated, we implicitly assume that
databases are finite.

\medskip\noindent\textbf{Conjunctive Queries.} A {\em conjunctive query} $Q$ consists of a finite conjunction of atoms of the form $r_1({\bf
u_1})\wedge\cdots\wedge r_m({\bf u_m})$, where $r_1,...,r_m$ (with $m > 0$) are relation symbols (not necessarily distinct), and ${\bf
u_1},...,{\bf u_m}$ are lists of terms (i.e., variables or constants). The set of all atoms occurring in $Q$ is denoted by $\atoms(Q)$. For a
set of atoms $A$, $\vars(A)$ is the set of variables occurring in the atoms in $A$. For short, $\vars(Q)$ denotes $\vars(\atoms(Q))$. We say
that $Q$ is a {\em simple query} if every atom is over a distinct relation symbol.
Given a database $\DB$ over $D$, $Q^\onDB$ denotes the set of all \emph{answers} of $Q$ on $\DB$, that is, all substitutions
$\theta:\vars(Q) \mapsto D$ such that for each $1\leq i\leq m$, $\theta'(r_{\alpha_i}({\bf u_i}))\in \DB$, where $\theta'(t)=\theta(t)$ if
$t\in \vars(Q)$ and $\theta'(t)=t$ otherwise (i.e., if the term $t$ is a constant).

Note that any conjunctive query $Q$ can be viewed as a relational structure $\mathcal{Q}$, whose vocabulary $\tau_Q$ and universe $U_Q$ are the
set of relation symbols and the set of terms occurring in its atoms, respectively. For each symbol $r_i\in\tau_Q$,  the relation
$r_i^{\mathcal{Q}}$ contains a tuple of terms ${\bf u}$, for any atom of the form $r_i({\bf u})\in\atoms(Q)$ defined over $r_i$. In the special
case of simple queries, every relation $r_i^{\mathcal{Q}}$ of $\mathcal{Q}$ contains just one tuple of terms.
According to this view, elements in $Q^\onDB$ are in a one-to-one correspondence with homomorphisms from $\mathcal{Q}$ to $\DB_Q$, where the
latter is the (maximal) substructure of $\DB$ over the (sub)vocabulary $\tau_Q$. Hereafter, we adopt this view but, for the sake of
presentation, we identify queries and databases with their relational structures, i.e., we use directly $Q$ and $\DB$ in place of $\mathcal{Q}$
and $\DB_Q$.

For any given set $S$ of variables, we denote by $Q^\onDB[S]$ the restriction of the (substitutions/)homomorphisms in $Q^\onDB$ over the
variables in $S$. For the extreme case where $S=\emptyset$, define $h_{\it true}$ to be the restriction of any homomorphism over the empty set.
Then, $Q^\onDB[\emptyset]=\{h_{\it true}\}$ if $Q^\onDB\neq \emptyset$, and $Q^\onDB[\emptyset]=\emptyset$ if $Q^\onDB= \emptyset$.
If $a$ is an atom, then $Q^\onDB[a]$ denotes $Q^\onDB[\vars(a)]$.

Note that any atom $a$ can be viewed as a one-atom query, so that $a^\onDB$ is the set of all the homomorphisms from $a$ to $\DB$, restricted
to $\vars(a)$ (i.e., projecting out possible constants occurring in $a$). For a set $A$ of atoms, we denote by $A^\onDB$ the set $\{ a^\onDB
\mid a\in A \}$.

A \emph{core} of $Q$ is a query $Q'$ such that: \emph{(1)} $\atoms(Q')\subseteq \atoms(Q)$; \emph{(2)} there is a homomorphism from $Q$ to
$Q'$; and \emph{(3)} there is no query $Q''$ satisfying \emph{(1)} and \emph{(2)} such that $\atoms(Q'')\subset \atoms(Q')$. Equivalently, in
terms of relational structures, $Q'$ is a minimal substructure of $Q$ such that (2) holds. The set of all the cores of $Q$ is denoted by
$\cores(Q)$. Elements in $\cores(Q)$ are \emph{isomorphic}.

\medskip\noindent \textbf{Hypergraphs and atoms.} There is a very natural way to associate a hypergraph $\HG_\V=(N,H)$ with any set $\V$ of
atoms: the set $N$ of nodes consists of all variables occurring in $\V$; for each atom in $\V$, the set $H$ of hyperedges contains a hyperedge
including all its variables; and no other hyperedge is in $H$.

For a query $Q$, the hypergraph associated with $\atoms(Q)$ is briefly denoted by $\HG_Q$. If $\HG_Q$ is a connected hypergraph, we say that
$Q$ is a {\em connected} query.

\section{The Power of Local Consistency}\label{CP}

Throughout the paper, we assume that $Q$ is a conjunctive query and that $\V$ is a non-empty set of atoms, which we call \emph{views}, such
that $\vars(\V) = \vars(Q)$. Moreover, $\DB$ is a database over the vocabulary $\DS$ containing the relation symbols of query atoms and views.
We require w.l.o.g. that every available view is over a specific relation symbol, which does not occur in the given query, and that the list of
terms of every view does not contain any constant or repeated variables (in fact, observe that from any given set of available views, one may
immediately get a new set of views where these assumptions hold).
Note that, within this setting, each view $w\in \V$ is univocally associated with a relation instance in $\DB$, whose tuples are in a
one-to-one correspondence with the homomorphisms in $w^\onDB$. Therefore, this relation instance will be simply denoted by $w^\onDB$, and we
freely use the term tuples interchangeably with homomorphisms, when we refer to its elements.

Our first goal is to characterize the relationships between tree projections and certain consistency properties that hold for $Q$ and $\W$ over
some (or all) given databases. To this end, we need to state some preliminary notions and definitions, which will be illustrated by referring
to the following running example.

\begin{figure}[t]
  \centering
  \includegraphics[width=0.8\textwidth]{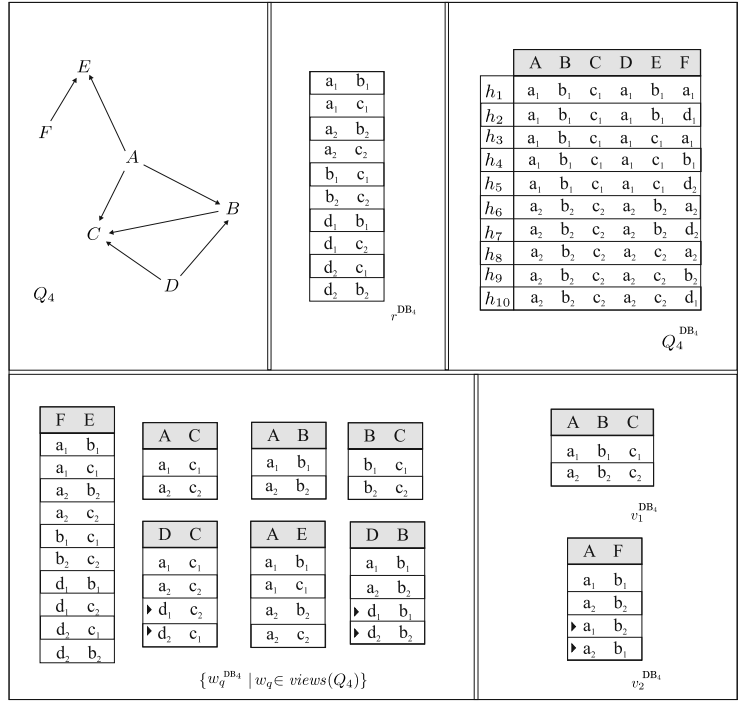}
  \caption{The (hypergraph of the) query $Q_4$, the tuples in the database $\DB_4$, and the answers in $Q_4^{\onDB_4}$, in Example~\ref{ex:MultipleCores}.
 }\label{fig:triangle}
\end{figure}

\begin{example}\label{ex:MultipleCores}\em
Consider the following query $Q_4$, where all atoms are over the same binary relation symbol $r$:
\[
\begin{array}{ll}
Q_4: &
r(A,B)\wedge r(B,C)\wedge r(A,C)\wedge r(D,C)\wedge r(D,B)\wedge r(A,E)\wedge r(F,E).\\
\end{array}
\]
A graphical representation of this query is reported in Figure~\ref{fig:triangle}, where edge orientation just reflects the position of the
variables in query atoms. Moreover, consider the database $\DB_4$ shown in Figure~\ref{fig:triangle}, by focusing on the relation instance
$r^{\onDB_4}$.
Then, it can be checked that the answers of $Q_4$ on $\DB_4$ are the homomorphisms $h_1,...,h_{10}$, which are also reported, in tabular form,
in Figure~\ref{fig:triangle}.

In this example, in order to  answers $Q_4$, we assume the availability of the set of views $\V_4=\{v_1(A,B,C),$ $v_2(A,F),$ $v_3(A,B),$
$v_4(A,C),$ $v_5(A,E),$ $v_6(B,C),$ $v_7(D,B),$ $v_8(D,C),$ $v_9(F,E)\}$, and that the database $\DB_4$ includes a relation instance
$w^{\onDB_4}$, for each view $w\in\V_4$. Note that, in the figure, such relation instances are identified by the list of variables on which the
views are defined. \hfill $\lhd$
\end{example}

\subsection{Consistency Properties and Views}

\medskip\noindent \textbf{View Consistency.} For a view $w\in\W$, we say that $w^\onDB$ is \emph{view consistent} w.r.t.~$Q$ if
$w^\onDB\supseteq Q^\onDB[w]$.
For the set of views $\W$, we say that $\W^\onDB$ is \emph{view consistent} w.r.t.~$Q$, if the property holds for each $w\in \W$. That is,
views are not more restrictive than the query.

Note that view consistency holds in general for all views initialized from subsets of query atoms, such as those employed in all known
decomposition methods, such as (hyper)tree decompositions.
However, we are also interested in a wider framework where views are completely arbitrary and may be available from previous computations,
possibly unrelated with the present query $Q$. Accordingly, we do not require that view consistency holds for such views, and we shall look for
general results, which will be then smoothly inherited by more specific settings.

\begin{example}\label{ex:vc}\em
Consider again the setting of Example~\ref{ex:MultipleCores}, and in particular the views $v_1(A,B,C)$ and $v_2(A,B,C)$. Note that
$v_1(A,B,C)^{\onDB_4}$ is a set of two homomorphisms, which are precisely those in the set $Q_4^{\onDB_4}[\{A,B,C\}]$ of the answers of $Q_4$
on $\DB_4$ projected over the variables in $\{A,B,C\}$. Therefore, $v_1(A,B,C)$ is view consistent w.r.t.~$\DB_4$. Similarly, it can be checked
that the views $v_3(A,B),$ $v_4(A,C),$ $v_5(A,E),$ $v_6(B,C),$ $v_7(D,B),$ $v_8(D,C),$ and $v_9(F,E)$ are all view consistent w.r.t.~$\DB_4$.

Instead, $v_2(A,F)$ is not view consistent w.r.t.~$\DB_4$, since $v_2(A,F)^{\onDB_4}\supseteq Q_4^{\onDB_4}[\{A,F\}]$ does not hold. For
instance, $v_2(A,F)^{\onDB_4}$ does not include the homomorphism mapping both $A$ and $F$ to the constant $a_1$. Hence, $\V_4^{\onDB_4}$ is not
view consistent w.r.t.~$Q_4$. \hfill $\lhd$
\end{example}

\medskip\noindent \textbf{Local Consistency.}
We say that $\V^\onDB$ is \emph{locally (also, pairwise) consistent}, denoted by $\lc(\W,\DB)$, if $w^\onDB\neq\emptyset$ and $w^\onDB=(w\wedge
w')^\onDB[w]$, for each $\{w,w'\}\subseteq \V$.

From any set of views and any instance $\DB$, we may compute a subset of $\DB$ that is locally consistent. Let the \emph{reduct} of $\DB$
according to $\W$, denoted by $\red(\W,\DB)$, be the (set-inclusion) maximal subset  of $\DB$ such that $\W^\onDBred$ is locally consistent; or
$\red(\W,\DB)=\emptyset$, whenever such a maximal subset does not exist. It is well known that the reduct can be computed as the unique
fixpoint of a procedure consisting of semijoin operations over~$\DB$, which runs in polynomial time.
It is easy to see that such a reducing procedure preserves the given query, unless the used views are more restrictive than the query, of
course. In fact, computing a reduct is often used as a useful heuristic procedure in different areas of computer science, where the
homomorphism problem underlying conjunctive query evaluation comes out---e.g., in {\em constraint satisfaction problems (CSP)}, where such a
procedure is known as {\em generalized arc consistency}~\cite{dech-03}. Indeed, if the reduct is empty, we may safely conclude that there are
no solutions; otherwise, we got anyway a smaller instance of the problem to deal with.

\begin{example}\label{ex:local}\em
In the running example depicted in Figure~\ref{fig:triangle}, the set $\V_4$ of views and the database $\DB_4$ are such that $\V_4^{\onDB_4}$
is locally consistent. Consider for instance the views $v_1(A,B,C)$ and $v_3(A,B)$, and observe that both $(v_1(A,B,C)\wedge
v_3(A,B))^{\onDB_4}[\{A,B,C\}]= v_1(A,B,C)^{\onDB_4}$ and $(v_3(A,B)\wedge v_1(A,B,C))^{\onDB_4}[\{A,B\}]= v_3(A,B)^{\onDB_4}$. Indeed, every
tuple in the relation associated with either view matches with some tuple in the other view on the variables they have in common, so that no
tuple is missed by performing such semijoin operations. This is easily seen because
$v_1(A,B,C)^{\onDB_4}[\{A,B\}]=v_3(A,B)^{\onDB_4}=\{\tuple{a_1,b_1},\tuple{a_2,b_2}\}$ (where these two tuples also identify the homomorphisms
mapping $(A,B)$ to $(a_1,b_1)$ and to $(a_2,b_2)$, respectively). \hfill $\lhd$
\end{example}

\medskip\noindent \textbf{Query Views.}
In the seminal paper about local and global consistency in acyclic queries~\cite{BFMY83}, local consistency is enforced directly on the
relations of query atoms, while we only consider (and possibly enforce) this property on views, in this paper. This is because that paper, as
well as other related papers such as~\cite{GS84}, uses a slightly different formal framework where every relation symbol may occur just once in
a query, i.e., where only simple queries are considered. In contrast with these classical papers, we do not assume anything about the query,
which may contain multiple occurrences of the same relation symbol. This means that the same relation instance may be shared by different query
atoms, and this feature plays a very relevant role, as it was first pointed out in~\cite{DKV02}. In this case, a tuple may be useful for some
atom and useless for another one defined over the same relation symbol. It follows that local consistency  cannot be enforced on the relations
of the query atoms, because such a filtering procedure would lead to undesirable side effects (possibly deleting all tuples in the database,
including the useful ones).

Therefore, we always keep the ``original'' database relations untouched and we rather use suitable views, each one with its own database
relation, to play the role of query atoms in the definition of consistency properties in general queries and in consistency enforcing
procedures. Formally, we say that $\V$ is a {\em view system} (for $Q$) if it contains, for each atom $q\in\atoms(Q)$, a view $w_q$ (over a
distinct relation symbol) with the same set of variables as $q$. These special views in $\V$ are called hereafter  {\em query views}, and are
denoted by $\views(Q)$. If $Q'$ is a subquery of $Q$, $\views(Q')$ denotes the set of query views associated with its atoms.
In the following, the set of available views $\V$ is assumed to be a view system for the given query $Q$, unless otherwise specified.

\begin{example}\em
Consider again the setting of Example~\ref{ex:MultipleCores}, and note that $\V_4$ is in fact a view system for $Q_4$. Indeed, the views in the
set $\{v_3(A,B),$ $v_4(A,C),$ $v_5(A,E),$ $v_6(B,C), v_7(D,B), v_8(D,C), v_9(F,E)\}$ are in a one-to-one correspondence with the query atoms of
$Q_4$. For instance, $v_3(A,C)$ is the query view $w_{r(A,C)}$, with $r(A,C)$ being a query atom of $Q_4$.
Hence, $\views(Q_4)=\{ v_3(A,B),$ $v_4(A,C),$ $v_5(A,E),$ $v_6(B,C),$ $v_7(D,B),$ $v_8(D,C), v_9(F,E)\}$, and $\V_4=\views(Q_4)\cup
\{v_1(A,B,C), v_2(A,F)\}$.\hfill $\lhd$
\end{example}

Observe that working with view systems instead that with arbitrary set of views is not a restrictive assumption, for our purposes. On the
practical side, if some atom misses its associated query view $w_q$ in the available views, one may just add a fresh view $w_q$ to the views,
with a corresponding relation in the database such that  $w_q^\onDB=q^\onDB$.
On the theoretical side, recall that we are dealing with consistency properties of $Q$ and $\V$, and with tree projections of $(\HG_Q,\HG_\V)$.
In fact, such a tree projection exists only if the set of variables of every atom $q$ in $Q$ is covered by some view $w\in\V$, i.e.,
$\vars(q)\subseteq \vars(w)$. Therefore, whenever $\V$ is a set of ``useful views,'' for each query atom $q$ there must exist some view in $\V$
that may play the role of the query view $w_q$ (after projecting it on $\vars(q)$). However, requiring that query views belong to $\V$
simplifies the presentation and allows us to define consistency properties in a clean way. In particular, the role of query views is crucial in
the following definition.

\medskip\noindent \textbf{Global Consistency.}
Informally, this is a highly desirable state of the database where query views contain all and only those tuples that can be returned by query
answers. In this case, an answer of the query can be computed in polynomial time: for each query view $w_q$, select one tuple $h$ in the
relation $w_q^\onDB$ that is univocally associated with $w_q$ in $\DB$, modify this relation so that  $w_q^\onDB=\{h\}$, and propagate this
choice by enforcing again local consistency (see Section~\ref{SDM} for more results and discussions about the problem of computing answers).

Observe that the classical definition, which states the above property for the relations of query atoms, is not useful whenever any relation
symbol $r$ is shared by some query atoms (because we miss the information relating any tuple in $r^\onDB$ with those atoms where the tuple
participates in some answer). By using query views instead of query atoms, no confusion may arise, and we get the desired extension of the
classical definition given (in the literature discussed above) for simple queries.

We say that a database $\DB$ is \emph{globally consistent} with respect to $Q$ and $\W$, denoted by $\gc(\W,\DB,Q)$, if $w_q^\onDB=Q^\onDB[q]$
(which is also equal to $Q^\onDB[w_q]$), for each $q\in \atoms(Q)$, where $w_q$ is the query view associated with $q$.

\begin{example}\em
Let us focus on the query views in $\views(Q_4)$. Consider for instance the view $v_3(A,B)\in\views(Q_4)$ (associated with the query atom
$r(A,B)$), and note that $v_3(A,B)^{\onDB_4}=Q_4^{\onDB_4}[\{A,B\}]$. That is, the answers of $Q_4$ on $\DB_4$ projected over the set $\{A,B\}$
are immediately available by looking at the relation $v_3(A,B)^{\onDB_4}$.

On the other hand, for the view $v_8(D,C)\in \views(Q_4)$, the set $v_8(D,B)^{\onDB_4}$ contains two homomorphisms that do not belong to the
set $Q_4^{\onDB_4}[\{D,C\}]$ (identified by the two tuples marked with the symbol ``{\small $\blacktriangleright$}'' in
Figure~\ref{fig:triangle}). Therefore, $\DB_4$ is not globally consistent w.r.t.~$Q_4$ and $\V_4$. \hfill $\lhd$
\end{example}

\medskip\noindent \textbf{Legal Database.}
While no special requirement is assumed for the database relations of the available views in $\V$, the relations associated with the query
views cannot be arbitrary, otherwise we would lose any connection with the query $Q$ to be solved using the view system $\V$. In fact, these
relations should reflect the intended initialization with the tuples contained in the relations associated with their corresponding query atoms
(possibly filtered by eliminating tuples that are irrelevant w.r.t.~query answers).

We say that $\DB$ is a \emph{legal} database instance (w.r.t.~$Q$ and $\V$)
  if (i)  $w_q^\onDB \subseteq q^\onDB$ holds, for
each query view $w_q\in\views(Q)$;
 and (ii) $\views(Q)^\onDB$ is \emph{view consistent}.
All other view instances may be arbitrary. Then, the following is immediate.

\begin{fact}
For every legal database \emph{$\DB$, $$ Q^\onDB = (\bigwedge_{q\in \atoms(Q)} q)^\onDB = (\bigwedge_{w_q\in \views(Q)} w_q)^\onDB.$$}
\end{fact}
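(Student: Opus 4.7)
The plan is to prove the three-way equality by splitting it into its two component equalities and verifying each directly from the definitions, using the two defining conditions of a legal database in symmetric roles.

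For the first equality $Q^\onDB = (\bigwedge_{q\in \atoms(Q)} q)^\onDB$, I would observe that this is essentially by definition of a conjunctive query: an answer to $Q$ is precisely a substitution $\theta:\vars(Q)\to D$ that maps every atom of $Q$ into $\DB$, which is the same as a homomorphism satisfying the conjunction of the atoms. So this reduces to rephrasing the definition of $Q^\onDB$ given in the Conjunctive Queries paragraph.

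For the second equality $(\bigwedge_{q\in \atoms(Q)} q)^\onDB = (\bigwedge_{w_q\in \views(Q)} w_q)^\onDB$, I would argue by double inclusion, exploiting the fact that $\vars(w_q)=\vars(q)$ for every query view $w_q\in\views(Q)$. For the inclusion $\subseteq$, pick any $\theta\in Q^\onDB$; then for each $q\in\atoms(Q)$, the restriction $\theta|_{\vars(w_q)}=\theta|_{\vars(q)}$ lies in $Q^\onDB[q]=Q^\onDB[w_q]$, and by the view consistency condition (ii) of a legal database, $Q^\onDB[w_q]\subseteq w_q^\onDB$; hence $\theta$ satisfies every query view. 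For the reverse inclusion $\supseteq$, pick any $\theta$ satisfying all the query views; condition (i) of a legal database gives $w_q^\onDB\subseteq q^\onDB$ for every $q$, hence $\theta|_{\vars(q)}\in q^\onDB$ for each query atom $q$, and thus $\theta$ is an answer of the query.

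I do not expect any genuine obstacle here: the Fact is essentially a bookkeeping statement asserting that on legal databases the query views faithfully play the role of the query atoms, with condition (i) handling one direction of the inclusion and view consistency (ii) handling the other. The only mild subtlety to flag is that query atoms and query views live over disjoint vocabularies, so one must be careful that the joins in the two expressions are really comparing homomorphisms on the same variable set $\vars(Q)$, which is guaranteed by the stipulation $\vars(w_q)=\vars(q)$ when the query views are introduced.
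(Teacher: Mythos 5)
Your proof is correct and is precisely the routine verification the paper intends: the Fact is stated without proof as ``immediate,'' and your unpacking --- the first equality by definition of conjunctive query semantics, the second by double inclusion using condition (i) of legality for one direction and view consistency (ii) for the other --- is exactly the argument being elided. The remark about $\vars(w_q)=\vars(q)$ ensuring that both conjunctions define sets of substitutions over the same variable set is the right detail to flag.
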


\begin{example}\em
The database $\DB_4$ is legal w.r.t.~$Q_4$ and $\V_4$. Indeed, condition (i) is seen to hold by comparing the relations associated with the
query views with the relation instance $r^{\onDB_4}$. Moreover, in Example~\ref{ex:vc}, we have observed that $\views(Q_4)^{\onDB_4}$ is view
consistent, i.e., condition (ii) holds as well. Then, because of the above fact, the answers of $Q_4$ on $\DB_4$ are also given by the
expression $(\bigwedge_{w_q\in \views(Q_4)} w_q)^{\onDB_4}$.~\hfill~$\lhd$
\end{example}

\begin{remark}
Only legal databases over $Q$ and $\V$ are meaningful for the purpose of this paper. Therefore, unless otherwise stated, we always implicitly
assume hereafter this requirement for any database instance. In particular, whenever we say ``for every database'', we actually mean ``for
every {\em legal} database''. Of course, whenever we define some database instance in proofs of our results, we deal with this requirement, and
we explicitly prove that such a database is actually legal.
\end{remark}

Now that the setting is clarified, our next task is to provide sufficient and necessary conditions to evaluate queries via local consistency.
For the sake of presentation and without loss of generality, we assume that the given query $Q$ is connected and that $\vars(Q)=\vars(\V)$.
Note that, under these assumptions, whenever $\W^\onDB$ is locally consistent, requiring that every relation associated with some view in $\W$
is non-empty is equivalent to requiring that there is at least one $w$ in $\W$ with $w^\onDB\neq\emptyset$. Indeed, the query views in the view
system $\V$ makes $\HG_{\V}$ connected, and thus any empty relation in the database would entail that all relations must be empty, at local
consistency.

\subsection{From Tree Projections to Consistency...}

The fact that local consistency holds for $\V$ and $\DB$ is of course unrelated with the fact that global consistency holds for $\V$ and $\DB$
with respect to $Q$, in general. In this section, we show how the existence of tree projections of some parts of the query is a sufficient
condition to get the implication $\lc(\V,\DB) \Rightarrow \gc(\V,\DB,Q)$. Our analysis will consider arbitrary conjunctive queries, with any
desired set $O$ of output variables, and tree projections w.r.t.~arbitrary view systems.

We start by observing that, when arbitrary view systems are considered, it suddenly emerges that it does not make sense to talk about ``the''
core of a query, because different isomorphic cores may differently behave with respect to the available views. In fact, this phenomenon does
not occur, e.g., for generalized hypertree decompositions (resp., tree decompositions) where all combinations of $k$ atoms (resp., $k+1$
variables) are available as views (see Section~\ref{SDM}).

\begin{figure}[t]
  \centering
  \includegraphics[width=0.9\textwidth]{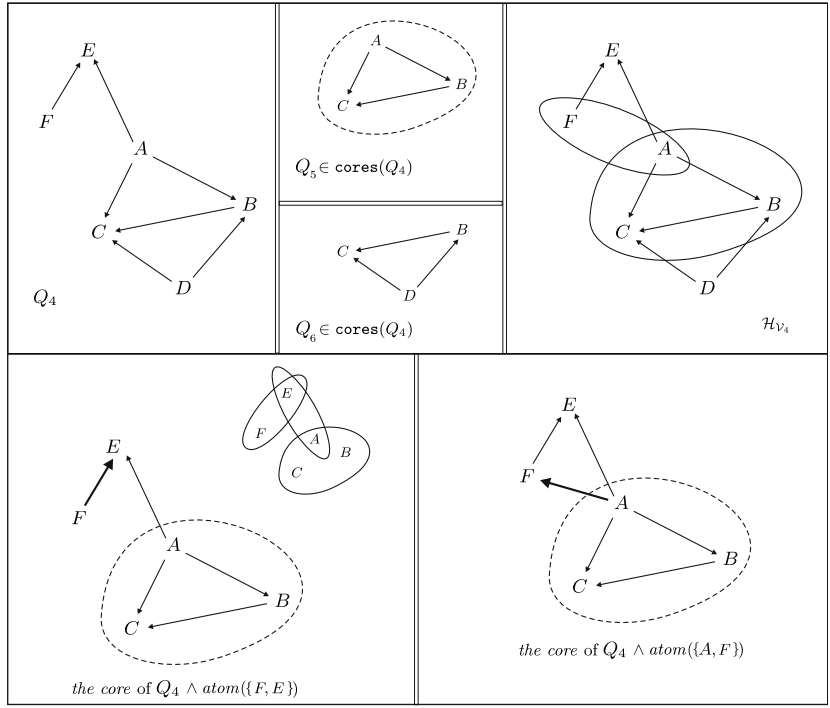}
  \caption{
  The (hypergraph of the) query $Q_4$, the cores $Q_5$ and $Q_6$, the hypergraph $\HG_{\V_4}$, and the cores of the queries $Q_4\wedge \mathit{atom}(\{F,E\})$ (with its tree projection) and $Q_4\wedge \mathit{atom}(\{A,F\})$,
  in Example~\ref{ex:MultipleCores-bis}.
 }\label{fig:triangle-bis}
\end{figure}

\begin{example}\label{ex:MultipleCores-bis}\em
Consider again the query
\[
\begin{array}{ll}
Q_4: & r(A,B)\wedge r(B,C)\wedge r(A,C)\wedge r(D,C)\wedge r(D,B)\wedge r(A,E)\wedge r(F,E),\\
\end{array}
\]
which has been discussed in Example~\ref{ex:MultipleCores}, and which is graphically reported again in Figure~\ref{fig:triangle-bis}, for the
sake of presentation. The figure also reports the hypergraph $\HG_{\V_4}$ associated with the views in $\V_4$ (where, e.g., the hyperedges
$\{A,B,C\}$ and $\{A,F\}$ are those corresponding to the views $v_1(A,B,C)$ and $v_2(A,F)$, and where (hyper)edges associated with the query
views are still depicted with their original orientation in $Q_4$, as to make the correspondence clearer). Moreover, the figure reports the two
queries
\[
\begin{array}{ll}
Q_5: & r(A,B)\wedge r(B,C)\wedge r(A,C)\\
Q_6: & r(D,B)\wedge r(B,C)\wedge r(D,C).
\end{array}
\]
Note that $Q_5$ and $Q_6$ are two (isomorphic) cores of $Q_4$, but they have different structural properties. Indeed, $(\HG_{Q_5},\HG_{\V_4})$
admits a tree projection (note in the figure that the view over $\{A,B,C\}$ ``absorbs'' the cycle), while $(\HG_{Q_6},\HG_{\V_4})$ does not.
\hfill $\lhd$
\end{example}

\medskip\noindent \textbf{Computation Problem.} Armed with the observation exemplified above, the relationship between consistency and
structural properties will be next stated by considering the existence of a tree projection for \emph{some} core of the query $Q$.

In addition, to properly deal with arbitrary sets of output variables (which may be not included in any core of $Q$), we need to define an
``output-aware'' notion of covering by tree projections, where cores are forced to contain the desired output variables.

\begin{definition}\label{def:tpCovered}
For any set of variables $O$ occurring in some atom $w\in\W$, define $\atom(O)$ to be a fresh atom (with a fresh relation symbol) over these
variables, i.e., such that $O=\vars(\atom(O))$.
Then, we say that $O$ is $\tpCovered$ in $Q$ (w.r.t.~$\W$) if there exists some core $Q'$ of $Q\wedge \atom(O)$ such that $(\HGUnoSet,\HGDue)$
has a tree projection. \hfill $\Box$
\end{definition}

A first easy observation is that the $\tpCovered$ property holds for every set of variables occurring in every query atom, whenever
$(\HG_Q,\HGDue)$ has a tree projection.

\begin{fact}\label{fact:tpcoveredAndTP}
Assume that $(\HG_Q,\HGDue)$ has a tree projection. Then, for every $q\in\atoms(Q)$ and every $O\subseteq\vars(q)$, $O$ is $\tpCovered$ in $Q$
(w.r.t.~$\V$).
\end{fact}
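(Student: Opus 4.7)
The plan is to show that \emph{every} core $Q'$ of $Q\wedge \atom(O)$ has the property that $(\HG_{Q'},\HG_\V)$ admits a tree projection; the existence required by Definition~\ref{def:tpCovered} then follows trivially. Strengthening the statement in this way will make the argument essentially structural, avoiding any commitment to the choice of endomorphism used to produce the core.

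Let $\HG_a$ be a tree projection of $(\HG_Q,\HG_\V)$. The first step is to observe that $\HG_a$ is also a tree projection of the pair $(\HG_{Q\wedge \atom(O)},\HG_\V)$: every hyperedge of $\HG_{Q\wedge \atom(O)}$ either belongs to $\HG_Q$ (and is therefore contained in some hyperedge of $\HG_a$ by hypothesis) or equals $O$, and in the latter case it is still contained in a hyperedge of $\HG_a$ because $O\subseteq \vars(q)\in \edges(\HG_Q)$. Next, I would fix an arbitrary core $Q'$ of $Q\wedge \atom(O)$ and define the \emph{restriction} $\HG_a^{Q'}$ of $\HG_a$ to the node set $\vars(Q')$, whose hyperedges are $\{e\cap \vars(Q')\mid e\in \edges(\HG_a)\}$ (dropping the empty set, if it occurs). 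Then I would verify that $\HG_a^{Q'}$ is a tree projection of $(\HG_{Q'},\HG_\V)$: the lower inclusion $\HG_{Q'}\leq \HG_a^{Q'}$ holds because each hyperedge $\vars(a)$ of $\HG_{Q'}$ lies inside $\vars(Q')$ and inside some $e\in \edges(\HG_a)$, hence inside $e\cap \vars(Q')$; the upper inclusion $\HG_a^{Q'}\leq \HG_\V$ is immediate since $e\cap \vars(Q')\subseteq e$ and $e$ is contained in some hyperedge of $\HG_\V$.

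The main (and only) nontrivial step is showing that $\HG_a^{Q'}$ is acyclic, i.e., that $\alpha$-acyclicity is preserved under restricting hyperedges to a subset of the nodes. I would handle this via the characterization of acyclicity as generalized hypertree width one: starting from a width-one decomposition $\tuple{T,\chi,\lambda}$ of $\HG_a$, the triple $\tuple{T,\chi',\lambda'}$ defined by $\chi'(v):=\chi(v)\cap \vars(Q')$ and $\lambda'(v):=\{\nodes(\lambda(v))\cap \vars(Q')\}$ is a width-one decomposition of $\HG_a^{Q'}$. Indeed, the covering condition transfers since $e\subseteq \chi(v)$ implies $e\cap \vars(Q')\subseteq \chi'(v)$; the connectedness condition for every $X\in \vars(Q')$ is inherited verbatim because $X\in \chi'(v)\Leftrightarrow X\in \chi(v)$; and each bag satisfies $\chi'(v)\subseteq \nodes(\lambda'(v))$. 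Vertices $v$ of $T$ for which $\nodes(\lambda(v))\cap \vars(Q')=\emptyset$ (and therefore $\chi'(v)=\emptyset$) can be contracted away without affecting the remaining properties. This preservation of acyclicity is the only technical subtlety; once it is dealt with, the result follows immediately.
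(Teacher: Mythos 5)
Your proof is correct, but it does considerably more work than is needed, and the paper's own argument is a two-line observation. The key point you did not exploit is that the relation $\leq$ between hypergraphs is transitive and places no requirement on node sets: since every atom of a core $Q'$ of $Q\wedge\atom(O)$ is either an atom of $Q$ or the atom $\atom(O)$ with $O\subseteq\vars(q)\in\edges(\HG_Q)$, one has $\HG_{Q'}\leq\HG_Q$ directly, and therefore $\HG_{Q'}\leq\HG_Q\leq\HG_a\leq\HG_\V$ for the given tree projection $\HG_a$ of $(\HG_Q,\HG_\V)$. The \emph{same} $\HG_a$ is thus already a tree projection of $(\HG_{Q'},\HG_\V)$, with no need to restrict its hyperedges to $\vars(Q')$ and no need for the preservation-of-acyclicity argument, which is the only nontrivial step in your write-up. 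That step is sound (and is essentially the ``drop possible additional variables'' remark the paper makes later, in the proof of Lemma~\ref{lem:TPimpliesLCtoGC-specific}, where a tree projection over exactly $\nodes(\HG_{Q'})$ is genuinely convenient), so what your version buys is a normalized tree projection whose node set is $\vars(Q')$; but for establishing the Fact itself this is overhead rather than substance. Your observation that the argument works for \emph{every} core, not just some core, is likewise already implicit in the paper's ``let $Q'$ be any core.''
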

\begin{proof}
Let $q$ be any atom occurring in $Q$ and take any $O\subseteq \vars(q)$. Let $Q'$ be any core of $Q\wedge\atom(O)$. Since $Q'$ is a subquery of
$Q\wedge\atom(O)$ and  $O\subseteq \vars(q)$, $\HGUnoSet\leq \HG_Q$. Thus, $\HGUnoSet\leq \HG_Q\leq\HG_a\leq \HGDue$, where $\HG_a$ is any tree
projection of $(\HG_Q,\HGDue)$, which exists by hypothesis.~\hfill~$\Box$
\end{proof}

We next show that the above fact may be extended to those atoms occuring in some core of $Q$ having a tree projection.

\begin{lemma}\label{lem:coreMembershipEntailsTPcovered}
Let $q\in\atoms(Q')$ be an atom occurring in some core $Q'$ of $Q$ for which $(\HGUnoSet,\HGDue)$ has a tree projection. Then, $\forall
O\subseteq\vars(q)$, $O$ is $\tpCovered$ in $Q$ (w.r.t.~$\V$).
\end{lemma}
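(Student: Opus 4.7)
The plan is to reduce the statement to the tree projection already available for $(\HG_{Q'},\HGDue)$, by exhibiting a suitable core $Q''$ of $Q\wedge \atom(O)$ whose hyperedges are all absorbed by that same projection.

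First I would observe that $Q$ and $Q'$ are homomorphically equivalent: the homomorphism $Q \to Q'$ exists because $Q'$ is a core of $Q$, and the reverse direction is witnessed by the inclusion $Q' \subseteq Q$. Since $O \subseteq \vars(q) \subseteq \vars(Q')$, the atom $\atom(O)$ has its variables already in the universe of $Q'$, so we can legitimately form the substructure $Q' \wedge \atom(O)$ of $Q \wedge \atom(O)$. Extending the two homomorphisms above by the identity on the fresh atom $\atom(O)$, we conclude that $Q \wedge \atom(O)$ and $Q' \wedge \atom(O)$ are homomorphically equivalent as well.

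Next I would pick any core $Q''$ of $Q' \wedge \atom(O)$, which exists and is a substructure of $Q' \wedge \atom(O)$ (and hence of $Q \wedge \atom(O)$). Composing $Q \wedge \atom(O) \to Q' \wedge \atom(O) \to Q''$ we obtain a homomorphism from $Q \wedge \atom(O)$ to $Q''$. To see that $Q''$ is in fact a core of $Q \wedge \atom(O)$ (and not merely of $Q' \wedge \atom(O)$), I would argue by minimality: if some proper substructure $Q''' \subsetneq Q''$ still admitted a homomorphism from $Q \wedge \atom(O)$, then chaining with $Q' \wedge \atom(O) \to Q \wedge \atom(O)$ we would obtain $Q' \wedge \atom(O) \to Q'''$, contradicting the minimality of $Q''$ as a core of $Q' \wedge \atom(O)$. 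This is the one slightly delicate step, because the notion of core is defined relative to a specific host structure; the argument works precisely because $Q' \wedge \atom(O)$ sits inside $Q \wedge \atom(O)$.

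Finally I would transport the tree projection. Let $\HG_a$ be a tree projection of $(\HG_{Q'},\HGDue)$, which exists by hypothesis. Every hyperedge of $\HG_{Q''}$ comes either from an atom of $Q'$, hence is already covered by some hyperedge of $\HG_a$ (since $\HG_{Q'} \leq \HG_a$), or it is the hyperedge $O$ associated with $\atom(O)$; in the latter case, $O \subseteq \vars(q)$ and $\vars(q)$ is a hyperedge of $\HG_{Q'}$, so $O$ is again contained in the same hyperedge of $\HG_a$ that covers $\vars(q)$. Therefore $\HG_{Q''} \leq \HG_a \leq \HGDue$, and since $\HG_a$ is acyclic it is a tree projection of $(\HG_{Q''},\HGDue)$. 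This exhibits the required witness, showing that $O$ is $\tpCovered$ in $Q$ w.r.t.~$\V$. The main obstacle, as noted, is the minimality argument ensuring that the core chosen inside $Q' \wedge \atom(O)$ remains a core of the larger query $Q \wedge \atom(O)$.
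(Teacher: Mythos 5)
Your proof is correct and follows essentially the same route as the paper's: both pass through $Q'\wedge\atom(O)$ and reuse the given tree projection of $\HG_{Q'}$, the only cosmetic difference being that the paper shows $Q'\wedge\atom(O)$ is itself a core of $Q\wedge\atom(O)$, while you take a core of $Q'\wedge\atom(O)$ and verify it remains a core of the larger query (your minimality argument for this is fine). One point you should make explicit: to get the homomorphism from $Q\wedge\atom(O)$ to $Q'\wedge\atom(O)$ you must choose the homomorphism $Q\to Q'$ to be a \emph{retraction}, i.e.\ the identity on $\vars(Q')$ and hence on $O$ --- an arbitrary homomorphism onto the core need not fix the variables of $O$ pointwise and so need not map $\atom(O)$ to itself, so ``extending by the identity on the fresh atom'' silently relies on this standard but necessary fact, which the paper invokes explicitly.
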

\begin{proof}
Let $O\subseteq\vars(q)$, and consider the query $Q\wedge\atom(O)$. We first claim that there is a homomorphism from $Q\wedge\atom(O)$ to
$Q'\wedge\atom(O)$.
Indeed, since $Q'\in\cores(Q)$, it is also a \emph{retract} of $Q$ (see, e.g., \cite{GN08}); that is, there is a homomorphism $f$ from $Q$ to
$Q'$ which is the identity on its range (i.e., $f(X)=X$, for every term $X$ occurring in $Q'$). Moreover, $O\subseteq \vars(Q')$, because
$q\in\atoms(Q')$. It follows that $f$ is also a homomorphism from $Q\wedge\atom(O)$ to $Q'\wedge\atom(O)$. In particular, note that $f$ maps
the atom $\atom(O)$ to itself. We thus conclude that $Q'\wedge\atom(O)$ is also a core of $Q\wedge\atom(O)$, because $\atom(O)$ is over a fresh
relation symbol and hence must belong to any core, and dropping atoms from $Q'$ would contradict the minimality of $Q'$ as a core of $Q$.
Finally, since $\vars(\atom(O))= O \subseteq \vars(q)$ and $q\in\atoms(Q')$, the hypergraph associated with $Q'\wedge\atom(O)$, say $\HG'$, is
such that  $\HG'\leq\HGUnoSet$. Hence, any tree projection of $\HGUnoSet$ w.r.t. $\HGDue$, which exists by hypothesis, is a tree projection of
$\HG'$ w.r.t. $\HGDue$. That is, $O$ is $\tpCovered$ in $Q$ (w.r.t. $\V$).~\hfill~$\Box$
\end{proof}

\begin{example}\label{ex:notTP}\em
Consider again the setting of Example~\ref{ex:MultipleCores-bis}. The core $Q_5$ contains the atoms $r(A,B)$, $r(B,C)$, and $r(A,C)$, and we
have noticed that $Q_5$ admits a tree projection. Therefore, we can apply Lemma~\ref{lem:coreMembershipEntailsTPcovered} to conclude that the
sets of variables $\{A,B\}$, $\{B,C\}$, and $\{A,C\}$ are $\tpCovered$ in $Q_4$.

Consider now the set of variables $\{F,E\}$, which does not occur in any core of the query, and the novel query $Q_4 \wedge
\mathit{atom}(\{F,E\})$. This query has a unique core, which is again depicted in Figure~\ref{fig:triangle-bis}. Notice that this core does not
coincide with any of the two cores of the original query. Yet, it admits a tree projection, consisting of the hyperedges $\{F,E\}$, $\{A,E\}$,
and $\{A,B,C\}$, as shown in the figure. Thus, $\{F,E\}$ is $\tpCovered$ in $Q_4$.

On the other hand, the hypergraphs associated with the cores of $Q_4 \wedge \mathit{atom}(\{D,C\})$ and $Q_4 \wedge \mathit{atom}(\{D,B\})$ are
precisely the same as the hypergraph $\HG_{Q_6}$ associated with the core $Q_6$, that is, the triangle with vertices $D,B$, and $C$, having no
tree-projection w.r.t.~$\HG_{V_4}$. Hence, $\{D,C\}$ and $\{D,B\}$ are not $\tpCovered$ in $Q_4$.

Finally, for an example application of Definition~\ref{def:tpCovered} with arbitrary set of variables (i.e., not just contained in query
atoms), consider the set $\{A,F\}$. Consider then the query $Q_4 \wedge \mathit{atom}(\{A,F\})$ and note that its core does not have a tree
projection. Thus, $\{A,F\}$ is not $\tpCovered$ in $Q_4$.~\hfill~$\lhd$
\end{example}

The notion of \emph{tp-covering} plays a crucial role in establishing consistency properties. To help the intuition, this role is next
exemplified.

\begin{example}\em
Consider again the setting of Example~\ref{ex:MultipleCores} (and Example~\ref{ex:MultipleCores-bis}) and the database $\DB_4$ shown in
Figure~\ref{fig:triangle} over the relation symbol $r$ (in $Q_4$) and the symbols for the views in $\V_4=\views(Q_4)\cup \{v_1(A,B,C),
v_2(A,F)\}$. Recall from Example~\ref{ex:local} that $\V_4^{\onDB_4}$ is locally consistent.

Observe that for the query view $v_4(A,C)$, $v_4(A,C)^{\onDB_4}$  consists of the two tuples/homomorphisms $\tuple{a_1,c_1}$ and
$\tuple{a_2,c_2}$. That is, this query view provides exactly the two homomorphisms in $Q_4^{\onDB_4}[\{A,C\}]$, i.e., the answers of $Q_4$
projected over the variables ($A$ and $C$) of the view $w_{r(A,C)}$. Note that the same property holds for the views over the set of variables
$\{A,C\}$, $\{A,B\}$, $\{B,C\}$, $\{F,E\}$, $\{A,E\}$, and $\{A,B,C\}$. Interestingly, each one of this set is $\tpCovered$ in $Q_4$ (see also
Example~\ref{ex:notTP}).

On the other hand, each one of the sets $v_7(D,B)^{\onDB_4}$, $v_8(D,C)^{\onDB_4}$, and $v_2(A,F)^{\onDB_4}$ contains two homomorphisms that do
not correspond to any answer of the query (suitably projected over the variables of interest), which are those identified by the tuples marked
with the symbol ``{\small $\blacktriangleright$}'' in Figure~\ref{fig:triangle}. In fact, we observe that, in this case, $\{D,C\}$, $\{D,B\}$,
and $\{A,F\}$ are not $\tpCovered$ in $Q_4$.~\hfill~$\lhd$
\end{example}

In the above example, the fact that homomorphisms that are not correct answers are associated with views whose variables are not $\tpCovered$
is not by chance. Indeed, the intuition is now that to guarantee global consistency by just enforcing local consistency, all the variables
contained in query atoms must be $\tpCovered$.

Next, we establish a lemma that actually proves a slightly more general result dealing with any set of output variables covered by some view.
For a set of variables $O$, let $\vcovers(O)$ denote the set of all views $w\in\W$ such that $O\subseteq \vars(w)$.

\begin{lemma}\label{lem:TPimpliesLCtoGC-specific}
Assume that \emph{$\W^\onDB$} is locally consistent. For any set of variables $O$ that is $\tpCovered$ in $Q$, \emph{$w^\onDB[O] \subseteq
Q^\onDB[O]$} holds, for every $w\in\vcovers(O)$. Moreover, if $\bar w^\onDB$ is view consistent w.r.t.~$Q$, for some $\bar w\in\vcovers(O)$,
then we actually get all the right homomorphisms for all of them, i.e., \emph{$w^\onDB[O] = Q^\onDB[O]$} holds, for every $w\in\vcovers(O)$.
\end{lemma}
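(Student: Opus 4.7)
The plan is to derive both parts from a single top-down propagation argument along the join tree of a suitable tree projection. I first establish $w^\onDB[O] \subseteq Q^\onDB[O]$ by an explicit construction, then deduce the second part quickly from it.

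Since $O$ is $\tpCovered$ in $Q$, pick a core $Q'$ of $Q\wedge\atom(O)$ admitting a tree projection $\HG_a$ with respect to $\HG_\V$. Because $\atom(O)$ uses a fresh relation symbol, it must belong to $Q'$, and any retraction $f\colon Q\wedge\atom(O)\to Q'$ sends $\atom(O)$ to itself, so $f|_O = \mathrm{id}_O$. Since atoms of $Q$ carry different relation symbols than $\atom(O)$, $f$ restricts to a homomorphism from $Q$ into $Q'_0 := Q'\setminus\{\atom(O)\}$. Given $t\in w^\onDB[O]$, it thus suffices to construct a homomorphism $g\colon Q'_0\to\DB$ with $g|_O=t$: the composition $g\circ f$ will then extend $t$ to an element of $Q^\onDB$, proving $t\in Q^\onDB[O]$.

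To build $g$, let $J$ be a join tree of $\HG_a$ and, for each $h\in\edges(\HG_a)$, fix a view $w_h\in\V$ with $h\subseteq\vars(w_h)$ (which exists because $\HG_a\leq\HG_\V$). Since $\atom(O)\in\atoms(Q')$ and $\HG_{Q'}\leq\HG_a$, some hyperedge $h_0\in\edges(\HG_a)$ contains $O$; I root $J$ at $h_0$. Using $O\subseteq\vars(w)\cap\vars(w_{h_0})$, local consistency between $w$ and $w_{h_0}$ lifts $t$ to a tuple $\tau_0\in w_{h_0}^\onDB$ with $\tau_0|_O=t$; set $\sigma_{h_0}:=\tau_0|_{h_0}$. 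Then proceed top-down: for each child $h$ of $h'$ in $J$, local consistency between $w_{h'}$ and $w_h$ yields $\tau\in w_h^\onDB$ agreeing with the previously chosen $\tau'$ on $\vars(w_{h'})\cap\vars(w_h)\supseteq h\cap h'$; set $\sigma_h:=\tau|_h$. By the join-tree connectedness condition, the hyperedges of $\HG_a$ containing any given variable $X$ form a connected subtree of $J$, so all $\sigma_h$'s that mention $X$ agree on its value, making $g(X):=\sigma_h(X)$ well-defined for every $h\ni X$, and $\HG_{Q'}\leq\HG_a$ ensures $g$ is defined on all of $\vars(Q'_0)$.

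For atom satisfaction, take any $q\in\atoms(Q'_0)$: since $\HG_{Q'}\leq\HG_a$, there is $h_q\in\edges(\HG_a)$ with $\vars(q)\subseteq h_q$, and applying local consistency between $w_{h_q}$ and the query view $w_q\in\V$ to the tuple defining $\sigma_{h_q}$ yields a tuple of $w_q^\onDB$ coinciding with $g|_{\vars(q)}$; legality of $\DB$ gives $w_q^\onDB\subseteq q^\onDB$, so $g|_{\vars(q)}\in q^\onDB$, completing the first inclusion. For the second part, assume $\bar w\in\vcovers(O)$ is view consistent w.r.t.\ $Q$. The first part applied to $\bar w$ gives $\bar w^\onDB[O]\subseteq Q^\onDB[O]$, while view consistency gives $Q^\onDB[O]\subseteq\bar w^\onDB[O]$, hence equality. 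For any other $w\in\vcovers(O)$, one semijoin step between $\bar w$ and $w$ (using $O\subseteq\vars(\bar w)\cap\vars(w)$ and local consistency) shows $\bar w^\onDB[O]\subseteq w^\onDB[O]$; combining with the first inclusion for $w$ yields $w^\onDB[O]=Q^\onDB[O]$. The most delicate step is checking that the top-down propagation delivers a single well-defined function on $\vars(Q'_0)$ simultaneously satisfying every atom of $Q'_0$ in $\DB$; this rests on three ingredients working in concert: join-tree connectedness (pairwise agreement of the $\sigma_h$'s on shared variables), $\HG_{Q'}\leq\HG_a$ (each atom's variables live inside a single hyperedge of $\HG_a$), and legality (the inclusion $w_q^\onDB\subseteq q^\onDB$ that transfers tuples from auxiliary query views back into the original database relations).
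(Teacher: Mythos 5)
Your proof is correct and follows essentially the same route as the paper's: both arguments combine a retraction from $Q$ onto a core of $Q\wedge\atom(O)$ containing $\atom(O)$ with a join tree of the tree projection and local consistency among the covering views to extend any tuple over $O$ to a full answer of $Q$, and both derive the equality in the second part from view consistency of $\bar w$ plus one further semijoin step. The only difference is one of packaging: the paper delegates the propagation step to the classical result of Beeri et al.\ by building an auxiliary simple acyclic query $Q_a$ whose relations are projections of the views and invoking that local consistency entails global consistency there, whereas you carry out the top-down join-tree propagation explicitly; both are sound.
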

\begin{proof}
Let $Q_e= Q\wedge\atom(O)$. Assume that $O$ is $\tpCovered$ in $Q$, that is, there exists $Q'\in\cores(Q_e)$ for which $(\HG_{Q'},\HGDue)$ has
a tree projection. Since $Q'$ is a core, it is also a \emph{retract} of $Q_e$; that is, there is a homomorphism $f$ from $Q_e$ to $Q'$ such
that $f(X)=X$, for every term $X$ occurring in $Q'$. Clearly, $f$ is a homomorphism from $Q$ to $Q'$, too.
Then, for every (legal) database $\barDB$, $Q'^\onbarDB \subseteq Q^\onbarDB[\vars(Q')]$. Moreover, consider the query $WQ'$ where we have
query views in place of the original query atoms, that is,
$$WQ'= \atom(O) \wedge \bigwedge_{q\in \atoms(Q')\setminus \{\atom(O)\}} w_q.$$
Because $\barDB$ is a legal database, we immediately get $WQ'^\onbarDB = Q'^\onbarDB \subseteq Q^\onbarDB[\vars(Q')]$ and, hence,
$WQ'^\onbarDB[\bar X] \subseteq Q^\onbarDB[\bar X]$ holds as well, for any $\bar X\subseteq \vars(Q')$.

Now consider any (legal) database $\DB$ such that \emph{$\W^\onDB$} is locally consistent, and any tree projection $\HG_a$ of
$(\HGUnoSet,\HGDue)$. Assume w.l.o.g. that $\nodes(\HG_a)=\nodes(\HG_{Q'})$ (otherwise, just drop possible additional variables, and you still
get a tree projection of $(\HGUnoSet,\HGDue)$). Observe that $O\subseteq h_O$, for some hyperedge $h_O$ of $\HG_a$. Indeed,
$\atom(O)\in\atoms(Q')$, since $\atom(O)$ is defined on a fresh relation symbol, and thus this atom must occur in every core of $Q_e$, i.e.,
$O\in \edges(\HGUnoSet)$. Let us associate with $\HG_a$ the following query:
$$Q_a = WQ' \wedge \bigwedge_{h\in \edges(\HG_a)} \atom(h).$$
For any fresh atom $\atom(h)\in\atoms(Q_a)$ (including $\atom(O)$), let $\atom(h)^\onDB= v^\onDB[h]$, where $v\in\V$ is any view satisfying
$h\subseteq \vars(v)$, chosen according to some fixed (arbitrary) criterium. Such a view always exists because $\HG_a$ is a tree projection of
$(\HGUnoSet,\HGDue)$.

Note that $Q_a^\onDB \subseteq WQ'^\onDB$, because $WQ'$ is a subquery of $Q_a$. By construction $Q_a$ is a simple acyclic query, and
$\atoms(Q_a)^\onDB$ is locally consistent because all these relations are projections of views in the locally consistent set $\W^\onDB$. Thus,
by the results in~\cite{BFMY83}, $Q_a$ is globally consistent and we get, for the atom $\atom(O)$, $\atom(O)^\onDB = Q_a^\onDB[O]\subseteq
WQ'^\onDB[O]\subseteq Q^\onDB[O]$. Moreover, since $\W^\onDB$ is locally consistent, this property must hold for every $w^\onDB$, with $w\in\W$
and $O\subseteq \vars(w)$.
That is, {$w^\onDB[O] \subseteq Q^\onDB[O]$} holds, for every $w\in\vcovers(O)$.

Assume now that the output variables $O$ are covered by some view consistent atom, i.e., $O\subseteq \vars(\bar w)$ for some $\bar w\in\W$ such
that $Q^\onDB[\vars(\bar w)]\subseteq \bar w^\onDB$ and thus $Q^\onDB[O]\subseteq \bar w^\onDB[O]$. Since $\W^\onDB$ is locally consistent, it
follows that $\bar w^\onDB[O]=\atom(O)^\onDB$ and thus $Q^\onDB[O]\subseteq \atom(O)^\onDB$. Combined with the above relationship, we get the
desired equality $Q^\onDB[O] = \atom(O)^\onDB$. Again, since $\W^\onDB$ is locally consistent, this property must hold for every $w^\onDB$,
with $w\in\W$ and $O\subseteq \vars(w)$. That is, $Q^\onDB[O] = w^\onDB[O]$, for every $w\in\vcovers(O)$.~\hfill~$\Box$
\end{proof}

Since query views are always view consistent (over legal databases), we immediately get the following sufficient condition for the global
consistency, which clearly also holds for restricted tree projections corresponding to 
decomposition methods.

\begin{theorem}\label{thm:TPimpliesLCtoGC}
Assume that, for every $q\in\atoms(Q)$, $\vars(q)$ is $\tpCovered$ in $Q$ (w.r.t.~$\V$). Then, for every database $\DB$, \emph{$\lc(\W,\DB)$}
entails \emph{$\gc(\W,\DB,Q)$}.
\end{theorem}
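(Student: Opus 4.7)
The plan is to reduce this theorem directly to Lemma~\ref{lem:TPimpliesLCtoGC-specific}, since essentially all of the heavy lifting (constructing a tree projection–based acyclic wrapper query $Q_a$ and invoking the classical Beeri--Fagin--Maier--Yannakakis result for simple acyclic queries) has already been packaged into that lemma. The remaining task is only to match the hypotheses and unwind the definitions of global consistency and of query views.

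First, fix any database $\DB$ (which by our standing convention is legal) such that $\lc(\W,\DB)$ holds, and fix any atom $q \in \atoms(Q)$. I would then take $O := \vars(q)$ as the set of output variables to which Lemma~\ref{lem:TPimpliesLCtoGC-specific} is applied. Two facts need to be checked: \emph{(i)} $O$ is $\tpCovered$ in $Q$ w.r.t.~$\V$, and \emph{(ii)} there is some view $\bar w \in \vcovers(O)$ that is view consistent w.r.t.~$Q$. Condition \emph{(i)} is exactly the hypothesis of the theorem applied to $q$. For condition \emph{(ii)}, the natural choice is the query view $\bar w := w_q \in \views(Q)$ associated with $q$: by definition of a view system we have $\vars(w_q) = \vars(q) = O$, so $w_q \in \vcovers(O)$; and since $\DB$ is legal, condition (ii) of the definition of a legal database tells us precisely that $\views(Q)^\onDB$ is view consistent, so in particular $w_q$ is view consistent w.r.t.~$Q$.

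With both hypotheses in place, the second conclusion of Lemma~\ref{lem:TPimpliesLCtoGC-specific} yields $w^\onDB[O] = Q^\onDB[O]$ for every $w \in \vcovers(O)$. Instantiating this equality at $w = w_q$ and using $\vars(w_q) = \vars(q) = O$, the projection $w_q^\onDB[O]$ is simply $w_q^\onDB$, and $Q^\onDB[O]$ is $Q^\onDB[\vars(q)] = Q^\onDB[q]$. Hence $w_q^\onDB = Q^\onDB[q]$, which is exactly the defining condition of $\gc(\W,\DB,Q)$ for the atom $q$. Since $q$ was an arbitrary atom of $Q$, this establishes $\gc(\W,\DB,Q)$.

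There is no real obstacle here: the only things worth being careful about are the minor bookkeeping identifications $\vars(w_q) = \vars(q)$ (forced by the definition of query views in a view system, and by our earlier assumption that views do not contain constants or repeated variables), and the invocation of legality to guarantee view consistency of the query views. Once these are in place, the theorem is a clean specialization of Lemma~\ref{lem:TPimpliesLCtoGC-specific} from arbitrary view-covered outputs to the specific outputs $\vars(q)$ picked out by the query atoms.
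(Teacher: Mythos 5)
Your proof is correct and follows exactly the route the paper takes: the paper derives this theorem as an immediate consequence of Lemma~\ref{lem:TPimpliesLCtoGC-specific}, noting that query views are always view consistent over legal databases, which is precisely the instantiation you carry out with $O=\vars(q)$ and $\bar w = w_q$. Your write-up just makes the bookkeeping explicit.
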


\begin{figure}[t]
  \centering
  \includegraphics[width=0.32\textwidth]{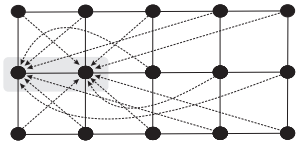}
  \caption{Mapping an undirected grid into an edge.}\label{fig:grid}
\end{figure}

Having a tree projection of the full query is therefore not necessary for getting global consistency through local consistency. For instance,
an unsuspectedly easy class of queries consists of the grid queries of the form $GQ_n = \bigwedge_{{X,Y}\in E_n} (e(X,Y)\wedge e(Y,X))$, where
$E_n$ is the edge set of an $n\times n$ grid. Indeed, while such grids are well known obstructions to the existence of tree decompositions, any
of their edges is a core (and, thus, trivially acyclic)---see Figure~\ref{fig:grid}. Therefore, even the smallest possible set of views
$\V=\views(GQ_n)$ is sufficient to obtain global consistency by enforcing local consistency.

As we shall prove in Section~\ref{sec:back}, Theorem~\ref{thm:TPimpliesLCtoGC} defines the most general possible condition to guarantee global
consistency, which is what we need to answer the query by exploiting local consistency if the output variables are included in some query atom.

\medskip

\noindent \textbf{Decision Problem.}
The situation is rather different if we just look for the most general sufficient conditions to solve the decision problem
$Q^\onDB\neq\emptyset$. In this case, it is sufficient the existence of a tree projection  of any structure for which there is an endomorphism
of the query. Of course, any such a subquery $Q'$ is homomorphically equivalent to $Q$,  denoted by $Q' \homEquiv Q$ in the following.
In fact, the concept of \emph{tp-covering} is immaterial here, given that we are not interested in output variables (i.e., $O=\emptyset$).
Thus, as a special case of our analysis on the computation problem, we get the following result, which generalizes to tree projections (where
cores may behave differently) a similar sufficient condition known for the special cases of tree decompositions~\cite{DKV02}, and generalized
hypertree decompositions~\cite{CD05}.

\begin{theorem}\label{thm:TPimpliesLCtoGC-decision}
Assume there is a subquery $Q'\homEquiv Q$ for which $(\HGUnoSet,\HGDue)$ has a tree projection. Then, for every 
 database $\DB$, \emph{$\lc(\W,\DB)$} entails \emph{$Q^\onDB\neq\emptyset$}.
\end{theorem}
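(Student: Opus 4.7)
The plan is to adapt the construction used in the proof of Lemma~\ref{lem:TPimpliesLCtoGC-specific}, simplifying it since there are no output variables to worry about, and then to bridge from $Q'$ back to $Q$ through the homomorphic equivalence.

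First, I would fix a tree projection $\HG_a$ of $(\HGUnoSet,\HGDue)$, guaranteed by hypothesis. Following the construction in Lemma~\ref{lem:TPimpliesLCtoGC-specific}, I would build the ``projection query''
\[
Q_a \;=\; \Bigl(\bigwedge_{q\in\atoms(Q')} w_q\Bigr)\wedge\bigwedge_{h\in\edges(\HG_a)} \atom(h),
\]
where each $\atom(h)$ is a fresh relation symbol with $\atom(h)^\onDB := v^\onDB[h]$ for some (fixed) view $v\in\V$ such that $h\subseteq\vars(v)$; such a $v$ exists since $\HG_a\leq\HGDue$. By construction $Q_a$ is a \emph{simple} query, and $\HG_{Q_a}$ is acyclic because all its hyperedges are contained in hyperedges of the acyclic hypergraph $\HG_a$ (the query-view atoms are absorbed because $\HGUnoSet\leq\HG_a$, and the fresh atoms $\atom(h)$ are exactly the hyperedges of $\HG_a$).

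Next, I would argue that $\atoms(Q_a)^\onDB$ is locally consistent: every relation involved in $Q_a$ is either a view relation of $\V^\onDB$ (for query views $w_q$) or a projection of one (for fresh atoms), and pairwise local consistency is preserved under projection of the common variables. Since $Q$ is connected and $\vars(\V)=\vars(Q)$, local consistency together with the non-emptiness of some view in $\W$ forces every relation to be non-empty. Hence, by the classical result of Beeri, Fagin, Maier, and Yannakakis~\cite{BFMY83} on simple acyclic queries, $Q_a$ is globally consistent on $\DB$, and in particular $Q_a^\onDB\neq\emptyset$.

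Finally, since the query-view conjunction $WQ' := \bigwedge_{q\in\atoms(Q')} w_q$ is a subquery of $Q_a$, we obtain $Q_a^\onDB\subseteq WQ'^\onDB$ (after projecting on $\vars(Q')$), and the legality of $\DB$ yields $WQ'^\onDB = Q'^\onDB$. Therefore $Q'^\onDB\neq\emptyset$. To transfer this to $Q$, I would use the assumption $Q'\homEquiv Q$: fix a homomorphism $g$ from $Q$ to $Q'$, and observe that for any solution $h\in Q'^\onDB$ the composition $h\circ g$ is a homomorphism from $Q$ to $\DB$, hence an element of $Q^\onDB$. This gives $Q^\onDB\neq\emptyset$, as required. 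The main obstacle I anticipate is purely bookkeeping: making sure that when we replace the core of the lemma by an arbitrary homomorphically equivalent subquery and drop the output atom $\atom(O)$, the acyclicity and local consistency of the auxiliary query $Q_a$ are genuinely preserved, and that the final composition $h\circ g$ respects constants that may appear in $Q$ (which it does since homomorphisms are required to fix constants).
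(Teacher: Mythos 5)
Your proof is correct and rests on the same engine as the paper's: the auxiliary simple acyclic query built from the query views of $Q'$ together with fresh atoms for the hyperedges of the tree projection, whose relations are projections of the locally consistent views, followed by an appeal to the global-consistency result of~\cite{BFMY83}. The paper packages this differently: its proof of the theorem is a two-step citation that first passes from $Q'$ to a core $Q''\in\cores(Q')\subseteq\cores(Q)$, invokes Lemma~\ref{lem:coreMembershipEntailsTPcovered} to conclude that $\vars(q)$ is $\tpCovered$ in $Q$ for each $q\in\atoms(Q'')$, and then applies Lemma~\ref{lem:TPimpliesLCtoGC-specific} to get $w_q^\onDB=Q^\onDB[\vars(q)]\neq\emptyset$; you instead inline the construction from that lemma directly on $Q'$ and transfer non-emptiness to $Q$ by composing a solution of $Q'$ with a homomorphism $g:Q\to Q'$. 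Your version is slightly more self-contained and correctly observes that the $\tpCovered$/core machinery is not needed for the pure decision problem (as the paper itself remarks); the paper's version buys uniformity with the computation-problem results it has already established. One cosmetic point: your claimed equality $WQ'^\onDB=Q'^\onDB$ is stronger than what legality gives for a proper subquery $Q'$ of $Q$ (view consistency of query views is stated relative to $Q$, not $Q'$), but only the inclusion $WQ'^\onDB\subseteq Q'^\onDB$ is used, and that direction follows from $w_q^\onDB\subseteq q^\onDB$.
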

\begin{proof}
Let $\HG_a$ be a tree projection of $(\HGUnoSet,\HGDue)$, for some $Q'\homEquiv Q$. Then, it is also a tree projection of $(\HG_{Q''},\HGDue)$,
for any $Q''\in\cores(Q')\subseteq\cores(Q)$, because $\HG_{Q''}\leq \HGUnoSet$. From Lemma~\ref{lem:coreMembershipEntailsTPcovered}, for any
(query atom) $q\in\atoms(Q'')$, $\vars(q)$ is $\tpCovered$ in $Q$ and thus, from Lemma~\ref{lem:TPimpliesLCtoGC-specific},
$Q^\onDB[\vars(q)]=w_q^\onDB$. Then, whenever $\lc(\W,\DB)$ holds, $w_q^\onDB\neq\emptyset$ and hence $Q^\onDB\neq\emptyset$.~\hfill~$\Box$
\end{proof}

Note that the above condition is more liberal than what we need for having the global consistency. In the next section we prove that it is in
fact also a necessary condition as far as the decision problem is concerned.

Moreover, we point out that, from an application perspective, either results above may be useful only if we have some guarantee (or some
efficient way to check) that the required conditions are met. Otherwise, as it happens for the decision problems in the special cases of
(generalized) (hyper)tree decompositions~\cite{CD05,G07}, we are in a {\em promise} setting where, in general, we are not able to actually
compute any full (and thus polynomial-time checkable) query answer (or disprove the ``promise''). In particular, it has been observed in a
slightly different setting by~\cite{GS10} (see, also, \cite{SGG08,BDGM09}) that, rather surprisingly, the global consistency property (and
hence having a full reducer) is not sufficient to actually compute a full query answer (unless $\Pol=\NP$). Intuitively this is due to the fact
that, as soon as we fix some tuple in a relation in order to extend it to a full solution, we are changing the set of available query
endomorphisms and thus we may loose the property of some variables to be $\tpCovered$. As a consequence, subsequent propagations are not
guaranteed to maintain the global consistency.

\smallskip

\subsection{...and Back to Tree Projections}\label{sec:back}

The question of whether the cases in which local consistency implies global consistency precisely coincide with the cases in which there is a
tree projection of the query with respect to a set of views was a long-standing open problem in the literature~\cite{GS84,SS93}. We next answer
this question, both in the setting considered in those papers (where all relation symbols in the query are distinct), with the answer being
positive there, and in the unrestricted setting where the answer is instead negative. In fact, we precisely characterize the relationships
between local and global consistency and tree projections in the general setting too, by showing that tree projections are still necessary, but
not necessarily involving the query as a whole.

\medskip

\noindent\textbf{Decision Problem.} We start with the problem of checking whether the given query is not empty. Theorem~\ref{thm:svuotamento}
below provides the counterpart of Theorem~\ref{thm:TPimpliesLCtoGC-decision}. The proof requires some preparation.

Let $\DB$ be a database over the vocabulary $\DS$. For the following results, we assume that each relation symbol $r\in\DS$ of arity $\rho$ is
associated with a set of $\rho$ (distinct) \emph{attributes} that identify the $\rho$ positions available in $r$. In this context, $r$ is also
called \emph{relational schema}, and $\DS$ is called \emph{database schema}.
An \emph{inclusion dependency} is an expression of the form $r_1[S]\subseteq r_2[S]$, where $r_1$ and $r_2$ are two relational schemas in $\DS$
and $S$ is a set of attributes that $r_1$ and $r_2$ have in common. A database $\DB$ over $\DS$ satisfies this inclusion dependency if, for
each tuple $t_1\in r_1^\onDB$, there is a tuple $t_2\in r_2^\onDB$ with $t_1[S]=t_2[S]$ (where $[\cdot]$ is here the classical projection
relational operator applied to a set of attributes). Moreover, if $\DB$ satisfies each inclusion dependency in a given set $I$, then we simply
say that $\DB$ satisfies $I$.

Define $\WDS$ as the set of \emph{canonical atoms} associated with the schema $\DS$, that is, the set containing, for each relation $r$ of
$\DS$, the atom $r({\bf u})$ having as its variables the attributes of $r$. A conjunctive query $Q$ is said to be a {\em canonical query} for
$\DS$ whenever it consists of atoms from $\WDS$, i.e., $\atoms(Q)\subseteq \WDS$ holds.

We are now ready to state a fundamental lemma on \emph{union} of conjunctive queries, i.e., on queries of the form $\overline
Q=Q_1\vee\cdots\vee Q_n$, where $Q_i$ is a conjunctive query $\forall i\in\{1,..,n\}$. We are interested in unions of Boolean queries, so that
$\overline Q^\onDB\neq\emptyset$ if (and only if) $Q_i^\onDB\neq\emptyset$ for some query $Q_i$ in the union $\overline Q$.
The ingredients in the lemma are a recent result on the  {\em finite controllability} of unions of conjunctive queries in the framework of
databases under the open-world assumption~\cite{R06}, and a connection between tree projections and the \emph{chase} procedure firstly observed
in \cite{SS93}.

\begin{lemma}\label{lem:claim:ID}
Let $\DS$ be a database schema equipped with a set $I$ of inclusion dependencies. Let $\overline Q$ be a union of canonical queries for $\DS$
such that,$\forall$ (finite) $\DB\neq\emptyset$ over $\DS$, $\DB$ satisfies $I$ $\Rightarrow$ $\overline Q^\onDB \neq \emptyset$. Then, there
exists a conjunctive query $Q'$ in the union $\overline Q$ such that $(\HG_{Q'} ,\HG_{\mathcal A(DS)})$ has a tree projection.
\end{lemma}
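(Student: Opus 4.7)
The plan is to combine Rosati's finite controllability of unions of conjunctive queries under inclusion dependencies with the chase of the canonical instance of $\DS$, and to extract the desired tree projection from the resulting chase tree.

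First, I freeze each canonical atom $r({\bf u}) \in \WDS$ into a ground tuple whose values are the attribute names of $r$, obtaining an initial instance $\DB_0$. I then chase $\DB_0$ with $I$ to get a (possibly infinite) instance $\DB^*$ that satisfies $I$ by construction. Rosati's theorem states that any union of conjunctive queries that is true on every finite model of a set of inclusion dependencies is also true on every infinite model; applied to our hypothesis, it yields $\overline Q^{\DB^*} \neq \emptyset$, and hence a homomorphism $g : Q' \to \DB^*$ for some canonical $Q'$ occurring in the union $\overline Q$.

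Second, I exploit structural properties of the chase. Each element $e$ in the universe of $\DB^*$ carries a well-defined attribute type $\tau(e)$: original constants have their own attribute name as type, and fresh nulls inherit the attribute position at which they were introduced. Because every chase step propagates values only between identically-named attributes, $e$ occurs at position $\tau(e)$ whenever it appears in a chase atom. Moreover, the chase atoms form a forest in which, for each $e$, the set of atoms containing $e$ induces a connected subtree (the one rooted at the atom where $e$ was introduced and running down along chase applications that carry position $\tau(e)$). Consequently, the hypergraph whose hyperedges are $\vars(a)$, for $a$ ranging over chase atoms, is acyclic.

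Finally, on vertex set $\vars(Q')$ I define $\HG_a$ by taking as hyperedges the sets $\mu(a) = \{v \in \vars(Q') : g(v) \in \vars(a)\}$ ranging over chase atoms. Acyclicity of $\HG_a$ transfers from the connected-subtree property applied to each element $g(v)$. For every atom $q = s({\bf u_s}) \in \atoms(Q')$, its image under $g$ is a chase atom $a = s(g({\bf u_s}))$ with $\vars(q) \subseteq \mu(a)$, giving $\HG_{Q'} \leq \HG_a$. And since the attribute-type invariant forces $\tau(g(v)) = v$ for every $v \in \vars(Q')$, each $\mu(a)$ consists only of attributes of the relation instantiated by $a$, so $\mu(a)$ sits inside the canonical-atom hyperedge of $\HG_{\WDS}$ associated with that relation; hence $\HG_a \leq \HG_{\WDS}$. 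The main obstacle I anticipate is verifying the attribute-type invariant together with the connected-subtree property of the chase, in the spirit of the chase/tree-projection correspondence from \cite{SS93}; once these structural facts are in place, the three containment conditions are straightforward and finite controllability is used as a black box.
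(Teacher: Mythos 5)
There is a genuine gap, and it lies in your very first step: the choice of the initial instance $\DB_0$. You freeze the \emph{entire} canonical instance, one ground tuple per relation of $\DS$. This has two fatal consequences. First, the existence of a homomorphism $g$ from some disjunct $Q'$ into the chase becomes vacuous: every canonical query maps into the frozen canonical instance by the identity (each variable \emph{is} an attribute name, and the frozen tuple of each relation lists exactly those attribute names), so $\overline Q^{\DB^*}\neq\emptyset$ holds for \emph{every} union of canonical queries; the hypothesis of the lemma and Rosati's theorem are never really used, and $g$ carries no information about which disjunct is ``good''. Second, and consequently, your structural claims about the chase fail: the chase of a multi-tuple instance is a forest whose roots \emph{share} the frozen attribute constants, so for an original constant $Y$ the set of chase atoms containing $Y$ is spread over all the trees rooted at relations having attribute $Y$ and is \emph{not} connected. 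Your hypergraph $\HG_a$ can then be cyclic --- e.g., with schema $\{r_1(A,B),\,r_2(B,C),\,r_3(A,C)\}$ and the disjunct $Q'=r_1(A,B)\wedge r_2(B,C)\wedge r_3(A,C)$, the identity homomorphism yields $\HG_a=\HG_{Q'}$, a triangle. Since the lemma only promises that \emph{some} disjunct has a tree projection, an argument that applies to an arbitrary disjunct admitting a homomorphism cannot succeed.

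The paper avoids both problems with one move: $\DB_0$ consists of a \emph{single} frozen tuple $r_w(c_{X_1},\ldots,c_{X_m})$ for one atom $r_w$ occurring in $\overline Q$. Then $\chase(I,\DB_0)$ is a single tree, every value (original or fresh) has a unique introduction node, and the connected-subtree property you invoke actually holds; more importantly, the chase now contains only the atoms forced by $I$ from that one tuple, so the existence of a homomorphism from some $Q'$ into it is a genuinely nontrivial consequence of the hypothesis via finite controllability, and it is precisely that $Q'$ which inherits the tree projection onto $\HG_{\mathcal A(DS)}$. The rest of your outline (the attribute-type invariant, the labeling $\mu$, the three containments) matches the paper's argument and would go through once $\DB_0$ is corrected; you would also want to note, as the paper does via Johnson--Klug, that the homomorphism may be assumed to land in a finite subtree of the chase.
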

\begin{proof}
Unlike all other proofs in the paper, we next deal both with finite and infinite databases, and thus we always point out whether a database is
(or may be) infinite. All databases are implicitly assumed to be over the database schema $\DS$. From the hypothesis, the following property
holds for $\overline Q$:
\begin{description}
\item[$\rm P_1$]  \ $\forall$ finite $\DB\neq\emptyset$, $\DB$ satisfies $I$ $\Rightarrow$ $\overline Q^\onDB \neq \emptyset$.
\end{description}

Let us start by taking an arbitrary atom $r_w(X_1,\ldots,X_m)$ in $\overline Q$, and let $\DB_0 = \ \{\ r_w(c_{X_1},\ldots,c_{X_m})\ \}$, where
$c_{X_1},\ldots,c_{X_m}$ are fresh (distinct) constants. Trivially, $\rm P_1$ entails the following property:
\begin{description}
\item[$\rm P_2$] \ $\forall$ finite $\DB \supseteq \DB_0$, $\DB$ satisfies $I$ $\Rightarrow$  $\overline Q^\onDB \neq \emptyset$.
\end{description}

Recall that the possibly infinite database $\chase(I,\DB_0)$ is built from $\DB_0$ by adding iteratively new tuples to satisfy inclusion
dependencies in $I$, until no dependency is violated by the current database (see for instance \cite{abit-etal-95}). In the following, it is
convenient to represent $\chase(I,\DB_0)$ as a tree $T$ of tuples rooted at $r_w(c_{X_1},\ldots,c_{X_m})$, and where edges are built as
follows. Let $\DB_i$ denote the set of all the tuples in $\chase(I,\DB_0)$ associated with nodes in the first $i$ levels of $T$ (the root is
level 0).
Let $r({\bf t})$ be a node of $T$ at level $i$. For each inclusion dependency $r[A]\subseteq r'[A]\in I$ such that there is no tuple $r'({\bf
t}')\in \DB_{i}$ that matches with $r({\bf t})$ over the attributes in $A$, a node $r'({\bf t}'')$ is added as a child of $r({\bf t})$, where
$r'({\bf t}'')$ is a fresh tuple that matches with $r({\bf t})$ over the attributes in $A$ and contains fresh constants of the form $c_Y$, for
any (other) attribute $Y\notin A$ in the schema of relation $r'$.

A well known property of $\chase(I,\DB_0)$ is that it maps via homomorphism to any other (possibly infinite) database that satisfies $I$ and
includes the non-empty database $\DB_0$. Therefore, whenever $\overline Q^{{\fontOn\chase(I,\DB_0)}} \neq \emptyset$, the same holds for every
database that satisfies $I$ and includes $\DB_0$.

We now use the {\em finite controllability} result by Rosati~\cite{R06} which, applied to our $\overline Q$, $I$, and $\DB_0$, reads as
follows: the answer of $\bar Q$ is not empty on every (possibly infinite) database that satisfies $I$ and includes $\DB_0$ if, and only if, the
answer of $\bar Q$ is not empty on every finite database that satisfies $I$ and includes $\DB_0$ (by Theorem~2 in~\cite{R06}).\footnote{In
particular, it is shown that this is equivalent to the condition $\overline Q^{{\fontOn\fchase(I,\DB_0,m)}} \neq \emptyset$, where $m$ is a
finite natural number that depends on the given instance (including the query) and $\fchase(I,\DB_0,m)$ is the so-called {\em finite chase},
that is, a non-empty finite database playing the same role of  the (possibly) infinite chase, as far as the evaluation of $\bar Q$ is
concerned.}
Therefore, $\rm P_2$ implies the following property:
\begin{description}
\item[$\rm P_3$] \  $\overline Q^{{\fontOn\chase(I,\DB_0)}} \neq \emptyset$.
\end{description}

Because $\overline Q$ is a union of conjunctive queries, this means that there is a query $Q'$ in $\overline Q$ having a homomorphism $h:
\vars(Q')\mapsto \U_c$ from $Q'$ to $\chase(I,\DB_0)$, where $\U_c$ is the universe of $\chase(I,\DB_0)$. In particular, from a well known
result of Johnson and Klug~\cite{JK84}, we may assume, w.l.o.g., that $h$ maps $Q'$ to a finite subtree $T_f$ of $T$.

Observe now that $h$ is a bijection. Indeed, $\DB_0$ contains the one tuple $r_w(c_{X_1},\ldots,c_{X_m})$ with a distinct constant for each
attribute of $r_w$ and, by definition of $\chase(I,\DB_0)$, any constant $c_{Y}$ can never be used for an attribute different from $Y$. In
fact, either $c_{Y}$ belongs to the starting tuple and it is then propagated to fresh tuples by the chase generating-rule, or it is a fresh
constant belonging to a tuple created to satisfy some inclusion dependency (which does not involve attribute $Y$).
Moreover, recall that attributes in $\DS$ are in fact variables in $Q'$, because the latter is a canonical query. Then, since $h$ is a
homomorphism, for each variable (attribute) $Y$, $h(Y)$ has the form $c_Y$ for some constant $c_Y$ occurring in tuples of $\chase(I,\DB_0)$.

We now define a labeling $\lambda$, associating each node of $T_f$ with a set of variables in $\vars(Q')$. Let $V=\{h(X) \mid X\in
\vars(Q')\}$.
For each vertex $p=r(c_{Y_1},...,c_{Y_n})$ in $T_f$, define $\lambda(p)$ as the set $\{ h^{-1}(c_{Y_i}) \mid c_{Y_i}\in V \}$.
Let $p_1$ and $p_2$ be two vertices of $T_f$ such that $X\in \lambda(p_1)\cap \lambda(p_2)$ is a variable in $\vars(Q')$. Consider the chase
constant $h(X)$, which occurs in $p_1$ and $p_2$ in $T_f$. Let $p_X$ be the top-most vertex of $T_f$ where $h(X)$ occurs. Because of the chase
generating-rule, each node in the path from $p_X$ to $p_1$ (resp., $p_2$) contains the constant $h(X)$. Thus, since $T_f$ is a tree, $h(X)$
occurs in the path between $p_1$ and $p_2$. Therefore, $X$ occurs in $\lambda$-labeling of each vertex in this path, too.

Now consider the hypergraph $\HG_a$ containing exactly one hyperedge $\lambda(p)$, for each vertex $p$ of $T_f$, and note that $\HG_a$ is
acyclic, because we have actually just shown that the $\lambda$-labeling on $T_f$ defines a join tree of $\HG_a$. Moreover, since $h$ is a
homomorphism from $Q'$ to $\chase(I,\DB_0)$, for each atom $q\in \atoms(Q')$ there exists a vertex $p=h(q)$ in $T_f$ for which
$\lambda(p)=\vars(q)$; thus, $\HG_{Q'}\leq \HG_a$. Finally, by construction, each hyperedge $\lambda(p)$ in $\HG_a$ is built from a tuple
$p=r(c_{Y_1},...,c_{Y_n})$ of $\chase(I,\DB_0)$, hence a tuple of (the relation of) some canonical atom $a_r$ in $\WDS$. Moreover, we observed
that, for each variable $Y_i\in \lambda(p)$, $h^{-1}(c_{Y_i})=Y_i\in \vars(a_r)$. Then, $\lambda(p)\subseteq \vars(a_r)$, and hence $\HG_a\leq
\HG_{\mathcal A(DS)}$. All in all, we have shown that, for the query $Q'$ in $\overline Q$, there is a tree projection of $\HG_{Q'}$ w.r.t.
$\HG_{\mathcal A(DS)}$.~\hfill~$\Box$
\end{proof}

\begin{theorem}\label{thm:svuotamento}
Assume there is no tree projection of $(\HGUnoSet,\HGDue)$, for each core $Q'\in \cores(Q)$. Then, local consistency does not entail global
consistency. In particular, there exists a (legal) database $\DB$ such that \emph{$\lc(\W,\DB)$} holds but \emph{$Q^\onDB = \emptyset$}.
\end{theorem}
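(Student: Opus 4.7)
The plan is to prove the contrapositive: assuming that for every legal database $\DB$, $\lc(\V,\DB)$ entails $Q^\onDB \neq \emptyset$, we show that some core $Q' \in \cores(Q)$ admits a tree projection with respect to $\HGDue$. The central tool will be Lemma~\ref{lem:claim:ID}, which applies to unions of canonical queries under inclusion dependencies.

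First, I would set up a database schema $\DS$ containing, for each view $w \in \V$, a relational schema $r_w$ whose attributes are the variables in $\vars(w)$; this ensures $\HG_{\WDS} = \HGDue$. For the set $I$ of inclusion dependencies, I would include, for each pair $w, w' \in \V$, the dependency $r_w[\vars(w) \cap \vars(w')] \subseteq r_{w'}[\vars(w) \cap \vars(w')]$, directly encoding local consistency. For each core $Q' \in \cores(Q)$, define a canonical query $\tilde{Q'}$ over $\DS$ by replacing every query atom $q \in \atoms(Q')$ with the canonical atom for $r_{w_q}$, keeping the variable list $\vars(q)$; by construction $\HG_{\tilde{Q'}} = \HG_{Q'}$. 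Finally, let $\overline Q = \bigvee_{Q' \in \cores(Q)} \tilde{Q'}$.

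The heart of the argument is verifying the hypothesis of Lemma~\ref{lem:claim:ID}: every finite non-empty database $\DB$ over $\DS$ satisfying $I$ makes $\overline Q^\onDB \neq \emptyset$. Given such a $\DB$, I would construct a legal database $\barDB$ for $Q$ and $\V$ by setting $w^\onbarDB = r_w^\onDB$ for each view $w$, and defining the relation of each query-relation symbol $r$ by collecting the tuples from the query-view relations of all atoms of $Q$ using $r$ (suitably rearranged according to each atom's own variable list). Because $\DB$ satisfies $I$, the views in $\barDB$ inherit local consistency, so $\lc(\V, \barDB)$ holds. By the contrapositive hypothesis, $Q^\onbarDB \neq \emptyset$, whence there is a homomorphism from $Q$ to $\barDB$. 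Since every query retracts onto any of its cores, there exists $Q'' \in \cores(Q)$ and a homomorphism $h: \vars(Q'') \to \barDB$. Tracing the construction of $\barDB$, this homomorphism translates into a homomorphism from $\tilde{Q''}$ into $\DB$ itself, witnessing $\overline Q^\onDB \neq \emptyset$.

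Applying Lemma~\ref{lem:claim:ID} to $\overline Q$, $\DS$, and $I$, we obtain a disjunct $\tilde{Q'}$ for which $(\HG_{\tilde{Q'}}, \HG_{\WDS})$ admits a tree projection. Since $\HG_{\tilde{Q'}} = \HG_{Q'}$ and $\HG_{\WDS} = \HGDue$, this is a tree projection of $\HG_{Q'}$ with respect to $\HGDue$, for some $Q' \in \cores(Q)$, contradicting the initial assumption. The main technical obstacle will be ensuring that the constructed $\barDB$ is truly legal when $Q$ is not simple: with multiple query atoms sharing a relation symbol, view consistency (clause (ii) of legality) may fail because a homomorphism from $Q$ to $\barDB$ could use a tuple of the query-relation that was contributed by one query view but is then interpreted through another atom's variable ordering. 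Handling this properly will require either augmenting $I$ with further inclusion dependencies relating the query views of atoms over the same relation symbol, or refining the construction of $\barDB$ (e.g., by tagging constants per atom so that such cross-interpretation cannot arise) so that the back-translation to $\overline Q^\onDB$ still goes through.
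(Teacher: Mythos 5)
Your overall strategy is the paper's own: the same schema $\DS$ with one relation per view, the same inclusion dependencies encoding local consistency, the same union $\overline Q$ of canonical queries over $\cores(Q)$, and the same reliance on Lemma~\ref{lem:claim:ID} (you merely run the argument contrapositively, verifying the lemma's hypothesis rather than invoking its contrapositive to produce a counterexample database; the two organizations are equivalent). However, there is a genuine gap exactly where you flag one, and it is not a side issue but the technical heart of the proof. The step ``tracing the construction of $\barDB$, this homomorphism translates into a homomorphism from $\tilde{Q''}$ into $\DB$ itself'' fails for the naive construction whenever $Q$ is not simple: a homomorphism from $Q$ to $\barDB$ may send an atom $q$ over relation symbol $r$ to a tuple of $r^\onbarDB$ that was contributed by a \emph{different} atom $q'$ over $r$, and such a match has no counterpart in the query view $w_q$ over $\DB$. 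The same phenomenon threatens clause (ii) of legality (view consistency of the query views), which in your direction cannot be dismissed as vacuous since $Q^\onbarDB$ is precisely what you expect to be non-empty. Your first proposed repair (adding inclusion dependencies relating query views over the same relation symbol) does not obviously work and is not what is needed; your second (tagging constants per atom) is the right device, but naming it is not the same as making it work.

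The missing content is the following. One tags each constant $c$ in a tuple of $w_q^\onDB$ as the pair $\tuple{X,c}$, where $X$ is the variable of $q$ at that position, and populates $r^\onbarDB$ with the correspondingly tagged tuples. Then, given a homomorphism $h'$ from a core $Q'$ of $Q$ into the tagged database, the composition with the first projection $\pi_1$ (extracting the variable component of each tag) is a homomorphism from $Q'$ to $Q$; since $Q'$ is a core, its image is again a core $Q''$ and $h'\circ\pi_1$ is an isomorphism onto it. Only after precomposing with the inverse of this isomorphism does one obtain a map sending each variable $X$ of $Q''$ to a tagged value $\tuple{X,c}$, whose second projection is then a homomorphism from $\bigwedge\views(Q'')$ to $\DB$ --- and it is $Q''$, not the $Q''$ you started from, that witnesses $\overline Q^\onDB\neq\emptyset$ (harmless, since $\overline Q$ ranges over all cores). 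Without this isomorphism detour the back-translation does not go through. A clean way to finish in your contrapositive setup is: for a given finite $\DB$ satisfying $I$, either $\overline Q^\onDB\neq\emptyset$ and you are done, or $\overline Q^\onDB=\emptyset$, in which case the tagged $\barDB$ has $Q^\onbarDB=\emptyset$ by the argument above, is therefore trivially view consistent and hence legal, and is locally consistent --- contradicting your hypothesis. As written, your proof is an accurate outline with the decisive lemma-level work still to be done.
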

\begin{proof}
Recall that we assumed w.l.o.g. that no constants or repeated variables occur in the views in $\V$, while the query $Q$ has no restriction.
Moreover, each view $w\in \V$ is over a distinct relation symbol (let us denote it by $r_w$, in the following), so that there is a one-to-one
correspondence between relations and views. Therefore, $\V$ identifies a database schema $\DS$ consisting of such a relation $r_w$, for each
$w\in\V$, whose list of attributes is precisely the list of variables of the view $w$. Thus, $\V$ is by construction the set of canonical atoms
associated with $\DS$.\footnote{We remark that the assumption that no constant or repeated variables occur in views is just for the sake of
presentation. If this assumption does not hold, it is sufficient to define instead a database schema $\DS'$ obtained from $\V$ by removing such
useless occurrences, to use its canonical atoms, and to manage, after the described construction, the correspondence between relations in
$\DS'$ and views in $\V$.}

Let us equip $\DS$ with the following set $I$ of inclusion dependencies: For each pair of views $w,w'\in \W$ such that $S=\vars(w)\cap
\vars(w')\neq\emptyset$, $I$ contains the two inclusion dependencies $r_w[S]\subseteq r_{w'}[S]$ and $r_{w'}[S]\subseteq r_{w}[S]$.

Observe that, by the construction of $I$, for each database $\DB$ over $\DS$, $\lc(\W,\DB)$ holds if, and only if, $\DB$ satisfies $I$ and
$\DB\neq\emptyset$ (recall that $Q$ is connected and $\vars(Q)=\vars(\W)$, hence $\HG_\V$ is also connected because $\views(Q)\subseteq \V$).

For any set of atoms $D$, let us denote by $\bigwedge D$ the Boolean conjunctive query defined as the conjunction of all atoms in $D$. Let
$\overline Q = \bigvee_{Q'\in \cores(Q)}  (\bigwedge \views(Q'))$ be the union of (Boolean) canonical queries for $\DS$ obtained by considering
the cores of $Q$, and assume that there is no tree projection of $(\HGUnoSet,\HGDue)$, and hence of $(\HG_{\views(Q')},\HGDue)$, for each core
$Q'\in \cores(Q)$. Then, by Lemma~\ref{lem:claim:ID}, there exists a (finite) database $\DB_f\neq\emptyset$ that satisfies $I$ and such that
$\forall Q'\in \cores(Q)$, $(\bigwedge\views(Q'))^{\onDB_f} = \emptyset$. In particular, because this database satisfies $I$, $\lc(\W,\DB_f)$
holds.

From $\DB_f$, let us now build a new legal database instance $\DB_f'$  over the  vocabulary including both views and query atoms. This database
is obtained by slightly changing the relations in $\DB_f$ in order to keep the information about the (active) domains of the variables, and by
adding the relation instances for the query atoms in $Q$. Recall that more query atoms may share the same database relation.

Let $q\in\atoms(Q)$ be any query atom defined over a relation symbol $r$ of arity $\rho$, and let $r_{w_q}(X_1,\dots,X_n)\in \views(Q)$ be the
query view $w_q$ associated with $q$. Recall that both constants and repeated variables may occur in $q$, so that $\rho\geq n$. Let
$r_{w_q}(c_1,...,c_n)$ be any tuple in $\DB_f$. Then, $\DB_f'$ contains a tuple $r_{w_q}(\tuple{X_1,c_1},\dots,\tuple{X_n,c_n})$ in the
relation instance for the query view $w_q\in \V$. Moreover, for the relation $r$, $\DB_f'$ contains a tuple $r(v_1,\dots,v_\rho)$ defined as
follows. For each $i\in \{1,\dots,\rho\}$: if some constant term $u_i$ occurs in $q$ at position $i$, then $v_i=u_i$; if some variable $X_j$
occurs in $q$ at position $i$, then  $v_i=\tuple{X_j,c_j}$. Note that this value may occur in $r(v_1,\dots,v_\rho)$ at different positions, if
$X_j$ occurs more than once in $q$. Moreover, if the relation $r$ is shared by different query atoms, such a tuple $r(v_1,\dots,v_\rho)$ will
be available to every atom defined over $r$, besides $q$.
Finally, for any (non-query view) $w$ over a relation $r_w$ and any tuple $r_{w}(c_1,...,c_n)\in\DB_f$, $\DB_f'$ contains a tuple
$r_{w}(\tuple{X_1,c_1},\dots,\tuple{X_n,c_n})$. No further tuples belong to $\DB_f'$.

As $\lc(\W,\DB_f)$ holds, we immediately have that $\lc(\W,\DB_f')$ holds, too. We now claim that $Q'^{\fontOn \DB_f'}=\emptyset$, for each
subquery $Q'\in \cores(Q)$, which entails $Q^{\fontOn \DB_f'}=\emptyset$. Indeed, assume for the sake of contradiction that there is a core
$Q'$ such that $Q'^{\fontOn\DB_f'}\neq \emptyset$, and let $h'$ be a homomorphism from $Q'$ to $\DB_f'$. Define $\pi_1$ and $\pi_2$ to be the
projections mapping a binary tuple $\tuple{u,c}$ to its first element $u$ and to its second element $c$, respectively; moreover, for a plain
(term) element $u$, $\pi_1(u)=\pi_2(u)=u$. In particular, for any tuple $r(v_1,\dots,v_\rho)$ in $\DB_f'$, where any value $v_i$ is either of
the form $\tuple{u_i,c_i}$ or of the form $u_i$  with $u_i$ being a constant term, we have $\pi_1(r(v_1,\dots,v_\rho)) =
r(\pi_1(v_1),...,\pi_1(v_\rho)) = r(u_1,...,u_\rho)$. By construction of the tuples in $\DB_f'$, the composition $h'\circ \pi_1$ is a
homomorphism from $Q'$ to $Q$ (if we obtain a certain tuple of terms after applying $\pi_1$, there must exist some query atom with that tuple
of terms). But, since $Q'$ is a core, we have that the image $Q''=(h'\circ \pi_1)(Q')$ is also a core in $\cores(Q)$, and thus $h'\circ \pi_1$
is actually an isomorphism. In particular, $h''=((h'\circ \pi_1)^{-1}\circ h')$ is now such that $h''(u_i)=\tuple{u_i,c_i}$. In particular,
whenever $u_i=X$, for some variable $X\in\vars(Q'')$,
$h''(X)=\tuple{X,c_i}$. 
It follows that $h''$ is a homomorphism from $Q''$ to $\DB_f'$. Then, we immediately get that $h''\circ \pi_2$ is a homomorphism from
$\bigwedge \views(Q'')$ to $\DB_f$. Indeed, by construction, for each atom $q\in\atoms(Q'')$ defined on a relation $r$, if
$r(u_1,\dots,u_\rho)\in \DB_f'$, then $\pi_2(r_{w_q}(\bar u_1,\dots,\bar u_n))\in \DB_f$ (with $\tuple{\bar u_1,\dots,\bar u_n}$ being the
tuple derived from $\tuple{u_1,\dots,u_\rho}$ by inverting the above construction, i.e., by eliminating constants and repeated variables).
However, the existence of this homomorphism  contradicts the fact that $(\bigwedge \views(Q''))^{\fontOn \DB_f}=\emptyset$ holds by the
construction of $\DB_f$.

Finally, note that $\DB_f'$ is legal. Indeed, for each query view $w_q$, by construction  $w_q^{\fontOn \DB_f'} \subseteq q^{\fontOn \DB_f'}$,
and $w_q$ is trivially view consistent because $Q^{\fontOn \DB_f'}=\emptyset$.~\hfill~$\Box$
\end{proof}

A consequence of the above result and Theorem~\ref{thm:TPimpliesLCtoGC-decision} is the precise characterization of the power of local
consistency, as far as the decision problem is concerned. This characterization was so far only known for the special case of treewidth and for
structures of fixed arity~\cite{ABD07}, where, however, all the cores enjoy the same structural properties (and hence such results are defined
in terms of ``the core'' of the query).

\begin{corollary}\label{cor:Decision}
The following are equivalent:

\vspace{-1mm}
\begin{enumerate}
\item[\em (1)] For every database $\DB$, \emph{$\lc(\W,\DB)$} entails \emph{$Q^\onDB \neq \emptyset$}.

\item[\em (2)] There is a subquery $Q' \homEquiv Q$ for which $(\HGUnoSet,\HGDue)$ has a tree projection.

\item[\em (3)] There is a core $Q''$ of $Q$ for which $(\HG_{Q''},\HGDue)$ has a tree projection.
\end{enumerate}
\end{corollary}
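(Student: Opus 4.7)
The plan is to prove the chain of implications $(3) \Rightarrow (2) \Rightarrow (1) \Rightarrow (3)$, assembling the results already proved in this section. Almost all the work has been done: Theorem~\ref{thm:TPimpliesLCtoGC-decision} is exactly $(2) \Rightarrow (1)$, and Theorem~\ref{thm:svuotamento} is essentially the contrapositive of $(1) \Rightarrow (3)$. What remains is to glue these together and to observe the easy implication $(3) \Rightarrow (2)$.

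First, I would dispatch $(3) \Rightarrow (2)$ by recalling the definition of core: if $Q''\in\cores(Q)$, then $\atoms(Q'')\subseteq\atoms(Q)$, so the identity embedding is a homomorphism from $Q''$ to $Q$, and by the defining property of cores there is also a homomorphism from $Q$ to $Q''$. Hence $Q''\homEquiv Q$, so $Q''$ itself witnesses $(2)$.

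Next, $(2) \Rightarrow (1)$ is a direct restatement of Theorem~\ref{thm:TPimpliesLCtoGC-decision}, which guarantees that whenever $(\HGUnoSet,\HGDue)$ admits a tree projection for some $Q'\homEquiv Q$, local consistency of $\V^\onDB$ forces $Q^\onDB\neq\emptyset$ on every legal database.

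Finally, $(1) \Rightarrow (3)$ is obtained by contraposition from Theorem~\ref{thm:svuotamento}: if no core $Q''$ of $Q$ is such that $(\HG_{Q''},\HGDue)$ admits a tree projection, then Theorem~\ref{thm:svuotamento} exhibits a legal database $\DB$ on which $\lc(\W,\DB)$ holds yet $Q^\onDB=\emptyset$, contradicting $(1)$. Since each step is already in place, no real obstacle remains; the only point worth stressing in the write-up is that Theorem~\ref{thm:svuotamento}'s conclusion is exactly the negation of the ``decision-problem'' version of $(1)$, so the two conditions match without any additional adjustment (e.g., no tightening of the notion of legal database is needed).
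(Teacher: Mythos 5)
Your proof is correct and follows exactly the paper's own argument: $(2)\Rightarrow(1)$ via Theorem~\ref{thm:TPimpliesLCtoGC-decision}, $(1)\Rightarrow(3)$ by contraposition from Theorem~\ref{thm:svuotamento}, and $(3)\Rightarrow(2)$ since every core is homomorphically equivalent to $Q$. Nothing is missing.
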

\begin{proof}
From Theorem~\ref{thm:TPimpliesLCtoGC-decision}, we know that (2) implies (1). Theorem~\ref{thm:svuotamento} entails that (1) implies (3).
Finally, (3) implies (2) because any core of $Q$ is homomorphically equivalent to $Q$.~\hfill~$\Box$
\end{proof}

\begin{figure*}[t]
  \centering
  \includegraphics[width=0.7\textwidth]{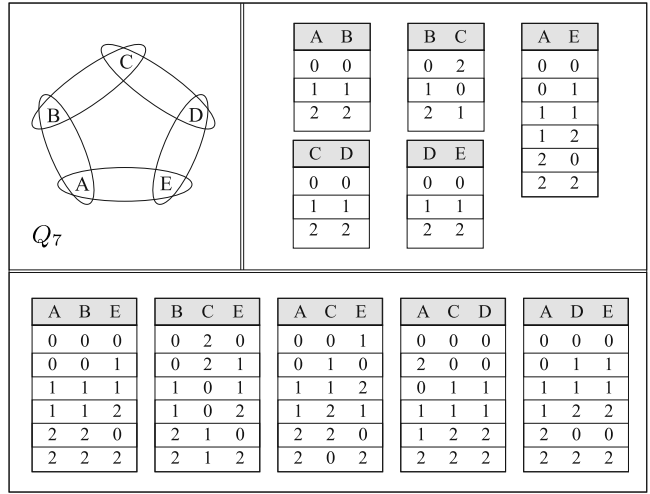}
  \caption{The (hypergraph of the) query $Q_7$, the non-query views in $\V_7$, and their respective tuples in the database $\DB_7$ of Example~\ref{ex:NonMonotone}.}\label{fig:ACE}
\end{figure*}

Eventually, we can specialize Corollary~\ref{cor:Decision} to the setting of simple queries (considered in many seminal papers about tree
projections, as \cite{GS84}), where every relation symbol occurs at most once in the query and thus the whole query is its (unique) core.

\begin{corollary}
Let $Q$ be a simple query. Then, the following are equivalent:

\vspace{-1mm}
\begin{enumerate}
\item[\em (1)] For every database $\DB$, \emph{$\lc(\W,\DB)$} entails \emph{$Q^\onDB \neq \emptyset$}.

\item[\em (2)] $(\HG_Q,\HGDue)$ has a tree projection.
\end{enumerate}
\end{corollary}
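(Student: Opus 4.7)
The plan is to derive this corollary as an almost immediate consequence of Corollary~\ref{cor:Decision}. The key observation will be that for a simple query $Q$, the query itself is the unique core (up to trivial considerations). Once this is established, condition (3) of Corollary~\ref{cor:Decision}---namely, the existence of a core $Q''$ of $Q$ with $(\HG_{Q''},\HGDue)$ admitting a tree projection---collapses to the statement that $(\HG_Q,\HGDue)$ has a tree projection, yielding precisely condition (2) here.

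The main step is therefore the following claim: if $Q$ is simple, then the only core of $Q$ is $Q$ itself. I would prove this by taking any subquery $Q'$ with $\atoms(Q')\subsetneq \atoms(Q)$ and showing that no homomorphism from $Q$ to $Q'$ exists. Pick some atom $r(\mathbf{u})\in \atoms(Q)\setminus \atoms(Q')$. Since $Q$ is simple, the relation symbol $r$ occurs exactly once in $Q$, and not at all in $Q'$. Hence no homomorphism $h$ from $Q$ to $Q'$ can map the atom $r(\mathbf{u})$ to any atom of $Q'$, contradicting the definition of homomorphism on relational structures. Consequently, $Q$ itself is the unique minimal subquery to which $Q$ maps homomorphically, so $\cores(Q)=\{Q\}$.

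With this claim in hand, the equivalence follows by specializing Corollary~\ref{cor:Decision}. Specifically, (2) $\Rightarrow$ (1) is just the statement that the existence of a tree projection for the (unique) core of $Q$ entails that local consistency implies query non-emptiness, which is exactly Theorem~\ref{thm:TPimpliesLCtoGC-decision} applied with $Q'=Q$. Conversely, (1) $\Rightarrow$ (2) follows from Corollary~\ref{cor:Decision}: condition (1) implies the existence of a core $Q''$ of $Q$ with a tree projection w.r.t.~$\HGDue$, and since $Q''=Q$ by the claim above, $(\HG_Q,\HGDue)$ has a tree projection.

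I do not expect any real obstacle here; the only subtle point is ensuring that the claim about cores is stated correctly---cores are defined up to isomorphism of relational structures, but in the simple case there is literally no proper subquery that can serve as a homomorphic image, so the argument above is clean and does not require any additional machinery beyond what has been set up in the preliminaries.
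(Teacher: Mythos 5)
Your proof is correct and follows exactly the paper's route: the paper derives this corollary by observing that a simple query is its own unique core (since dropping any atom removes the only occurrence of its relation symbol, killing any homomorphism from $Q$), and then specializing Corollary~\ref{cor:Decision}. Your argument just spells out this core-uniqueness claim in slightly more detail than the paper does.
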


\begin{example}\label{ex:NonMonotone}\em
Consider the query
\[
\begin{array}{ll}
Q_7: & r_1(A,B)\wedge r_2(B,C) \wedge r_3(C,D)\wedge r_4(D,E)\wedge r_5(A,E),
\end{array}
\]
the set of views $\V_7=\{ v_1(A,B,E), v_2(B,C,E),$ $v_3(A,C,E),$ $v_4(A,C,D), v_5(A,D,E) \}$, and the database instance $\DB_7$ depicted in
Figure~\ref{fig:ACE}. It is easy to check that $(\V_7\cup \atoms(Q_7))^{\mbox{\rm \fontOn DB}_7}$ is local consistent but $Q_7^{\mbox{\rm
\fontOn DB}_7} =\emptyset$. Indeed, it can be checked that $(\HG_{Q_7},\HG_{\V_7})$ does not have a tree projection.
 \hfill $\lhd$
\end{example}

\medskip

\noindent\textbf{Computation Problem.} We next complete the picture and give the conditions that precisely characterize those cases where
answers of the query over output variables covered by some view may be immediately obtained by enforcing local consistency.
Again, we start with the problem where we are interested in query answers over some arbitrary set of output variables. In this case, requiring
that just some view covering $O$ is trustable is sufficient to allow all such answers to be immediately obtained.

\begin{theorem}\label{thm:correttezzaViews}
Let $O$ be any set of variables occurring in some view in $\W$. Then, the following are equivalent:

\vspace{-1mm}
\begin{enumerate}
\item[\em (1)] For each database $\DB$ such that \emph{$\lc(\W,\DB$)} holds, \emph{$w^\onDB[O]\subseteq Q^\onDB[O],$} for every
    $w\in\vcovers(O)$. If there is a view consistent $\bar w^\onDB$ with $\bar w\in\vcovers(O)$, then
         \emph{$w^\onDB[O] = Q^\onDB[O],$} for every $w\in\vcovers(O)$.

 \item[\em (2)] The set of variables $O$ is $\tpCovered$ in $Q$ (w.r.t.~$\V$).
\end{enumerate}
\end{theorem}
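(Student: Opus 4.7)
The direction (2)$\Rightarrow$(1) is already available: it is exactly the content of Lemma~\ref{lem:TPimpliesLCtoGC-specific}, which says that if $O$ is $\tpCovered$ in $Q$, then whenever $\lc(\W,\DB)$ holds we have $w^\onDB[O]\subseteq Q^\onDB[O]$ for every $w\in\vcovers(O)$, with equality whenever some $\bar w\in\vcovers(O)$ is view consistent. So the plan focuses on the converse (1)$\Rightarrow$(2), which I will prove by contrapositive.

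Assume $O$ is \emph{not} $\tpCovered$ in $Q$ w.r.t.\ $\V$, meaning that for every core $Q'$ of $Q_e=Q\wedge\atom(O)$, the pair $(\HG_{Q'},\HG_\V)$ admits no tree projection. I will produce a legal database $\DB$ such that $\lc(\W,\DB)$ holds but some $w\in\vcovers(O)$ has a tuple whose restriction to $O$ does not occur in $Q^\onDB[O]$. Concretely, pick an arbitrary $w_0\in\vcovers(O)$, plant in $w_0^\onDB$ a distinguished tuple $t_0$ with fresh constants $c_X$ for the variables $X\in\vars(w_0)$, and take the ``$O$-part'' of $t_0$ to be the intended witness. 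The goal is to build the rest of $\DB$ so that local consistency holds globally while no core of $Q_e$ has an answer that maps $\atom(O)$ to $t_0[O]$.

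The main technical step is to apply the chase/finite-controllability machinery used in Theorem~\ref{thm:svuotamento}, but in the enlarged schema. Let $\V^+=\V\cup\{\atom(O)\}$ with associated canonical atoms $\WDS$ on the induced schema $\DS^+$, and equip $\DS^+$ with the pairwise inclusion dependencies that make ``satisfies $I$'' coincide with ``$\W^+$-local consistency plus the $\atom(O)$-fact''. Consider the union of canonical queries
$\overline Q = \bigvee_{Q'\in\cores(Q_e)}\bigl(\bigwedge \views(Q')\bigr)$.
Start the chase from $\DB_0=\{\atom(O)(c_{X_1},\dots,c_{X_k})\}\cup\{w_0(\ldots)\}$ (using the same fresh constants $c_X$). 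If every finite, non-empty, $I$-satisfying extension of $\DB_0$ answered $\overline Q$ positively, then Lemma~\ref{lem:claim:ID} would yield a tree projection of $(\HG_{Q'},\HG_{\V^+})$ for some core $Q'\in\cores(Q_e)$, and since $\atom(O)\in\atoms(Q')$ this projection can be trimmed to live within $\HG_\V$ (any hyperedge corresponding to $\atom(O)$ is replaced by one coming from $w_0\in\V$ which covers $O$). This would contradict the hypothesis that $O$ is not $\tpCovered$. Hence some finite non-empty $\DB_f$ satisfying $I$ exists with $(\bigwedge\views(Q'))^{\onDB_f}=\emptyset$ for every $Q'\in\cores(Q_e)$.

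From $\DB_f$ I then construct a legal database $\DB_f'$ over the original vocabulary (views and query atoms) following the same relabeling trick as in the proof of Theorem~\ref{thm:svuotamento}: replace each chase constant $c$ assigned to attribute $X$ by the pair $\tuple{X,c}$ in view relations, and populate the relation of each query atom $q$ accordingly so that $Q$'s atoms inherit the correct tuples and share relations properly in case of repeated symbols. The fixed atom $\atom(O)$ only acts as a bookkeeping device and is dropped in this conversion, leaving the planted tuple $t_0$ sitting in $w_0^{\onDB_f'}$. Local consistency of $\W^{\onDB_f'}$ follows from the inclusion dependencies, and the ``$\pi_1/\pi_2$'' argument of Theorem~\ref{thm:svuotamento} shows that any homomorphism from a core of $Q_e$ into $\DB_f'$ that sends $\atom(O)$ to $t_0[O]$ would yield, after projection, a homomorphism from $\bigwedge\views(Q')$ into $\DB_f$ for some core $Q'\in\cores(Q_e)$, contradicting the emptiness guaranteed above. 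Consequently $t_0[O]\notin Q^{\onDB_f'}[O]$ while $t_0[O]\in w_0^{\onDB_f'}[O]$, violating (1).

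The hard part, as in Theorem~\ref{thm:svuotamento}, is the bookkeeping in the final paragraph: ensuring that planting $t_0$ in $w_0$ really forces the witness to land in $O$ after chase and relabeling, and that the conversion back to a legal database for $(Q,\V)$ preserves both local consistency of $\V$ and the non-existence of any core homomorphism that would ``explain'' $t_0[O]$ as a legitimate $Q$-answer. The second clause of (1) then drops out from the first: under $\lc(\W,\DB)$, all views in $\vcovers(O)$ agree on $O$-projections, so if one of them $\bar w$ is view consistent we obtain $Q^\onDB[O]\subseteq\bar w^\onDB[O]=w^\onDB[O]$ for every $w\in\vcovers(O)$, and combined with (1)'s inclusion this gives equality.
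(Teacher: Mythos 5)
Your overall architecture coincides with the paper's: (2)$\Rightarrow$(1) is Lemma~\ref{lem:TPimpliesLCtoGC-specific}, and (1)$\Rightarrow$(2) is proved by contrapositive using the inclusion-dependency/chase machinery on the extended query $Q_e=Q\wedge\atom(O)$ with the extended view set $\W\cup\{\atom(O)\}$. The paper, however, does not re-run the chase: it simply invokes Theorem~\ref{thm:svuotamento} on the pair $(Q_e,\W\cup\{\atom(O)\})$ as a black box to obtain a locally consistent database $\DB'$ with $Q_e^\onDBp=\emptyset$. Your ``planted tuple'' $t_0$ is also unnecessary: once $\lc$ holds and $Q_e^\onDBp=\emptyset$, one has $Q^\onDBp[O]\cap\atom(O)^\onDBp=\emptyset$ with $\atom(O)^\onDBp\neq\emptyset$, so \emph{every} tuple of $w^\onDBp[O]=\atom(O)^\onDBp$, for every $w\in\vcovers(O)$, already witnesses the failure of the inclusion. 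These are presentational differences, not errors.

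The genuine gap is legality. In Theorem~\ref{thm:svuotamento} the relabelled database is legal essentially for free, because there $Q$ itself is empty on it, so view consistency of the query views ($w_q^\onDB\supseteq Q^\onDB[q]$) is vacuous. Here only $Q\wedge\atom(O)$ is guaranteed to be empty; $Q$ itself will in general have answers on the constructed database (indeed, unless every core of $Q$ lacks a tree projection --- a case the paper dispatches separately at the outset --- local consistency forces $Q^\onDB\neq\emptyset$ by Theorem~\ref{thm:TPimpliesLCtoGC-decision}). Those answers need not project into the query-view relations produced by the chase, so your database need not be legal w.r.t.~$Q$ and $\V$, and a non-legal database is not a counterexample to (1), which quantifies over legal databases only. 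The paper repairs this with an explicit second stage that you omit: it defines $a^\onDB=a^\onDBp\cup Q^\onDBp[a]$ for every view $a$ (leaving the query-atom relations untouched), and then verifies that local consistency is preserved (the added tuples are projections of total homomorphisms), that $Q^\onDB=Q^\onDBp$, that the resulting database is legal, and that the tuples of $\atom(O)^\onDBp$ --- disjoint from $Q^\onDBp[O]$ --- survive, so that $Q^\onDB[O]\subset w^\onDB[O]$ still holds for every $w\in\vcovers(O)$. Your proof needs this patching step; asserting that the Theorem~\ref{thm:svuotamento} relabelling ``preserves legality'' does not suffice, because the argument that made it legal there is unavailable here.
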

\begin{proof}
First observe that (2) entails (1), by Lemma~\ref{lem:TPimpliesLCtoGC-specific}. Then, in order show that (1) entails (2), it suffices to
consider the case where there exists $Q''\in\cores(Q)$ for which $(\HG_{Q''},\HGDue)$ has a tree projection. Otherwise, we immediately get the
contradiction that all views are incorrect for some database, from Theorem~\ref{thm:svuotamento}.
Consider the new query $Q_e=Q\wedge \atom(O)$, and assume by contradiction that $O$ is not $\tpCovered$ in $Q$. That is, for every
$Q'\in\cores(Q_e)$, $(\HGUnoSet,\HGDue)$ has no tree projections. We show that there exists a database $\DB$ such that $\lc(\W,\DB)$ but
$Q^\onDB[O] \subset a^\onDB[O]$, for every $a\in \vcovers(O)$, where  $\vcovers(O)\neq\emptyset$, by hypothesis.

Let $\W_e = \W \cup \{\atom(O)\}$. Since no core of $Q_e$ has tree projections, by Theorem~\ref{thm:svuotamento} it follows that there is a
(non-empty legal) database $\DB'$ such that $\lc(\W_e,\DB')$, but $Q_e^\onDBp = \emptyset$.
Now define a new database $\DB$ such that, for every $a\in \W_e$, $a^\onDB = a^\onDBp \cup Q^\onDBp[a]$, and where the relations in $\DB'$ over
which the original query atoms are defined are just copied into $\DB$.
By construction, $\lc(\W_e,\DB)$ holds, because $\lc(\W_e,\DB')$ holds and the tuples possibly added to any view are projections of mappings
over the full set of variables, as they are obtained from the total homomorphisms in $Q^\onDBp$. Moreover, note that only views are modified,
as no tuple is added to the relations over which the original atoms in the query are defined. Thus, $Q^\onDB=Q^\onDBp$ holds.

Observe that $\DB$ is a legal database instance w.r.t.~$Q$. Indeed, the relations for query views are still subsets of the relations of the
original query atoms (as in $\DB'$). Moreover, by construction, they include all tuples that are part of some query answer, and thus all query
views are view consistent w.r.t.~$Q$.

Recall now that we are considering the case where some cores of $Q$ have tree projections, and $\lc(\W_e,\DB)$ and hence $\lc(\W,\DB)$ hold.
From Theorem~\ref{thm:TPimpliesLCtoGC-decision}, it follows that $Q^\onDB = Q^\onDBp \neq\emptyset$. However, $(Q\wedge \atom(O))^\onDBp =
\emptyset$. It follows that all homomorphisms that are answers of $Q$ over $\DB'$ does not satisfy $\atom(O)$, that is, $Q^\onDBp[O]\cap
\atom(O)^\onDBp =\emptyset$, and recall that $\atom(O)^\onDBp \neq\emptyset$, because $\lc(\W_e,\DB')$ holds.

Therefore, we get the proper inclusion  $Q^\onDB[O] \subset \atom(O)^\onDB$. Indeed, $\atom(O)^\onDBp$ is not empty and all its tuples, which
do not belong to $Q^\onDBp[O]=Q^\onDB[O]\neq\emptyset$, are kept in $\atom(O)^\onDB$. Finally, since $\W_e^\onDB$ and hence $\W^\onDB$ are
locally consistent, this also entails $\atom(O)^\onDB= a^\onDB[O]$ and thus $Q^\onDB[O] \subset a[O]^\onDB$, for each view $a\in
\vcovers(O)$.~\hfill~$\Box$
\end{proof}

The following corollary is the specialization to the case where we are interested in output variables covered by some query atom.

\begin{corollary}\label{cor:General}
The following are equivalent:

\vspace{-1mm}
\begin{enumerate}
\item[\em (1)] For every database $\DB$, \emph{$\lc(\W,\DB)$} entails \emph{$\gc(\W,\DB,Q)$}.

\item[\em (2)] For each $q\in \atoms(Q)$, $\vars(q)$ is $\tpCovered$ in $Q$ (w.r.t.~$\V$).
\end{enumerate}
\end{corollary}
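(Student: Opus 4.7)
My plan is to derive this corollary as a direct specialization of the more general Theorem~\ref{thm:correttezzaViews}, applied once for each query atom $q$ with the output set $O=\vars(q)$. Since both directions of the corollary concern the same kind of object, the work is essentially bookkeeping on top of the already-proved equivalence.

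For the direction (2)$\Rightarrow$(1), there is nothing new to do: this is exactly Theorem~\ref{thm:TPimpliesLCtoGC}, which states that if $\vars(q)$ is $\tpCovered$ for every $q\in\atoms(Q)$, then local consistency entails global consistency. So I would simply cite that theorem.

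For the converse direction (1)$\Rightarrow$(2), I would fix an arbitrary $q\in\atoms(Q)$ and set $O=\vars(q)$. Observe that the query view $w_q\in\views(Q)\subseteq\V$ satisfies $\vars(w_q)=O$, so in particular $w_q\in\vcovers(O)$ and $\vcovers(O)\neq\emptyset$. Because we only consider legal databases, $w_q^\onDB$ is always view consistent with respect to $Q$. Now, for any database $\DB$ with $\lc(\W,\DB)$, hypothesis (1) gives $\gc(\W,\DB,Q)$, which by definition means $w_q^\onDB=Q^\onDB[q]=Q^\onDB[O]$. For any other $w\in\vcovers(O)$, local consistency applied to the pair $w,w_q$ together with $\vars(w_q)=O\subseteq\vars(w)$ yields $w^\onDB[O]=w_q^\onDB$: indeed, one inclusion follows from $w^\onDB=(w\wedge w_q)^\onDB[w]$ and the other from $w_q^\onDB=(w_q\wedge w)^\onDB$. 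Combining, $w^\onDB[O]=Q^\onDB[O]$ for every $w\in\vcovers(O)$. This is precisely condition (1) of Theorem~\ref{thm:correttezzaViews} for the set $O$ (with $w_q$ playing the role of the view-consistent $\bar w$), so Theorem~\ref{thm:correttezzaViews} gives that $O=\vars(q)$ is $\tpCovered$ in $Q$. Since $q$ was arbitrary, condition (2) of the corollary follows.

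There is no serious obstacle: the main conceptual step, namely the delicate construction of a witness database showing that failure of $\tpCovered$ forces the views to over-approximate $Q^\onDB[O]$, has already been carried out inside the proof of Theorem~\ref{thm:correttezzaViews} (which in turn relies on Theorem~\ref{thm:svuotamento} and the finite-controllability argument of Lemma~\ref{lem:claim:ID}). The only point worth double-checking is that the query view $w_q$ is always legitimately available as a view in $\vcovers(\vars(q))$ and is guaranteed to be view consistent on legal databases; both facts were built into the definition of a view system and of a legal database, so the reduction to Theorem~\ref{thm:correttezzaViews} goes through cleanly.
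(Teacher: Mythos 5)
Your proof is correct and follows essentially the same route as the paper, which likewise obtains (2)$\Rightarrow$(1) from Theorem~\ref{thm:TPimpliesLCtoGC} and (1)$\Rightarrow$(2) from Theorem~\ref{thm:correttezzaViews}, using the facts that the query view $w_q$ covers $\vars(q)$ and is view consistent on every legal database. The paper states this in one line; you merely spell out the bookkeeping (that global consistency plus local consistency forces $w^\onDB[\vars(q)]=Q^\onDB[\vars(q)]$ for all $w\in\vcovers(\vars(q))$), which is exactly the intended reduction.
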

\begin{proof}
Since query views covers the variables of query atoms and are always view consistent w.r.t.~$Q$ in any legal database, the statement
immediately follows from Theorem~\ref{thm:correttezzaViews} and Theorem~\ref{thm:TPimpliesLCtoGC}.~\hfill~$\Box$
\end{proof}

The specialization of Corollary~\ref{cor:General} to the setting where every relation symbol occurs at most once in the query provides the
answer to the question posed by \cite{GS84}.

\begin{corollary}
Let $Q$ be a simple query. Then, the following are equivalent:

\vspace{-1mm}
\begin{enumerate}
\item[\em (1)] For every database $\DB$, \emph{$\lc(\W,\DB)$} entails \emph{$\gc(\W,\DB,Q)$}.

\item[\em (2)] $(\HG_Q,\HGDue)$ has a tree projection.
\end{enumerate}
\end{corollary}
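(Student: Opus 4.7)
The plan is to derive this corollary as an immediate consequence of Corollary~\ref{cor:General}, which already establishes that $\lc(\W,\DB) \Rightarrow \gc(\W,\DB,Q)$ (for every legal $\DB$) if, and only if, $\vars(q)$ is $\tpCovered$ in $Q$ w.r.t.~$\V$ for every $q \in \atoms(Q)$. Thus, it suffices to show that, \emph{when $Q$ is simple}, this tp-coverage condition is equivalent to the existence of a tree projection of $(\HG_Q,\HGDue)$. The direction $(2)\Rightarrow(1)$ is already delivered by Fact~\ref{fact:tpcoveredAndTP}, so the only real content is the converse.

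For the converse, the key structural observation I would establish first is the following rigidity property: if every relation symbol in $Q$ occurs at most once, then any homomorphism $h$ from $Q$ to itself must map every atom $r_i(\mathbf{u}_i) \in \atoms(Q)$ to the unique atom in $Q$ built on $r_i$, namely to itself, so $h$ is forced to be the identity on $\vars(Q)$. In particular, $Q$ is its own (unique, up to isomorphism) core. The same argument applies verbatim to the extended query $Q \wedge \atom(O)$ used in Definition~\ref{def:tpCovered}: the fresh symbol of $\atom(O)$ occurs only once, each original relation symbol still occurs only once, and therefore $Q \wedge \atom(O)$ is itself a core.

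With this in hand, I would pick any $q \in \atoms(Q)$ and set $O := \vars(q)$. By hypothesis $O$ is $\tpCovered$, so there exists a core $Q_p$ of $Q \wedge \atom(O)$ with a tree projection of $(\HG_{Q_p},\HGDue)$. By the rigidity observation above, $Q_p = Q \wedge \atom(O)$, whose associated hypergraph is $\HG_Q \cup \{O\}$. But $O=\vars(q) \in \edges(\HG_Q)$ already, so $\HG_{Q_p} = \HG_Q$. Hence the tree projection of $(\HG_{Q_p},\HGDue)$ is a tree projection of $(\HG_Q,\HGDue)$, yielding $(2)$.

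I expect the only subtle point to be the rigidity argument for $Q \wedge \atom(O)$: one must verify carefully that no homomorphism from $Q \wedge \atom(O)$ to a proper sub-structure can exist, including checking that the fresh atom $\atom(O)$ cannot be ``absorbed'' by another atom (it cannot, since its relation symbol is fresh) and that each original atom is pinned to itself by the uniqueness of its relation symbol. Once this is made explicit, the rest of the argument is a direct chaining of Corollary~\ref{cor:General}, Fact~\ref{fact:tpcoveredAndTP}, and the identification $\HG_{Q_p}=\HG_Q$.
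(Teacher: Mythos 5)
Your proof is correct and follows exactly the route the paper intends: the paper states this corollary without proof, as an immediate specialization of Corollary~\ref{cor:General} to simple queries, and the content you supply --- that a simple query, and likewise its extension $Q\wedge\atom(\vars(q))$ by a fresh atom, is rigid and hence its own unique core, so that $\tpCovered$-ness of each $\vars(q)$ collapses to the existence of a tree projection of $(\HG_Q,\HGDue)$ --- is precisely the justification being left implicit. No gaps.
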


Finally, we point out that Theorem~\ref{thm:correttezzaViews} may be equivalently stated in terms of any arbitrary (legal) database $\DB$, by
considering its reduct $\red(\V,\DB)$ obtained enforcing local consistency.

\begin{corollary}\label{cor:correttezzaViews}
Let $O$ be any set of variables occurring in some view in $\W$. Then, the following are equivalent:

\vspace{-1mm}
\begin{enumerate}
\item[\em (1)] For each database $\DB$, \emph{$w^\onDBp[O]\subseteq Q^\onDB[O],$} for every $w\in\vcovers(O)$, where $\DB'=\red(\V,\DB)$.
    If there is a view consistent $\bar w^\onDB$ with $\bar w\in\vcovers(O)$, then
         \emph{$w^\onDBp[O] = Q^\onDB[O],$} for every $w\in\vcovers(O)$.

 \item[\em (2)] The set of variables $O$ is $\tpCovered$ in $Q$ (w.r.t.~$\V$).
\end{enumerate}
\end{corollary}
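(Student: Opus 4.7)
The plan is to deduce this corollary from Theorem~\ref{thm:correttezzaViews} by exploiting two well-known properties of the reduct operator, namely that $\red(\V,\DB)$ always makes the views locally consistent (when non-empty) and that taking the reduct preserves the answers of $Q$, i.e., $Q^\onDBp = Q^\onDB$, since semijoin propagation only removes tuples that cannot participate in any homomorphism from $Q$ (the views are at least as permissive as the query views, which in turn are upper bounds on the query atoms in any legal database).

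For the direction (2) $\Rightarrow$ (1), I would first verify that $\DB' = \red(\V,\DB)$ is itself a legal database: the query-view relations in $\DB'$ are subsets of those in $\DB$, hence still subsets of the corresponding query-atom relations; view consistency of query views is inherited from $\DB$ because every tuple in $Q^\onDB[w_q] = Q^\onDBp[w_q]$ participates in a full homomorphism and therefore survives all semijoins. Then, since $\lc(\V,\DB')$ holds and $Q^\onDBp = Q^\onDB$, applying Theorem~\ref{thm:correttezzaViews} to $\DB'$ directly yields $w^\onDBp[O] \subseteq Q^\onDBp[O] = Q^\onDB[O]$ for every $w \in \vcovers(O)$. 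For the equality clause, I would argue that if some $\bar w \in \vcovers(O)$ is view consistent over $\DB$ (i.e., $\bar w^\onDB \supseteq Q^\onDB[\bar w]$), then it is also view consistent over $\DB'$: every tuple in $Q^\onDB[\bar w] = Q^\onDBp[\bar w]$ is the projection of some full solution and is therefore never removed by the reducing procedure, so $\bar w^\onDBp \supseteq Q^\onDBp[\bar w]$. Theorem~\ref{thm:correttezzaViews} then gives equality on the reduct.

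For the direction (1) $\Rightarrow$ (2), I would restrict (1) to those databases $\DB$ that are already locally consistent, in which case the reducing procedure is a no-op, i.e., $\red(\V,\DB) = \DB$. Under this restriction, condition (1) becomes literally the statement of Theorem~\ref{thm:correttezzaViews}(1), so by that theorem we conclude that $O$ is $\tpCovered$ in $Q$.

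The main thing to be careful about, rather than a real obstacle, is the bookkeeping around view consistency: one must observe that the reduct can only shrink relations but never discards tuples of the form $Q^\onDB[w]$, so both the legality of $\DB'$ and the preservation of the view-consistency hypothesis on $\bar w$ go through cleanly. Once this is in place, the corollary is essentially a reformulation of Theorem~\ref{thm:correttezzaViews} in terms of arbitrary databases via their reducts.
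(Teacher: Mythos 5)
Your proof is correct and follows essentially the same route as the paper's: both directions reduce to Theorem~\ref{thm:correttezzaViews}, using for one direction that the reduct of a legal database is again a legal, locally consistent database (because the tuples deleted by the semijoins do not contribute to any query answer, so query views stay view consistent), and for the other that a locally consistent database coincides with its own reduct. The only cosmetic difference is that you argue the implication from (1) to (2) directly by restricting attention to locally consistent databases, whereas the paper states the same observation contrapositively.
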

\begin{proof}
 $(1) \Rightarrow (2)$ follows from the corresponding implication $(1) \Rightarrow (2)$ in Theorem~\ref{thm:correttezzaViews} which entails that, whenever $O$ is not $\tpCovered$ in $Q$ (w.r.t.~$\V$), there exists a locally-consistent legal database $\DB$ and a view $w\in\vcovers(O)$ such that $w^\onDB[O]\supset Q^\onDB[O]$.
  In fact, because it is locally consistent, $\DB=\red(\V,\DB)$ holds.

 $(2) \Rightarrow (1)$ follows from the corresponding implication in Theorem~\ref{thm:correttezzaViews}, and from the fact that the only tuples occurring in $\DB$ and deleted in its reduct $\DB'$ do not participate in any query answer. Therefore $\DB'$ is a legal locally consistent database.~\hfill~$\Box$
\end{proof}

\section{Application to Structural Decomposition Methods}\label{SDM}

In this section, we specialize our results about consistency properties and tree projections to the purely structural decomposition methods
described in the literature (both in the database and in the constraint satisfaction area), because all of them can be recast in terms of tree
projections. In fact, each of them can be seen as a method to define suitable set of views to be exploited for solving the given query
answering instance. Here, views represent subproblems over subsets of variables, whose solutions can be computed efficiently.

We also provide further results that hold on such special cases only, such as the positive answer to the question in~\cite{CD05} about
$k$-local consistency and generalized hypertree decomposition, and the precise relationship between acyclic queries and local consistency,
solved in~\cite{BFMY83} for the simple queries.

\subsection{Decomposition Methods and Views}

We start by formalizing the concept of structural decomposition method in our framework. Let the pair $(Q,\DB)$ be any query answering problem
instance. For any subset of variables $S\subseteq\vars(Q)$, let $(Q_{|S},\DB_{|S})$ be the {\em subproblem of} $(Q,\DB)$ {\em induced} by $S$
defined as follows: for each atom $a\in\atoms(Q)$ with $\vars(a)\cap S\neq\emptyset$, $Q_{|S}$ contains an atom $a'$ over a fresh relation
symbol $r_{a'}$ having $\vars(a)\cap S$ as its set of variables, and whose database relation is such that $a'^{\onDB_{|S}}=a^\onDB[S]$. No
further atom belongs to $Q_{|S}$, and no further relation belongs to $\DB_{|S}$.
Intuitively,  $(Q_{|S},\DB_{|S})$ is the most constrained subproblem of $(Q,\DB)$ where only variables from $S$ occur, because all atoms
involving (even partially) those variables are considered. In particular, for each subquery $Q'$ whose set of variables is $S$, we have
$Q'^\onDB \supseteq  Q_{|S}^{\onDB_{|S}} \supseteq Q^\onDB[S]$.

\begin{definition}\label{def:dm}
A {\em structural decomposition method} {\tt DM} is a pair of polynomial-time computable functions $\lDM$ and $\rDM$ that, given a conjunctive
query $Q$ and a database $\DB'$, compute, respectively, a view system  $\V=\lDM(Q)$ and a database $\DB''=\rDM(Q,\DB')$ over the vocabulary of
$\V$ such that:\footnote{For the sake of presentation, we do not consider $\FPT$ decomposition methods (where functions $\lDM$ and $\rDM$ are
computable in fixed-parameter polynomial-time), but our results can be extended easily to them.}

\vspace{-1mm}
\begin{itemize}
\item[--]  the database $\DB=\DB'\cup\DB''$ over the (disjoint) vocabularies of $Q$ and $\V$ is legal;

\item[--]  for each $w\in\V$, $w^\onDB \supseteq  Q_{|\vars(w)}^{\onDB_{|\vars(w)}}$. That is, any view $w$ contains at least the solutions
    of the subproblem of $(Q,\DB)$ induced by its variables  ({\em subproblem completeness}). \hfill $\Box$
\end{itemize}
\end{definition}

Note that the above completeness property is a local property, and clearly entails the (global) view consistency property for $\V^\onDB$.

Every known \emph{purely-structural} decomposition method {\tt DM}, where views (subproblems) are only determined by the query and do not
depend on the database instance, can be recast this way, with decompositions of $Q$ according to {\tt DM} being tree projections of
$(\HG_Q,\HG_\V)$. Indeed, all such methods are in fact subproblem-based, because any view relation $w^\onDB$ is instantiated with the solutions
$Q'^\onDB$ of some subquery $Q'$ (depending on the specific method),
which is not necessarily an induced subproblem. 
Some exemplifications of the above definition are discussed below.

\medskip \noindent \textbf{Tree Decompositions.} For any fixed natural number $k$, the {tree decomposition method}~\cite{DP89,Fre90} ($tw_k$) is
characterized by the functions $\it v\mbox{-}tw_k$ and $\it d\mbox{-}tw_k$ that, given a query $Q$ and a database $\DB$, build the view system
${\it v\mbox{-}tw_k(Q)}$ and the database ${\it d\mbox{-}tw_k(Q,\DB)}$. In particular, for each subset $S$ of at most $k+1$ variables, there is
a view $w_{S}$ over the variables in $S$ (i.e., $\vars(w_{S})=S$) whose tuples are the solutions of the subproblem induced by $S$ (or, more
liberally, the cartesian product of the set of constants that variables in $S$ may take).
An illustration of the view set characterizing treewidth is reported below.

\begin{figure*}[t]
  \centering
  \includegraphics[width=\textwidth]{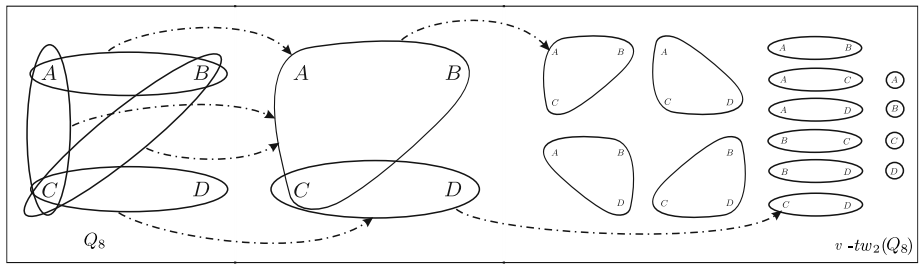}
  \caption{Structures discussed in Example~\ref{ex:treeDecomposition}.\vspace{-3mm}
  }\label{fig:Example-TW}
\end{figure*}

\begin{example}\label{ex:treeDecomposition}\em
Consider the query
\[
\begin{array}{ll}
Q_8: & r_1(A,B)\wedge r_2(B,C) \wedge r_3(A,C)\wedge r_4(C,D),
\end{array}
\]
whose associated hypergraph is depicted on the left of Figure~\ref{fig:Example-TW}. Consider the application on $Q_8$ of the tree decomposition
method. The set of views $\it v\mbox{-}{\it tw}_2(Q_8)$ defined by this method for $k=2$ is graphically illustrated on the right of
Figure~\ref{fig:Example-TW}. In fact, the figure shows how $Q_8$ can be covered via an acyclic hypergraph that consists of two hyperedges
covered by two available views, the largest of which includes three variables. In fact, the treewidth of $Q_8$ is 2.\hfill $\lhd$
\end{example}

\noindent \textbf{Generalized Hypertree Decompositions.} For any fixed natural number $k$, the {generalized hypertree decomposition
method}~\cite{gott-etal-99} (short: $hw_k$) is characterized by the functions $\it v\mbox{-}hw_k$ and $\it d\mbox{-}hw_k$ that, given a query
$Q$ and a database $\DB$, build the view system ${\it v\mbox{-}hw_k(Q)}$ and the database ${\it d\mbox{-}hw_k(Q,\DB)}$ where, for each subquery
$Q'$ of $Q$ such that $|\atoms(Q')|\leq k$, there is a view $w_{Q'}$ over all variables in $Q'$ (i.e., $\vars(w_{Q'})=\vars(Q')$)
and whose tuples are the answers of $Q'$. Note that $hw_k$ satisfies the subproblem completeness property too, because $Q'$ is in general more
liberal than the subproblem induced by $\vars(Q')$. Indeed, the latter also deals with further atoms where such variables occur (possibly
together with other variables not occurring in $Q'$).

\medskip \noindent \textbf{Acyclicity.}
Recall that a hypergraph is acyclic if, and only if, it has (generalized) hypertree width~1~\cite{gott-etal-99}. Therefore, the
\emph{acyclicity method} (short: ${\it acyc}$) is just the specialization of the above method for the case of $k=1$. In particular, $\it
v\mbox{-}acyc(Q)$ is precisely the set of query views $\views(Q)$.

\medskip \noindent \textbf{Fractional Hypertree Decompositions.}
For any fixed natural number $k$, consider the subqueries characterizing the {fractional hypertree decomposition method}~\cite{GM06}: they are
defined precisely as in the case of the generalized hypertree decomposition method, except that a view $w_{Q'}$ is built more generally if, for
a subquery $Q'$, its hypergraph $\HG_{Q'}$ has \emph{fractional edge-cover number}~\cite{GM06} at most $k$. Unfortunately, these views may be
exponentially-many even if $k$ is a fixed constant, and in fact there is no known polynomial time algorithm to decide whether the fractional
hypertree-width of a hypergraph is at most $k$. However, we may still define the required pair of polynomial-time functions $\it v\mbox{-}fw_k$
and $\it d\mbox{-}fw_k$ for this decomposition method, by actually exploiting for their computation the subproblems identified by Marx in his
$O(k^3)$ polynomial-time approximation of the fractional hypertree-width~\cite{M09}. Moreover, following the same kind of arguments used for
the generalized hypertree decompositions, it can be seen that the subproblem completeness property is satisfied by such a pair of functions,
too.

\medskip \noindent \textbf{Submodular Width.}
For the sake of completeness, note that the only known decomposition technique that does not fit the above framework is the one based on the
\emph{submodular width}~\cite{M10}. This method is in fact not ``purely'' structural. Indeed, according to this technique, a number of view
schemas are computed in fixed-parameter polynomial time (hence not polynomial-time, in general) by looking at the database $\DB$ of the given
instance, too (while $\lDM$ functions depend on the query only). Moreover, their associated database relations are not necessarily
subproblem-complete.

\subsection{Decomposition Methods and Consistency Properties}

By using Theorem~\ref{thm:correttezzaViews}, it is possible to characterize the power of local-consistency based algorithms in structural
decomposition methods, as stated in the following result.\footnote{For completeness, we observe that a similar result has been proved
in~\cite{GS10} in the setting of constraint satisfaction problems, by precisely exploiting Theorem~\ref{thm:correttezzaViews}.} In fact, this
result is not a trivial consequence of Corollary~\ref{cor:correttezzaViews}, as it is evident by contrasting their statements: here, the
database $\DB''$ for the views is {\em computed} from a database over the vocabulary of the query $Q$ only, according to the specific function
$\rDM$ characterizing the method {\tt DM}, while it is an arbitrary (legal) database in Corollary~\ref{cor:correttezzaViews}.

\begin{theorem}\label{prop:bis}
Let {\tt DM} be a decomposition method, let $Q$ be a conjunctive query,  and let $\V=\lDM(Q)$. The following are equivalent:

\vspace{-1mm}
\begin{enumerate}
\item[(1)] For every database $\DB$ (over the vocabulary of $Q$) and for every view $w\in\V$ with $O\subseteq \vars(w)$,
    $w^\onDBp[O]=Q^\onDB[O]$,  where $\DB'=\red(\V,\DB'')$ and $\DB''=\rDM(Q,\DB)$.

\item[(2)] A set of variables $O\subseteq \vars(Q)$ is $\tpCovered$ in $\V$.
\end{enumerate}
\end{theorem}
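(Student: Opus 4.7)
My plan is to establish each direction separately. Direction $(2) \Rightarrow (1)$ will reduce to Corollary~\ref{cor:correttezzaViews} once legality and view consistency are checked; direction $(1) \Rightarrow (2)$ is the harder one, and I would handle it contrapositively with a preservation argument that lifts a counterexample from the general setting of arbitrary legal databases to one produced by $\rDM$.

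For $(2) \Rightarrow (1)$, assume $O$ is $\tpCovered$ in $Q$ and let $\DB$ be any database over the vocabulary of $Q$. Setting $\DB'' = \rDM(Q, \DB)$ and $\DB' = \red(\V, \DB'')$, Definition~\ref{def:dm} gives that $\DB \cup \DB''$ is a legal database, while subproblem completeness yields $w^{\onDBpp} \supseteq Q_{|\vars(w)}^{\onDB_{|\vars(w)}} \supseteq Q^\onDB[\vars(w)]$ for every $w \in \V$, so each view is view consistent. Because projections of $Q$-answers pairwise match across any two views, such tuples are never removed by local-consistency reduction, and view consistency persists after passing to $\DB'$. Observing that $\red(\V, \DB \cup \DB'') = \DB \cup \red(\V, \DB'')$ (since the reduction only touches view relations), I would then apply Corollary~\ref{cor:correttezzaViews} to $\DB \cup \DB''$, with any $w \in \vcovers(O)$ serving as the required view consistent witness, to obtain $w^{\onDBp}[O] = Q^\onDB[O]$ for every $w \in \vcovers(O)$.

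For the contrapositive of $(1) \Rightarrow (2)$, suppose $O$ is not $\tpCovered$ in $Q$ w.r.t.~$\V$. By the proof of Corollary~\ref{cor:correttezzaViews} in direction $(1) \Rightarrow (2)$, there exists a legal database $\DB^\star$ such that $\lc(\V, \DB^\star)$ holds and some $\bar w \in \vcovers(O)$ contains a tuple $\hat t \in \bar w^{\onDB^\star}$ with $\hat t[O] \notin Q^{\onDB^\star}[O]$. I would restrict $\DB^\star$ to the vocabulary of $Q$ to obtain $\DB$, set $\DB'' = \rDM(Q, \DB)$ and $\DB' = \red(\V, \DB'')$, and show that $\hat t$ persists in $\bar w^{\onDB'}$; combined with $Q^{\onDB} = Q^{\onDB^\star}$, this gives $\hat t[O] \in \bar w^{\onDB'}[O] \setminus Q^\onDB[O]$, contradicting~(1).

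The crux, and the step I expect to be the main obstacle, is the following preservation observation: for every $v \in \V$, each tuple $u \in v^{\onDB^\star}$ is in fact a solution of the induced subproblem $Q_{|\vars(v)}$ over $\DB_{|\vars(v)}$. Indeed, fix any query atom $q$ with $\vars(q) \cap \vars(v) \neq \emptyset$. Local consistency between $v$ and the query view $w_q \in \views(Q) \subseteq \V$ in $\DB^\star$ forces $u[\vars(v) \cap \vars(w_q)] \in w_q^{\onDB^\star}[\vars(v) \cap \vars(w_q)]$, and legality of $\DB^\star$ gives $w_q^{\onDB^\star} \subseteq q^{\onDB^\star}$, whence $u[\vars(v) \cap \vars(q)] \in q^{\onDB^\star}[\vars(v) \cap \vars(q)] = q^\onDB[\vars(v) \cap \vars(q)]$, which is precisely the constraint encoded by the $q$-atom of $Q_{|\vars(v)}$. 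Hence $u \in Q_{|\vars(v)}^{\onDB_{|\vars(v)}}$, and subproblem completeness of $\rDM$ forces $u \in v^{\onDBpp}$, so $v^{\onDB^\star} \subseteq v^{\onDBpp}$ for every $v \in \V$. Since the tuples in $\bigcup_{v \in \V} v^{\onDB^\star}$ are pairwise locally consistent by hypothesis, they form a locally consistent subfamily inside $\DB''$ and thus all survive the reduction $\red(\V, \DB'')$. In particular $\hat t \in \bar w^{\onDB'}$, closing the contrapositive.
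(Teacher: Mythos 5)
Your proof is correct, and its overall skeleton coincides with the paper's: direction $(2)\Rightarrow(1)$ is delegated to Corollary~\ref{cor:correttezzaViews} after checking legality and view consistency of $\rDM(Q,\DB)$, and direction $(1)\Rightarrow(2)$ is handled contrapositively by taking the locally consistent legal counterexample database supplied by Theorem~\ref{thm:correttezzaViews}, showing that its view relations are contained in $\rDM(Q,\DB_Q)$, and concluding that the offending tuple survives the reduction because a locally consistent subinstance is always contained in the (maximal) reduct. The one place where you genuinely diverge is the pivotal containment $v^\onDB\subseteq Q_{|\vars(v)}^{\onDB_{|\vars(v)}}$ for every view $v$ of the counterexample database: the paper derives it by augmenting the view system with $\views(Q_{|S})$, observing that $(\HG_{Q_{|S}},\HG_{\V^+})$ trivially has a tree projection so that $S$ is $\tpCovered$ in the induced subproblem, and then invoking Theorem~\ref{thm:correttezzaViews} once more; you instead prove it directly from pairwise local consistency of $v$ with each query view $w_q$ together with legality ($w_q^\onDB\subseteq q^\onDB$), which immediately gives $u[\vars(v)\cap\vars(q)]\in q^\onDB[\vars(v)\cap\vars(q)]$ for every relevant atom $q$, i.e., exactly membership in the induced subproblem. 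Your elementary argument is valid and arguably cleaner, as it avoids a second pass through the tp-covering machinery; the paper's detour buys nothing extra here beyond uniformity with the rest of its toolkit. The remaining ingredients (restriction of the counterexample to the vocabulary of $Q$ preserves $Q^\onDB$, subproblem completeness forces $v^\onDB\subseteq v^\onDBpp$, and survival under $\red$) match the paper's proof step for step.
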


\begin{proof}
The fact that $(2) \Rightarrow (1)$ immediately follows from Corollary~\ref{cor:correttezzaViews}. We have to show that $(1) \Rightarrow (2)$
holds as well. Observe that if $O$ is not $\tpCovered$ in $Q$ w.r.t.~$\V=\lDM(Q)$, by Theorem~\ref{thm:correttezzaViews} we conclude the
existence of a locally consistent (legal) database $\DB=\DB_Q\cup \DB_\V$, with $\DB_Q$ being over the vocabulary of $Q$ and with $\DB_\V$
being over the vocabulary of $\V$, respectively, and  the existence of a  view $\bar w\in\V$ such that $\bar w^\onDB[O]\supset Q^\onDB[O]$,
with $O\subseteq \vars(\bar w)$. Let $\DB''=\rDM(Q,\DB_Q)$ be the database comprising the relations for the views in $\V$ built according to
method {\tt DM}, and let $\DB'=\red(\V,\DB'')$ be its reduct, obtained by enforcing local consistency.

We first claim that {\it the database $\DB_\V$ is included in $\DB''$}, formally, for any  $w\in\V$, $w^\onDB= w^{\onDB_\V} \subseteq
w^\onDBpp$. Consider such a view $w \in \V$, having variables $S = \vars(w)$, and the subproblem $(Q_{|S},\DB_{|S})$ of $(Q,\DB)$ induced by
$S$. By construction, only variables from $S$ occur in $Q_{|S}$ and thus, for each atom $a\in \atoms(Q_{|S})$, $\vars(a)\subseteq S$. It
trivially follows that the pair of hypergraphs $(\HG_{Q_{|S}},\HG_{\V})$ has a tree projection.
Let $\V^+=\V \cup \views(Q_{|S})$ be the set of views obtained by adding to $\V$ the query views  associated with the induced subproblem, and
let $\DB^+$ be the database obtained by adding to $\DB$ the relations in $\DB_{|S}$, as well as their copies on the relation symbols of the
query views $\views(Q_{|S})$. Clearly, $\DB^+$ is a legal database (w.r.t.~$Q^+$ and $\V^+$)
and $\V^+$ is a view system for $Q^+$. Moreover, since we just added new views to $\V$, the pair $(\HG_{Q_{|S}},\HG_{\V^+})$ has a tree
projection, too. In particular, from Fact~\ref{fact:tpcoveredAndTP}, $S$ is $\tpCovered$ in $Q_{|S}$ w.r.t.~$\V^+$.  Moreover, observe that the
database relations for the new views in $\V^+$ are just projections of the relations of the original
query views, which already belong to $\V$. Therefore, their presence has no impact on the local consistency property, and $\lc(\V^+,\DB^+)$
holds. By Theorem~\ref{thm:correttezzaViews}, for every $O'\subseteq S$, we get $w^\onDB[O']= w^\onDBpiu[O'] \subseteq
Q_{|S}^{\onDB_{|S}}[O']$. That is, $w^\onDB$ contains \emph{only} solutions of the subproblem induced by $w$.
On the other hand, the subproblem completeness condition entails that $w^\onDBpp \supseteq Q_{|S}^{\onDB_{|S}}$ . Hence the claim follows, as
for any chosen $w\in\V$ with variables $S = \vars(w)$, $w^\onDB\subseteq Q_{|S}^{\onDB_{|S}} \subseteq w^\onDBpp$.

To conclude, recall that $\lc(\V,\DB_\V)$ holds, so that $\DB_\V$ is a locally consistent database included in $\DB''$, and thus all its tuples
will survive after enforcing local consistency on $\DB''$, that is, all of them belongs to the reduct $\DB'=\red(\V,\DB'')$. Therefore,
$w^\onDB\subseteq w^\onDBp$,  $\forall w\in\V$. In particular, for the view $\bar w$ and the set of variables $O\subseteq \vars(\bar w)$, we
get  $Q^\onDB[O] \subset \bar w^\onDB[O]\subseteq \bar w^\onDBp[O]$, hence we get wrong solutions (over $O$) using the view $\bar w$ with the
database $\DB'$.~\hfill~$\Box$
\end{proof}

For the decision problem ($O=\emptyset$), we get the following special case.

\begin{corollary}\label{cor:DecisionDM}
Let {\tt DM} be a decomposition method, let $Q$ be a conjunctive query,
  and let $\V=\lDM(Q)$. The following are equivalent:

\vspace{-1mm}
\begin{enumerate}
\item[\em (1)] For every database $\DB$ (over the vocabulary of $Q$), $\red(\V,\rDM(Q,\DB))\neq\emptyset$ entails \emph{$Q^\onDB \neq
    \emptyset$}.

\item[\em (2)] There is a subquery $Q' \homEquiv Q$ for which $(\HGUnoSet,\HGDue)$ has a tree projection.

\item[\em (3)] There is a core $Q''$ of $Q$ for which $(\HG_{Q''},\HGDue)$ has a tree projection.
\end{enumerate}
\end{corollary}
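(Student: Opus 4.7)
The plan is to establish (2)$\Leftrightarrow$(3) directly, and then bridge (1) and (3) via Theorem~\ref{thm:TPimpliesLCtoGC-decision}, Theorem~\ref{thm:svuotamento}, and the inclusion argument already embedded in the proof of Theorem~\ref{prop:bis}.

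For (2)$\Leftrightarrow$(3): every core of $Q$ is by definition homomorphically equivalent to $Q$, and conversely any subquery $Q'\homEquiv Q$ contains some core $Q''$ of $Q$ with $\HG_{Q''}\leq\HG_{Q'}$; hence a tree projection of $(\HG_{Q'},\HGDue)$ is also one of $(\HG_{Q''},\HGDue)$, and vice versa.

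For (3)$\Rightarrow$(1): fix a database $\DB$ over the vocabulary of $Q$ and set $\DB''=\rDM(Q,\DB)$ and $\DB'=\red(\V,\DB'')$. By Definition~\ref{def:dm}, $\DB\cup\DB''$ is legal; a short check shows that reduction preserves legality, since subproblem completeness entails view consistency for every view in $\V$, and therefore every query-answer tuple projects to mutually compatible tuples that must survive all semijoins. Hence $\DB\cup\DB'$ is legal, $Q$ evaluates identically on it, and whenever $\DB'\neq\emptyset$ the set $\V$ is locally consistent on $\DB\cup\DB'$; Theorem~\ref{thm:TPimpliesLCtoGC-decision}, applied with the tree projection provided by~(3), then yields $Q^\onDB\neq\emptyset$.

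The main obstacle is (1)$\Rightarrow$(3), because the view database is here functionally determined by~$\rDM$ rather than arbitrary, so the result is not a direct corollary of Corollary~\ref{cor:Decision}. I would argue by contraposition. Assume no core of $Q$ has a tree projection w.r.t.~$\HGDue$; Theorem~\ref{thm:svuotamento} then supplies a legal database that splits by vocabulary as $\DB_0\cup\DB_\V$, with $\lc(\V,\DB_0\cup\DB_\V)$ but empty query answer, so in particular $Q^{\onDB_0}=\emptyset$. It remains to show $\red(\V,\rDM(Q,\DB_0))\neq\emptyset$. For this I would replay the inclusion claim from the proof of Theorem~\ref{prop:bis}: for each $w\in\V$ with $S=\vars(w)$, the set $S$ is $\tpCovered$ in the induced subquery $Q_{|S}$ via the one-hyperedge tree projection $\{S\}$ (covered by $w$ itself in the enlarged view system $\V\cup\views(Q_{|S})$), so Theorem~\ref{thm:correttezzaViews} applied to a suitable extension of $\DB_0\cup\DB_\V$ forces the relation of $w$ to be contained in $Q_{|S}^{\onDB_{|S}}$; combined with the subproblem completeness of~{\tt DM}, which gives the reverse inclusion of $Q_{|S}^{\onDB_{|S}}$ into the relation of $w$ in $\rDM(Q,\DB_0)$, this yields $\DB_\V\subseteq \rDM(Q,\DB_0)$. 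Since $\DB_\V$ is already locally consistent, all its tuples survive the subsequent reduction, so $\red(\V,\rDM(Q,\DB_0))\supseteq\DB_\V\neq\emptyset$, contradicting~(1).
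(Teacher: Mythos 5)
Your proposal is correct and follows essentially the same route as the paper, which obtains this corollary as the $O=\emptyset$ specialization of Theorem~\ref{prop:bis}: your (3)$\Rightarrow$(1) and (1)$\Rightarrow$(3) steps are exactly the $O=\emptyset$ instances of that theorem's two directions (via Theorems~\ref{thm:TPimpliesLCtoGC-decision} and~\ref{thm:svuotamento}), and the substantive new ingredient --- the inclusion $\DB_\V\subseteq\rDM(Q,\DB_0)$ handling the fact that the view database is computed rather than arbitrary --- is the same claim established inside the proof of Theorem~\ref{prop:bis}.
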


If we consider decision problem instances ($O=\emptyset$) and the treewidth method ($\V={\it v\mbox{-}tw_k(Q)}$), from
Corollary~\ref{cor:DecisionDM}, we (re-)obtain the nice characterization of~\cite{ABD07} about the relationship between $k$-local consistency
and the treewidth of the core of $Q$.\footnote{As already observed, for treewidth and (generalized) hypertree-width isomorphic substructures
behave in the same way, so that all cores have equivalent properties. Thus, for these methods one may simply say ``the core'' $Q'$ (instead of
some core).}

If we consider the generalized hypertree-width ($\V={\it v\mbox{-}hw_k(Q)}$), we next provide the answer to the corresponding open question for
the unbounded arity case.
Recall that in~\cite{CD05} it was shown that if the core of $Q$ has generalized hypertree-width at most $k$, then the procedure enforcing
$k$-union (of constraints/atoms) consistency is always correct, i.e., the reduct of the database is not empty if, and only if, the query has
some answer. We next show that this sufficient condition is necessary, too.

In fact, observe that the following result does not follow immediately from Corollary~\ref{cor:DecisionDM}. Indeed, any core $Q'$ of $Q$ may be
much smaller than $Q$, and thus the set of views ${\it v\mbox{-}hw_k(Q')}$ available using $Q'$ is in general (possibly much) smaller than the
set of views ${\it v\mbox{-}hw_k(Q)}$ available when the whole query $Q$ is considered.
For an extreme example, think of the undirected grid (see again Figure~\ref{fig:grid}), where any edge is a core: in this case, the set of
available views for computing a hypertree decomposition of the core is precisely this one edge (for any $k$), while considering the whole
query, the available views comprise all unions of $k$ edges.

This subtle issue is irrelevant for the treewidth method, because such a technique considers all possible combinations of at most $k$
variables, and clearly only those variables occurring in the core are useful for computing any of its tree decompositions. Instead, when
generalized hypertree decomposition is considered, in principle using some particular combination of variables occurring in some atom outside
any core $Q'$ may be necessary for getting a width-$k$ generalized hypertree decomposition of $Q'$.

\begin{theorem}\label{thm:DecisionHW}
Let $Q$ be a conjunctive query, and let $\V={\it v\mbox{-}hw_k(Q)}$. The following are equivalent:

\vspace{-1mm}
\begin{enumerate}
\item[\em (1)] For every database $\DB$ (over the vocabulary of $Q$), $\red(\V,{\it d\mbox{-}hw_k(Q,\DB)})\neq\emptyset$ entails
    \emph{$Q^\onDB \neq \emptyset$}.

\item[\em (2)] There is a subquery $Q' \homEquiv Q$ having generalized hypertree-width at most $k$.

\item[\em (3)] There is a core $Q'$ of $Q$ having generalized hypertree-width at most $k$.
\end{enumerate}
\end{theorem}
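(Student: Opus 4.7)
The plan is to establish $(3) \Rightarrow (2) \Rightarrow (1) \Rightarrow (3)$. Implication $(3) \Rightarrow (2)$ is immediate, since any core of $Q$ is homomorphically equivalent to $Q$.

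For $(2) \Rightarrow (1)$, I would take $Q' \homEquiv Q$ with generalized hypertree-width at most $k$. By the standard characterization recalled in Section~\ref{sec:framework}, this means $(\HG_{Q'}, \HG_{Q'}^k)$ has a tree projection $\HG_a$. Since $Q'$ is (w.l.o.g.) a subquery of $Q$, every hyperedge of $\HG_{Q'}^k$ is a union of at most $k$ hyperedges of $\HG_Q$, hence a hyperedge of $\HG_Q^k = \HG_\V$. Therefore $\HG_a$ witnesses a tree projection of $(\HG_{Q'}, \HG_\V)$ as well, and the implication follows from Corollary~\ref{cor:DecisionDM} (taking {\tt DM} to be the $hw_k$ method).

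For $(1) \Rightarrow (3)$, the hard direction, I would apply Corollary~\ref{cor:DecisionDM} to obtain a core $Q''$ of $Q$ such that $(\HG_{Q''}, \HG_\V)$ has a tree projection $\HG_a$. The issue, as anticipated in the paper's remarks, is that the hyperedges of $\HG_a$ are only known to be covered by unions of at most $k$ atoms of the whole $Q$, not of the (potentially much smaller) core $Q''$. The key observation is the retraction trick: since $Q''$ is a core, there exists a homomorphism $f : Q \to Q''$ that is the identity on $\vars(Q'')$. I would first assume w.l.o.g.\ that $\nodes(\HG_a) = \nodes(\HG_{Q''})$ (by trimming extraneous nodes, which preserves acyclicity and both inclusions). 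Then for every hyperedge $h \in \edges(\HG_a)$, we have $h \subseteq \vars(Q'')$ and, by assumption, $h \subseteq \vars(q_1) \cup \cdots \cup \vars(q_k)$ for some atoms $q_1, \ldots, q_k$ of $Q$. Applying $f$ and using that $f$ fixes the variables of $Q''$ pointwise gives $h = f(h) \subseteq \vars(f(q_1)) \cup \cdots \cup \vars(f(q_k))$, and the atoms $f(q_i)$ all belong to $Q''$. Hence $\HG_a \leq \HG_{Q''}^k$, and combined with $\HG_{Q''} \leq \HG_a$ and the acyclicity of $\HG_a$, this yields a tree projection of $(\HG_{Q''}, \HG_{Q''}^k)$, i.e., $Q''$ has generalized hypertree-width at most $k$.

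The main obstacle is precisely the step that the paper flags, namely translating a tree projection taken w.r.t.\ the ``rich'' view set $\HG_Q^k$ into one taken w.r.t.\ the ``poor'' view set $\HG_{Q''}^k$ restricted to a core. The retraction $f : Q \to Q''$ makes this translation work, because it maps $k$-tuples of atoms of $Q$ to $k$-tuples of atoms of $Q''$ while leaving hyperedges whose variables already live inside $Q''$ unchanged; this is the only non-trivial ingredient beyond what Corollary~\ref{cor:DecisionDM} already provides.
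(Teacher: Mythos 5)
Your proof is correct and follows essentially the same route as the paper's: both reduce to Corollary~\ref{cor:DecisionDM} and then use the retraction $f:Q\to Q''$ (the identity on $\vars(Q'')$) to convert a covering of each hyperedge of the tree projection by $k$ atoms of $Q$ into a covering by the $k$ atoms $f(q_1),\dots,f(q_k)$ of the core. The only difference is presentational: you argue directly that $\HG_a\leq\HG_{Q''}^k$, whereas the paper phrases the same retraction step as a proof by contradiction.
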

\begin{proof}
It suffices to show that $(3)$ is equivalent to $(3')$ below. Then, the theorem follows from Corollary~\ref{cor:DecisionDM}.

$(3')$ \ There is a core $Q'$ of $Q$ for which $(\HGUnoSet,\HGDue)$ has a tree projection, with $\V={\it v\mbox{-}hw_k(Q)}$.

\medskip

Let $\V'={\it v\mbox{-}hw_k(Q')}$. Note that $(3)$ is equivalent to say that  $(\HGUnoSet,\HG_{\V'})$ has a tree projection, which entails
$(3')$, because $Q'$ is a subquery of $Q$ and thus $\HG_{\V'}\leq \HGDue$.

It remains to show that $(3') \Rightarrow (3)$. Assume by contradiction that this is not the case, hence there is a core $Q'$ of $Q$ for which
$(\HGUnoSet,\HGDue)$ has a tree projection $\HG_a$, but every core of $Q$ has generalized hypertree width greater than $k$. In particular, this
must hold for $Q'$, too. It follows that there exists some hyperedge $h$ that belongs to $\HG_a$ and thus is covered by some hyperedge of
$\HGDue$, but it is not covered by any hyperedge of $\HG_{\V'}$, where $\V'={\it v\mbox{-}hw_k(Q')}$. That is, there is no view $w$ in $\V'$
such that $h\subseteq \vars(w)$.
Recall that, by definition of function ${\it v\mbox{-}hw_k}$, views in $\V$ (resp., $\V'$) contain the union of variables from all possible
sets of at most $k$ atoms occurring in $Q$ (resp., $Q'$). It follows that there is some atom $a\in\atoms(Q)$ with $\bar X=\vars(a)\cap
h\neq\emptyset$ which does not belong to $Q'$ and whose role in $w$ cannot be played by any other atom in $Q'$. Formally, there is no atom
$a'\in\atoms(Q')$ such that $\bar X' \subseteq \vars(a')$, where $\bar X'= \bar X \cap \vars(Q')$. In fact, note that $\bar X'$ are the only
possible crucial variables: further variables of $w$ not occurring in $Q'$ are never necessary in any tree projection of $\HGUnoSet$ (w.r.t.
any hypergraph), as it is known and easy to see that, if a tree projection exists, there always exists one that uses only nodes from
$\HGUnoSet$~\cite{GS08}.

However, $Q'$ is a core of $Q$, and thus it is a retract, which means that there must exist a homomorphism $f$ from $Q$ to $Q'$ where $f(X)=X$,
for each term $X$ occurring in $Q'$. Therefore, the atom $a$ should be mapped to some atom $a'\in\atoms(Q')$ that contains all variables $f(X)$
for each $X\in\vars(a)$. In particular, this entails that all variables in $\bar X'$ occur in $a'$, because $f$ is the identity mapping over
them. Contradiction.~\hfill~$\Box$
\end{proof}

For the special case of $k=1$, the above result provides the precise relationship between local consistency and acyclic queries, extending the
classical result given in~\cite{BFMY83} for simple queries (in fact, for acyclic schemas). Recall that, for the acyclic method, the set of
views ${\it v\mbox{-}acyc(Q)}$ is just the set of query views $\views(Q)$, and their database relations in ${\it d\mbox{-}acyc(Q,\DB)}$ are
just the copies of their corresponding query atoms.

\begin{theorem}\label{thm:DecisionHW2}
For any conjunctive query $Q$, the following are equivalent:

\vspace{-1mm}
\begin{enumerate}

\item[\em (1)] For every database $\DB$ (over the vocabulary of $Q$), $\red({\it v\mbox{-}acyc(Q)},{\it
    d\mbox{-}acyc(Q,\DB)})\neq\emptyset$ entails \emph{$Q^\onDB \neq \emptyset$}.

\item[\em (2)] There is an acyclic subquery $Q' \homEquiv Q$.

\item[\em (3)] $Q$ has an acyclic core.
\end{enumerate}
\end{theorem}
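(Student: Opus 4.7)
The plan is to obtain Theorem~\ref{thm:DecisionHW2} as the specialization of Theorem~\ref{thm:DecisionHW} to the case $k=1$, after verifying that the two view-system constructions coincide in this case and that "generalized hypertree width $\leq 1$" collapses to acyclicity.

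First I would observe the two syntactic identifications. By definition of the generalized hypertree decomposition method, $\it v\mbox{-}hw_1(Q)$ contains one view for each subquery consisting of a single atom, i.e., exactly one view per atom of $Q$ over precisely the variables of that atom; this is by definition the set of query views $\views(Q) = \it v\mbox{-}acyc(Q)$. Similarly, for each such single-atom view $w_q$, the database ${\it d\mbox{-}hw_1(Q,\DB)}$ associates the (trivial) join of the query-views it covers, namely the relation $q^\onDB$ of the single underlying query atom, which coincides with the corresponding relation in ${\it d\mbox{-}acyc(Q,\DB)}$. Hence the reducts coincide as well.

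Next I would recall the well-known fact that a hypergraph has generalized hypertree width at most $1$ if, and only if, it is acyclic (see~\cite{gott-etal-99}): indeed, a width-$1$ generalized hypertree decomposition is essentially a join tree, and conversely any join tree yields a width-$1$ decomposition by taking $\lambda(p)=\{h\}$ and $\chi(p)=h$ for each vertex $p$ associated with hyperedge $h$. Consequently, condition \emph{(2)} of Theorem~\ref{thm:DecisionHW} for $k=1$ becomes "there is a subquery $Q'\homEquiv Q$ that is acyclic", which is precisely condition \emph{(2)} of our statement, and condition \emph{(3)} of Theorem~\ref{thm:DecisionHW} becomes "there is a core of $Q$ that is acyclic", i.e., condition \emph{(3)} here.

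Finally, I would combine these observations: condition \emph{(1)} of the present theorem is literally condition \emph{(1)} of Theorem~\ref{thm:DecisionHW} for $k=1$ once the two systems of views and their databases have been identified, so the equivalence $(1)\Leftrightarrow(2)\Leftrightarrow(3)$ is inherited directly. There is no real obstacle here beyond the bookkeeping of the $k=1$ specialization; the only point worth double-checking is that the equivalence "a core of $Q$ is acyclic $\Leftrightarrow$ a core of $Q$ has a tree projection w.r.t.\ $\HG_{\V}$ with $\V=\views(Q)$", which is immediate since for $\V=\views(Q)$ a tree projection of $\HG_{Q'}$ w.r.t.\ $\HG_\V$ exists iff $\HG_{Q'}$ itself is acyclic (every hyperedge of any candidate tree projection must already be an edge of $\HG_{Q'}$, and conversely an acyclic $\HG_{Q'}$ is its own tree projection).
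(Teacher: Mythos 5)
Your proposal is correct and matches the paper's approach exactly: the paper states this theorem without a separate proof, presenting it precisely as the $k=1$ specialization of Theorem~\ref{thm:DecisionHW} after noting that ${\it v\mbox{-}acyc(Q)}=\views(Q)={\it v\mbox{-}hw_1(Q)}$, that ${\it d\mbox{-}acyc(Q,\DB)}$ just copies the query-atom relations, and that generalized hypertree width $1$ coincides with acyclicity. (Only your final parenthetical is slightly off---a hyperedge of a tree projection of $\HG_{Q'}$ w.r.t.\ $\HG_{\views(Q)}=\HG_Q$ need only be contained in a hyperedge of $\HG_Q$, not be an edge of $\HG_{Q'}$---but that extra check is exactly the $(3)\Leftrightarrow(3')$ step already carried out inside the proof of Theorem~\ref{thm:DecisionHW}, so nothing is lost.)
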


\section{Larger Islands of Tractability}\label{SECGAMES}

In this section, we investigate a tractable variant of the notion of tree projections that allows us to identify new islands of tractability
for query answering, constraint satisfaction problems, and further problems that are easy on tree-like structures.
Indeed we argue that, in practical database applications, ``blind'' local-consistency enforcing procedures are hardly used, because the number
of semijoin operations to be performed depends on the database size and may be very high. On the other hand, if one is able to compute a tree
projection, then the views to be processed will be only those involved in the tree projection, and the number of semijoin operations to be
performed will be at most the number of these views (hence, independent of the database).

The new notion is based on the game characterization of tree projections proposed in~\cite{GS08}.
To formalize our results, we need to introduce some additional definitions and notations, which will be intensively used in the following.

Assume that a hypergraph $\HG$ is given. Let $V$, $W$, and $\{X,Y\}$ be sets of nodes. Then, $X$ is said \adj{V}\ (in $\HG$) to $Y$ if there
exists a hyperedge $h\in \edges(\HG)$ such that $\{X,Y\}\subseteq (h -V)$. A \spath{V}\ from $X$ to $Y$ is a sequence $X=X_0,\ldots,X_\ell=Y$
of nodes such that $X_{i}$ is \adj{V}\ to $X_{i+1}$, for each $i\in [0...\ell\mbox{-}1]$. We say that $X$ \touches{V} $Y$ if $X$ is
\adj{\emptyset} to $Z\in \nodes(\HG)$, and there is a \spath{V} from $Z$ to $Y$; similarly, $X$ \touches{V} the set $W$ if $X$ \touches{V} some
node $Y\in W$.
We say that $W$ is \connected{V}\ if $\forall X,Y\in W$ there is a \spath{V}\ from $X$ to $Y$. A \component{V} (of $\HG$) is a maximal
\connected{V}\ non-empty set of nodes $W\subseteq (\nodes(\HG)-V)$. For any \component{V}\ $C$, let $\edges(C) = \{ h\in \edges(\HG)\;|\; h\cap
C\neq\emptyset\}$, and for a set of hyperedges $H\subseteq \edges(\HG)$, let $\nodes(H)$ denote the set of nodes occurring in $H$, that is
$\nodes(H)=\bigcup_{h\in H} h$. For any component $C$ of $\HG$, we denote by $\F(C,\HG)$  the \emph{frontier} of $C$ (in $\HG$), i.e., the set
$\nodes(\edges(C))$.\footnote{The choice of the term ``frontier'' to name the union of a component with its outer border is due to the role
that this notion plays in the hypergraph game described in the subsequent section.}
Moreover, $\partial (C,\HG)$ denote the {\em border} of $C$ (in $\HG$), i.e., the set $\F(C,\HG)\setminus C$. Note that $C_1\subseteq C_2$
entails $\F(C_1,\HG)\subseteq \F(C_2,\HG)$.

In the following sections, given any pair of hypergraphs $(\HG_1,\HG_2)$ and a set of nodes $C\subseteq \HG_1$, we write for short $\F(C)$ and
$\partial C$ to denote $\F(C,\HG_1)$ and  $\partial (C,\HG_1)$, respectively.

\subsection{Game-Theoretic Characterization}\label{GTC}

The \emph{Robber and Captain} game is played on a pair of hypergraphs $(\HG_1,\HG_2)$  by a Robber and a Captain controlling some squads of
cops, in charge of the surveillance of a number of strategic targets. The Robber stands on a node and can run at great speed along the edges of
$\HG_1$. However, (s)he is not permitted to run trough a node that is controlled by a cop. Each move of the Captain involves one squad of cops,
which is encoded as a hyperedge $h\in \edges(\HG_2)$. The Captain may ask some cops in the squad $h$ to run in action, as long as they occupy
nodes that are currently reachable by the Robber, thereby blocking an escape path for the Robber. Thus, ``second-lines'' cops cannot be
activated by the Captain. Note that the Robber is fast and may see cops that are entering in action. Therefore, while cops move, the Robber may
run trough those positions that are left by cops or not yet occupied. The goal of the Captain is to place a cop on the node occupied by the
Robber, while the Robber tries to avoid her/his capture.

\begin{definition}  Let $\HG_1$ and $\HG_2$ be two hypergraphs.
The Robber and Captain game on $(\HG_1,\HG_2)$ is formalized as follows.
A \emph{position} for the Captain is a pair $(h,M)$ where $h$ is a hyperedge of $\HG_2$ and $M\subseteq h$.
A \emph{configuration} is a triple $(h,M,C)$, where $(h,M)$ is a position for the Captain, and $C$ is the \component{M} where the Robber
stands.\footnote{It is easy to see that in such games, being the robber arbitrarily fast, what matters is not the precise node where the robber
stands, but just the \component{M} where (s)he is free to move.}
 The initial configuration is $(\emptyset,\emptyset,\nodes(\HG_1))$.

A \emph{strategy} $\sigma$ is a function that encodes the \emph{moves} of the Captain. Its domain includes the initial configuration. For each
configuration $v_p=(h_p,M_p,C_p)$ in the domain of~$\sigma$, $\sigma(v_p)=(h_{r},M_{r})$, with $M_{r}\subseteq h_r\cap\F(C_p)$, is the novel
position for the Captain. After this move, the Robber can select any {\em \iecomponent{v_p,M_{r}}}, i.e., any \component{M_{r}} $C_{r}$ such
that $C_{p}\cup C_{r}$ is \connected{M_{p}\cap M_{r}}. If there is no \ecomponent{v_p,M_{r}}, then $(h_{r},M_{r},\emptyset)$ is said a
\emph{capture} configuration induced by $\sigma$. The move of the Captain is \emph{monotone} if, for each \ecomponent{v_p,M_{r}} $C_{r}$,
$C_{r}\subseteq C_p$. The domain of $\sigma$ includes the configuration $(h_{r},M_{r},C_{r})$, for each \ecomponent{v_p,M_{r}} $C_{r}$. No
other configuration is in the domain of $\sigma$.
The strategy $\sigma$ is monotone if it encodes only monotone moves over the configurations in its domain.

A strategy $\sigma$ can be represented as a directed graph $G(\sigma)=(N,A)$, called {\em strategy graph}, as follows. The set $N$ of nodes is
the set of all configurations in the domain of $\sigma$ plus all capture configurations induced by $\sigma$. If $v_p=(h_p,M_p,C_p)$ is a
configuration and $\sigma(v_p)=(h_{r},M_{r})$, then $A$ contains an arc from $v_p$ to $(h_{r},M_{r},C_{r})$ for each \ecomponent{v_p,M_{r}}
$C_{r}$, and to $(h_{r},M_{r},\emptyset)$ if there is no \ecomponent{v_p,M_{r}}.
We say that $\sigma$ is a winning strategy (for the Captain) if $G(\sigma)$ is acyclic. Otherwise, i.e., if $G(\sigma)$ contains a cycle, then
the Robber can avoid her/his capture forever.~\hfill~$\Box$
\end{definition}

\begin{figure}[t]
  \centering
  \includegraphics[width=0.98\textwidth]{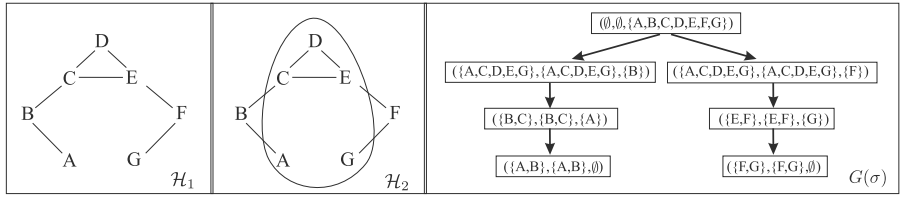}
  \caption{The hypergraphs $\HG_1$ and $\HG_2$, plus the graph $G(\sigma)$ in Example~\ref{ex:construction}.}\label{fig:greedy-bis}
\end{figure}

\begin{example}\label{ex:construction}\em
Consider the two hypergraphs $\HG_1$ and $\HG_2$ reported in Figure~\ref{fig:greedy-bis}, together with the strategy graph $G(\sigma)$. The
graph encodes a winning strategy $\sigma$ for the Captain. From the initial configuration $(\emptyset,\emptyset,\nodes(\HG_1))$, the Captain
activates all the cops in the hyperedge $\{A,C,D,E,G\}$, so that the Robber has two available options, i.e., $\{B\}$ and $\{F\}$. In the former
(resp., latter) case, the Captain activates all the cops in the hyperedge $\{B,C\}$ (resp., $\{E,F\}$), so that the Robber has necessarily to
occupy the node $A$ (resp., $G$). Finally, the Captain activates the cops in $\{A,B\}$ (resp., $\{F,G\}$) and captures the Robber.
Note that the strategy $\sigma$ is non-monotone, because the Robber is allowed to return on $A$ and $G$, after that these nodes have been
previously occupied by the Captain in the first move. \hfill $\lhd$
\end{example}

In the above example, the hyperedge $\{A,C,D,E,G\}$ of $\HG_2$ ``absorbs'' the cycle in $\HG_1$, so that it is easily seen that there is a tree
projection $\HG_a$ of $\HG_1$ w.r.t.~$\HG_2$ (see Figure~\ref{fig:strategy-new-bis}). The fact that on this pair the Captain has a winning
strategy is not by chance.

\begin{figure}[t]
  \centering
  \includegraphics[width=0.99\textwidth]{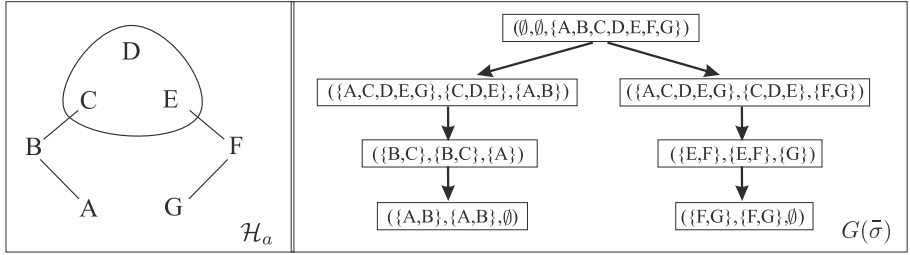}
  \caption{A tree projection $\HG_a$ for the pair in Example~\ref{ex:construction}, plus the graph $G(\bar \sigma)$.}\label{fig:strategy-new-bis}
\end{figure}

\begin{theorem}[\cite{GS08}]
There is a tree projection of $\HG_1$ w.r.t.~$\HG_2$ if, and and only if, there is a winning strategy in the Captain and Robber game played on
$(\HG_1,\HG_2)$.
\end{theorem}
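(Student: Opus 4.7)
The plan is to establish the two directions separately, via a standard game-to-decomposition correspondence.

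For the direction $(\Leftarrow)$, assume a tree projection $\HG_a$ of $\HG_1$ w.r.t.~$\HG_2$ exists, and fix a join tree $\JT$ of $\HG_a$ rooted arbitrarily. Since $\HG_a \leq \HG_2$, for every vertex $a \in \vertices(\JT)$ one can select a hyperedge $h_a \in \edges(\HG_2)$ with $a \subseteq h_a$. I would define a Captain strategy $\sigma$ that mirrors a top-down visit of $\JT$: once the play reaches a configuration $v_p = (h_p, M_p, C_p)$ associated with vertex $a$, the Captain plays the position $(h_a, a \cap \F(C_p))$. The connectedness condition of join trees guarantees that every \iecomponent{v_p, M_r} $C_r$ is contained in the node set covered by the vertices of exactly one subtree rooted at a child of $a$, so the Captain descends to that child. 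Since $\JT$ is finite and the play strictly descends at each move, the resulting strategy graph is a finite tree and hence acyclic, so the Robber is eventually captured.

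For the direction $(\Rightarrow)$, start with a winning strategy $\sigma$. The first step is to replace $\sigma$ by a monotone winning strategy $\sigma'$. The existence of such a $\sigma'$ is the claim in~\cite{GS08} that there is no incentive for the Captain to play non-monotonically, and can be established by a pruning argument on $G(\sigma)$: whenever a Robber option would escape into previously-cleared territory, one modifies the Captain's choice of $M_r$ to block the re-entry, exploiting the acyclicity of $G(\sigma)$ to argue termination. Given $\sigma'$, I would construct $\HG_a$ by orienting $G(\sigma')$ as a rooted tree $T$ and taking, for each non-initial configuration $(h_r, M_r, C_r)$, the hyperedge $M_r$. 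Since $M_r \subseteq h_r \in \edges(\HG_2)$, we immediately get $\HG_a \leq \HG_2$. The inclusion $\HG_1 \leq \HG_a$ follows by contradiction: if some $e \in \edges(\HG_1)$ were never contained in any $M_r$ played by $\sigma'$, the Robber could exploit $e$ to keep at least one of its endpoints reachable after every move, forcing an infinite play and contradicting the acyclicity of $G(\sigma')$. Finally, the join-tree connectedness condition on $T$ is a direct consequence of monotonicity: once a node $X$ exits the Robber's current component it never re-enters, so the set of configurations whose label $M_r$ contains $X$ forms a connected subtree of~$T$.

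The main obstacle is extracting the monotone winning strategy $\sigma'$ from an arbitrary winning one, since the very definition of \iecomponent{v_p, M_r} explicitly allows the Robber to reclaim nodes vacated by cops between moves, so the ``obvious'' descent argument may fail without care. Once monotonicity is secured, verification of the three tree-projection axioms against the labeled strategy tree is routine bookkeeping. A clean alternative would be to bypass the monotonization step by constructing $\HG_a$ directly from a non-monotone winning strategy, using an enriched labeling that records the union of components ever reached by the Robber below each configuration; this avoids the monotonicity lemma but shifts the combinatorial work into verifying the connectedness condition for the enriched labels.
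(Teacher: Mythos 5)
This theorem is not proved in the paper at all: it is imported from~\cite{GS08}, and the only pieces of its proof that the paper reproduces are the escape-door machinery and Lemma~\ref{lem:construction}. Your plan follows the same route as the cited proof: for $(\Leftarrow)$, a join tree of the tree projection drives a top-down Captain strategy (correct, and routine given the connectedness condition); for $(\Rightarrow)$, one first monotonizes the winning strategy and then reads the hyperedges $M_r$ off the monotone strategy tree. Your coverage argument for $\HG_1\leq\HG_a$ (the Robber tracks a hyperedge $e$, whose uncovered part always lies in a single component, so some position must satisfy $e\subseteq M_r$) and the derivation of the connectedness condition from monotonicity are the standard ones.

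The one point where your sketch goes wrong is the monotonization step, which you correctly identify as the crux. You describe the repair as modifying the Captain's choice ``to block the re-entry,'' which suggests strengthening a move by adding cops. The actual transformation (Expression~\eqref{def:transformation} and Lemma~\ref{lem:construction}) goes in the opposite direction: if the move from $(h_r,M_r,C_r)$ to $(h_s,M_s)$ has nonempty escape door $\E((M_r,C_r),M_s)=M_r\cap\F(C_r)\setminus M_s$, one \emph{retracts} exactly those cops from the \emph{preceding} move, replacing $M_r$ by $M_r'=M_r\setminus\E((M_r,C_r),M_s)$; the Robber's component grows to some $C_r'\supseteq C_r$, the options against $M_s$ from $C_r'$ coincide with those from $C_r$, and the offending move becomes monotone. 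Nothing is ever added, so staying inside the squad $h_r$ is automatic---an issue your ``block the re-entry'' version would have to confront, since the blocking cops need not be available in $h_r$ or reachable by the Robber. Moreover, termination of the repeated local repair does not follow from the acyclicity of $G(\sigma)$ alone; the analogous argument in the proof of Theorem~\ref{thm:greedy} needs the configurations to be processed in topological order together with an explicit count of how many new nodes each repair can create. Without these two ingredients your $(\Rightarrow)$ direction does not yet go through, although the overall architecture is the right one.
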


Recall that the winning strategy in Example~\ref{ex:construction} is not monotone. However, an important property of this game is that there is
no incentive for the Captain to play a strategy that is not monotone.

\begin{theorem}[cf. \cite{GS08}]\label{thm:costruzione}
In the Captain and Robber game played on the pair $(\HG_1,\HG_2)$, a winning strategy exists if, and only if, a monotone winning strategy
exists.

Moreover, from any monotone winning strategy, a tree projection of $\HG_1$ w.r.t.~$\HG_2$ can be computed in polynomial time.
\end{theorem}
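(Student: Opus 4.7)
The plan is to first establish the ``moreover'' part, extracting a tree projection from a monotone winning strategy, and then leverage it together with the preceding game-theoretic characterization to close the equivalence. Along the way, the hard part is the connectedness condition of the constructed hypergraph, which is precisely where monotonicity becomes indispensable.

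For the extraction, let $\sigma$ be a monotone winning strategy, so $G(\sigma)$ is acyclic; by monotonicity, components shrink strictly along every branch, and $G(\sigma)$ can be viewed as a rooted tree $T$ with root $(\emptyset,\emptyset,\nodes(\HG_1))$. To each non-root configuration $v=(h,M,C)$ obtained from parent $v_p=(h_p,M_p,C_p)$ via the Captain's move $\sigma(v_p)=(h,M)$, I would associate a hyperedge $\chi(v)\subseteq h$, roughly the cops active at the transition step together with any nodes of $\partial C_p$ required to fully absorb hyperedges of $\HG_1$ that become entirely confined at this step. Taking $\HG_a$ to be the hypergraph whose hyperedges are these $\chi(v)$, I would verify (i) $\HG_a\leq\HG_2$, which is immediate from $\chi(v)\subseteq h\in\edges(\HG_2)$; (ii) $\HG_1\leq\HG_a$, because every hyperedge of $\HG_1$ must be fully absorbed at some step of a winning strategy, otherwise the Robber could permanently exploit it; and (iii) acyclicity of $\HG_a$, with $T$ as witness join tree.

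For the equivalence, one direction is trivial. For the converse, assume a winning strategy on $(\HG_1,\HG_2)$ exists. By the preceding theorem, a tree projection $\HG_a$ of $\HG_1$ w.r.t.\ $\HG_2$ exists. I would build a monotone winning strategy from $\HG_a$ directly: fix a join tree $J$ of $\HG_a$ rooted at some vertex. At each configuration $v_p=(h_p,M_p,C_p)$, the Captain selects the topmost vertex $t$ of $J$ whose hyperedge meets $C_p$, picks a hyperedge $h\in\edges(\HG_2)$ containing $t$ (which exists since $\HG_a\leq\HG_2$), and activates the cops $t\cap\F(C_p)$. By the connectedness condition of $J$, every resulting Robber option is confined to nodes covered by a strict subtree of $J$ rooted below $t$; this gives both termination and monotonicity in one stroke.

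The main obstacle is the connectedness condition of the hypergraph $\HG_a$ constructed in the first step. Concretely, I must show that for every node $X$ of $\HG_1$, the set $\{v\in\vertices(T)\mid X\in\chi(v)\}$ induces a connected subtree of $T$. This is exactly where monotonicity is essential: if $X$ drops out of $\chi$ along a downward path of $T$, monotonicity prevents any later configuration on the same branch from reactivating $X$, because the Robber component giving access to $X$ has been sealed off and cannot be re-entered. Pinning down a definition of $\chi$ that simultaneously guarantees property (ii) and the connectedness condition, and correctly handling the boundary cases (the root and the capture configurations), is what consumes most of the technical effort; once the monotonicity invariant is established, the polynomial-time bound follows immediately since the construction only inspects $G(\sigma)$ once.
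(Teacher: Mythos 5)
Your treatment of the ``moreover'' part is essentially the paper's: the hyperedges of the tree projection are the sets of cops activated by the Captain (the paper simply takes $\edges(\HG_a)=\{M \mid \sigma(v)=(h,M)\}$, with no need for the extra $\partial C_p$ padding you hedge about, since at any capture configuration every hyperedge of $\HG_1$ still meeting the Robber's component is forced inside the final $M$), the strategy graph serves as the join tree, and monotonicity is what yields the connectedness condition. That part is fine, modulo the vagueness in your definition of $\chi(v)$.

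The equivalence direction, however, contains a genuine gap: circularity. To turn a winning strategy into a monotone one, you invoke the preceding characterization theorem to conclude that a tree projection exists, and then read a monotone strategy off a join tree of that projection. But the implication ``winning strategy $\Rightarrow$ tree projection exists'' is precisely the hard direction of that characterization, and its only available proof---in \cite{GS08} and in the machinery this paper recalls immediately after the statement---proceeds by first monotonizing the strategy. The paper explicitly introduces that machinery as ``the crucial properties to establish Theorem~\ref{thm:costruzione}'': one locates a non-monotone move, replaces $(h_r,M_r)$ by $(h_r,M_r\setminus\E((M_r,C_r),M_s))$ as in Expression~\eqref{def:transformation}, and uses properties (1)--(5) of Lemma~\ref{lem:construction} to show that each such surgery preserves the winning property, leaves the remainder of the strategy intact, and eliminates the escape door; iterating yields a monotone winning strategy directly, from which the tree projection is then extracted. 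Your route assumes exactly what is to be proven. To repair it you would need either an independent proof that non-monotone winning strategies already certify the existence of a tree projection (none is given or cited here), or---more naturally---the direct strategy surgery that the paper's Lemma~\ref{lem:construction} encapsulates.
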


\begin{example}\label{ex:construction-bis}\em
Consider again the setting of Example~\ref{ex:construction}, and the strategy graph $G(\bar \sigma)$ shown in
Figure~\ref{fig:strategy-new-bis}. Note that the strategy $\bar \sigma$ is monotone, and in fact the moves of the Captain one-to-one correspond
with the hyperedges in the tree projection $\HG_a$. \hfill $\lhd$
\end{example}

The crucial properties to establish Theorem~\ref{thm:costruzione} are next recalled, as they will be useful in our subsequent analysis too. Let
$\sigma$ be a strategy, and let $v_p=(h_p,M_p,C_p)$ and $v_r=(h_r,M_r,C_r)$ be two configurations in its domain such that
$\sigma(v_p)=(h_r,M_r)$ and $C_r$ is a \ecomponent{v_p,M_{r}}. Let $\sigma(v_r)=(h_s,M_s)$ and define $\E((M_r,C_r),M_s)=M_r\cap \F(C_r)
\setminus M_{s}$ (which is equivalent to $\partial C_r\setminus M_{s}$ because $C_r$ is an \component{M_r}) as the \emph{escape-door} of the
Robber in $v_r$ when attacked with $M_s$. From~\cite{GS08}, a move is monotone if, and only if, such an escape door is empty; in particular,
$\sigma(v_r)$ is non-monotone if (and only if) $\E((M_r,C_r),M_s)\neq\emptyset$.

Let $M_r'=M_r\setminus \E((M_r,C_r),M_s)$, let $C_r'$ be the \component{M_r'} with $C_r\cup \E((M_r,C_r),M_s)\subseteq C_r'$, which exists
since $\E((M_r,C_r),M_s)\subseteq \F(C_r)$ and $M_r'\subseteq M_r$, and let $v_r'=(h_r,M_r',C_r')$. Finally, consider the following strategy
$\sigma'$:
\begin{equation}
\label{def:transformation}
\sigma'(h,M,C)= \left\{
\begin{array}{ll}

(h_r,M_r') & \mbox{ if }(h,M,C)=(h_p,M_p,C_p)\\
\sigma(h,M,C) & \mbox{ otherwise.}\\
\end{array}
\right.
\end{equation}

For such a state of the game, a number of technical properties have been proved in~\cite{GS08}. We summarize them in the following lemma.

\begin{lemma}[\cite{GS08}]\label{lem:construction}
The following properties hold:

\begin{enumerate}
\item[\em (1)] $\E((M'_r,C'_r),M_s)=\emptyset$.
\item[\em (2)] For each \iecomponent{v_p,M_r} $C$, either $C\subseteq C_r'$ or $C$ is a \iecomponent{v_p,M_r'}.
\item[\em (3)] For each \iecomponent{v_p,M_r'} $C'\neq C_r'$, $C'$ is a \iecomponent{v_p,M_r}.
\item[\em (4)] A set $C$ is a \iecomponent{v_r,M_s} if, and only if, it is a \iecomponent{v_r',M_s}.
\item[\em (5)] If $\sigma$ is a winning strategy, then $\sigma'$ is a winning strategy too.
\end{enumerate}
\end{lemma}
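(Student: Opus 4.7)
The plan is to prove properties (1)--(4) by structural analysis of how components and borders transform when the escape door $\E((M_r,C_r),M_s)$ is removed from the cop set, and then to derive (5) by showing $G(\sigma')$ inherits acyclicity from $G(\sigma)$. The main bookkeeping tool is the inclusion $M_r' \subseteq M_r$, which ensures that every \component{M_r} is contained in some \component{M_r'}; this reduces the analysis to tracking which old components get absorbed into the enlarged component $C_r'$ and what happens at the new boundary.

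For (1), the target is $\partial C_r' \subseteq M_s$ (equivalently, $\E((M_r',C_r'),M_s) = \emptyset$). Any node $x \in \partial C_r'$ lies in $M_r'$ and is adjacent in $\HG_1$ to some node in $C_r'$. When $x$ is adjacent to a node of $C_r$, the conclusion is immediate from the identity $M_r' \cap \partial C_r = \partial C_r \setminus (\partial C_r \setminus M_s) = \partial C_r \cap M_s$. The harder case is when $x$ is adjacent only to $C_r' \setminus C_r$ through some hyperedge not touching $C_r$; here I would argue, via a careful analysis of the paths in $\HG_1 \setminus M_r'$ that witness the extension of $C_r$ to $C_r'$, that any such $x$ would itself have to belong to the escape door (since it bounds an extension opened up precisely by removing $\E$), contradicting $x \in M_r'$. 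For (2) and (3), I would show that every \component{M_r'} other than $C_r'$ coincides with a \component{M_r} unchanged, because outside a neighborhood of the escape door the cop sets $M_r$ and $M_r'$ agree; preservation of the \ecomponent{v_p,\cdot} status then reduces to the identity $M_p \cap M_r = M_p \cap M_r'$ on the nodes that witness connectedness in the definition of \ecomponent. For (4), the critical fact is that $\E((M_r,C_r),M_s) \cap M_s = \emptyset$ by construction, so $M_r \cap M_s = M_r' \cap M_s$, which is precisely the intersection used to test whether a set is a \iecomponent{v_r,M_s} or a \iecomponent{v_r',M_s}; the two conditions therefore coincide.

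Finally, for (5), I would use (2)--(4) to define a map from configurations reachable in $G(\sigma')$ to configurations reachable in $G(\sigma)$: the two graphs differ only at the arc out of $v_p$ and at the locally substituted configuration $v_r'$, and by (4) the subgame hanging off $v_r'$ in $G(\sigma')$ is isomorphic to the subgame hanging off $v_r$ in $G(\sigma)$. Any cycle in $G(\sigma')$ would therefore project back to a cycle in $G(\sigma)$, contradicting the hypothesis that $\sigma$ is winning. The main obstacle will be (1): showing that opening the escape door does not expose the Robber to any uncovered frontier node requires tracking how hyperedges of $\HG_1$ that straddle $C_r$, $\E$, and neighboring \components{M_r} combine to form the boundary of $C_r'$, and it is in this step that the full combinatorial content of the transformation must be extracted.
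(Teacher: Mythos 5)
First, a framing point: the paper does not prove this lemma at all. It is stated as ``Lemma~[\cite{GS08}]'' and the preceding sentence says explicitly that these properties ``have been proved in~\cite{GS08}'' and are only being summarized. So you are reconstructing the proof of a cited result rather than competing with an in-paper argument, and the evaluation below is of your reconstruction on its own terms.

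Your sketches of (2)--(5) are broadly on the right track (though for (4) the identity $M_r\cap M_s=M_r'\cap M_s$ is not by itself enough: the option condition also compares $C_r\cup C$ with $C_r'\cup C$ for $[M_r\cap M_s]$-connectivity, and you need the observation that $C_r'$ is itself $[M_r\cap M_s]$-connected and contains $C_r$). The genuine gap sits exactly where you place your ``main obstacle,'' in the second case of property (1), and the resolution you propose there does not work: a node $x\in M_r'$ that is adjacent only to $C_r'\setminus C_r$ need \emph{not} belong to the escape door, because $\E((M_r,C_r),M_s)$ is by definition confined to $\partial C_r$, whereas $x$ may be a cop of $M_r$ lying off $\F(C_r)$ that therefore survives into $M_r'$ and is exposed only after the expansion. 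Concretely, let $\HG_1$ have nodes $a,b,c,d$ and edges $\{a,b\},\{b,c\},\{c,d\}$, and take $M_r=\{b,d\}$, $C_r=\{a\}$, $M_s=\{a\}$ (all reachable from the initial configuration with suitable squads). Then $\E((M_r,C_r),M_s)=\{b\}$, $M_r'=\{d\}$, $C_r'=\{a,b,c\}$, and $d\in\partial C_r'\setminus M_s$, so the conclusion of (1) is violated and your case analysis cannot close. What rescues (1) in the setting where the lemma is actually applied (the \emph{nice} strategies of Theorem~\ref{thm:greedy}, where the cop set is first shrunk to $\partial C_r$) is the extra hypothesis $M_r\subseteq\F(C_r)$: under it $\E((M_r,C_r),M_s)=M_r\setminus M_s$, hence $M_r'=M_r\cap M_s\subseteq M_s$, and $\partial C_r'\subseteq M_r'\subseteq M_s$ follows with no case split at all. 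You should either make that hypothesis explicit (and then your two-case analysis becomes unnecessary) or consult~\cite{GS08} for the precise conditions under which the statement is established; as written, your argument for (1) fails, and (5) inherits the problem since it relies on (1) to keep the modified move monotone.
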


\subsection{Greedy Strategies}

Since winning strategies correspond to tree projections, there is no efficient algorithm for their computation. Indeed, just recall that
deciding the existence of a tree projection is not feasible in polynomial time, unless $\Pol=\NP$~\cite{GMS07}. Our goal is then to focus on
certain ``greedy'' strategies that are easy to compute. Intuitively, in greedy strategies it is required that \emph{all} cops available at the
current squad $h_p$ and reachable by the Robber enter in action. If all of them are in action, then a new squad $h_r$ is selected, again
requiring that all the active cops, i.e., those in the frontier, enter in action.

\begin{definition}\label{def:greedy}
On the Captain and Robber game played on $(\HG_1,\HG_2)$, a strategy $\sigma$ is {\em greedy} if, for any configuration $v_p=(h_p,M_p,C_p)$ in
the domain of $\sigma$, the next position $\sigma(v_p)=(h_{r},M_{r})$ is such that $M_{r}=h_{r}\cap \F(C_p)$, where $h_{r}=h_p$ if $h_p\cap
C_p\neq\emptyset$, and $h_{r}$ is any squad in $\edges(\HG_2)$ if $h_p\cap C_p=\emptyset$. \hfill $\Box$
\end{definition}

Given such a greedy way to select cops at each step, observe that the former case ($h_p\cap C_p\neq\emptyset$) may only occur if the Robber is
able to come back to some position previously controlled by the Captain. Greedy winning strategies are indeed non-monotone in general, and for
some pair of hypergraphs it is possible that there is no monotone winning greedy strategy, although  monotone winning strategies (non-greedy)
exist.

\begin{example}\label{ex:greedy-strat}\em
Consider again the hypergraphs $\HG_1$ and $\HG_2$ shown in Figure~\ref{fig:greedy-bis}, and recall that the strategy graph of a monotone
winning strategy $\bar \sigma$ is depicted in Figure~\ref{fig:strategy-new-bis}. However, there is no monotone greedy strategy in this case.
Indeed, if at the beginning of the game the Captain asks the squad $\{A,C,D,E,G\}$ to enter in action and the Robber goes on $B$, then in the
next move the Robber is forced to lose the control on $A$ in order to move on $\{C,B\}$ and eventually win via $\{B,A\}$---see again
Figure~\ref{fig:greedy-bis}. On the other hand, if the attack of the Captain starts on either side, say on the left branch, the Captain has
then to attack the component that includes the triangle and the other branch. At this point, the only available greedy choice is use the big
squad and hence to employ cops $\{C,D,E,G\}$. However, as in the previous case, $G$ will be later (necessarily) left free to the Robber, in
order to win the game. \hfill $\lhd$
\end{example}

We now show that, differently from arbitrary strategies, the existence of greedy winning strategies can be decided in polynomial time. To
establish the result, a useful technical property is that greedy strategies can only involve a polynomial number of configurations. Let us
denote by $\texttt{MaxGreedyStrat}(\HG_1,\HG_2)$ the maximum domain cardinality over any greedy strategy in the Robber and Captain game on a
pair $(\HG_1,\HG_2)$.

\begin{lemma}\label{lem:maxconf}
Let $(\HG_1,\HG_2)$ be a pair of hypergraphs. Then, $\texttt{MaxGreedyStrat}(\HG_1,\HG_2)$ is at most $|\edges(\HG_2)|\times|\nodes(\HG_1)|
(|\edges(\HG_2)|\times|\nodes(\HG_1)|+1) +1$.
\end{lemma}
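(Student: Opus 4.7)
Set $n := |\nodes(\HG_1)|$ and $m := |\edges(\HG_2)|$. The target bound $nm(nm+1)+1$ factors naturally as one initial configuration plus at most $(nm+1)\cdot nm$ non-initial ones, which strongly suggests that each non-initial configuration in $\mathrm{dom}(\sigma)$ can be injectively encoded by an ordered pair drawn from two sets of respective sizes $\le nm+1$ and $\le nm$. My plan is to build exactly such an encoding, exploiting the greedy rule to reduce all the data of a configuration to a short address.

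Fix a total order on $\nodes(\HG_1)$ and write $\rho(C)$ for the minimum of any nonempty $C \subseteq \nodes(\HG_1)$. For a non-initial $v=(h,M,C)\in\mathrm{dom}(\sigma)$ I define its \emph{signature} $\mathrm{sig}(v):=(h,\rho(C))\in\edges(\HG_2)\times\nodes(\HG_1)$, and set $\mathrm{sig}(\text{initial}):=\bot$. I then fix a BFS tree of $G(\sigma)$ rooted at the initial configuration, and for each non-initial $v$ I let $v_p$ denote its \emph{canonical parent} in that tree. The proposed encoding is $\Phi(v):=(\mathrm{sig}(v_p),\mathrm{sig}(v))$. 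To establish that $\Phi$ is injective I would use two reconstruction steps. First, given $v_p=(h_p,M_p,C_p)$ and $h$, greediness forces $M = h\cap\F(C_p)$. Second, given $M$ together with the node $\rho(C)$, the component $C$ is determined as the unique \component{M} containing that node, since distinct \components{M} are pairwise disjoint. Thus from $v_p$ and $\mathrm{sig}(v)$ one fully recovers $v$, so $\Phi$ is injective provided that $v_p$ itself is recoverable from $\mathrm{sig}(v_p)$.

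The main obstacle is precisely this last point: a priori two distinct candidates $v_p\neq v_p'$ in $\mathrm{dom}(\sigma)$ could share the same signature $(h_p,\rho(C_p))=(h_p,\rho(C_p'))$ while carrying different $M_p,M_p'$ and different components $C_p\neq C_p'$ that merely happen to share the same minimum element. Ruling this out is where the bulk of the technical work lies, and I would attack it by induction on the BFS-depth of configurations, exploiting two features of greedy play: on the one hand, the children of any fixed parent correspond to pairwise disjoint \iecomponent{v_p,\cdot}s, so sibling signatures are automatically distinct; on the other, the greedy rule rigidly couples each configuration's $M$-part to its parent's component, so that uniqueness of signatures propagates down the BFS tree from the initial configuration, provided one handles carefully the fact that $G(\sigma)$ may contain cycles (greedy strategies are not required to be monotone or winning, so the BFS tree is strictly contained in $G(\sigma)$ and the canonical-parent map is not uniquely determined by the strategy alone). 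Once this inductive step is carried out, counting the codomain of $\Phi$ immediately yields $|\mathrm{dom}(\sigma)|\le 1+(nm+1)\cdot nm$, which is the claimed bound.
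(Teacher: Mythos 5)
Your encoding has the right overall shape---it is essentially the paper's quadruple $(h_r,h_p,X_p,X_r)$, with $\mathrm{sig}(v_p)$ playing the role of $(h_p,X_p)$ and $\mathrm{sig}(v)$ that of $(h_r,X_r)$, and the observation that greediness forces $M=h\cap\F(C_p)$ is exactly the right lever---but the step you defer is the entire content of the lemma, and the route you sketch for it does not go through. Recovering $v$ from $\Phi(v)$ requires recovering $C_p$ from $\mathrm{sig}(v_p)=(h_p,\rho(C_p))$, and your reading of $\rho(C_p)$ as the representative of ``the unique \component{M_p} containing it'' is circular, since $M_p$ is part of what must be recovered. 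The induction you propose does not break the circularity: if two configurations $v_p\neq v_p'$ share the signature $(h,X)$, the inductive hypothesis only tells you that their respective parents have distinct signatures, which is of no help in concluding $v_p=v_p'$. And such collisions do occur: in a non-monotone greedy play the same squad $h$ can be selected in two different branches, or twice along one branch after the Robber runs backwards, with different frontiers and hence different cop-sets $M\neq M'$, while the two resulting components may well share their minimum node. So $\mathrm{sig}$ is not injective on the set of canonical parents, and $\Phi$ as defined is not injective; proving that it were would in particular bound the number of internal configurations by $nm+1$, a far stronger statement than the lemma.

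The missing idea is a case split on whether $h_p\cap C_p=\emptyset$. When it is empty (the Captain selects a fresh squad), the inclusions $\partial C_p\subseteq M_p\subseteq h_p$ together with $C_p\cap h_p=\emptyset$ make $C_p$ an \component{h_p} of $\HG_1$; since \components{h_p} form a canonical family depending on $h_p$ alone, the pair $(h_p,\rho(C_p))$ determines $C_p$ \emph{without any reference to} $M_p$, and only then does your reconstruction succeed, yielding at most $|\edges(\HG_2)|^2\times|\nodes(\HG_1)|^2$ configurations of this kind. When instead $h_p\cap C_p\neq\emptyset$, the greedy rule forces the Captain to keep the squad $h_p$, and one shows that every resulting \ecomponent{v_p,M_r} is again an \component{h_p} contained in $C_p$; in particular two moves of this second kind can never occur consecutively, and these configurations are counted separately by pairs $(h_p,X)$, contributing the remaining $|\edges(\HG_2)|\times|\nodes(\HG_1)|$ term. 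Without this dichotomy your signature has no canonical interpretation, and the counting collapses at its first step.
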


\begin{proof}
Let $\sigma$ be a greedy strategy, and let $v_p=(h_p,M_p,C_p)$ be a configuration in its domain. Note that the only configuration where
$h_p=M_p=\emptyset$ is the starting configuration $(\emptyset,\emptyset,\nodes(\HG_1))$, which is taken into account by the final ``$+1$'' in
the statement. Therefore, we next assume $M_p\neq\emptyset$.

Consider the case where $h_p\cap C_p=\emptyset$. In this case, a new squad $h_{r}\in\edges(\HG_2)$ is chosen by the Captain according to
$\sigma$. Since $C_p$ is an \component{M_p} and thus $\partial C_p\subseteq M_p\subseteq h_p$, we get that this case occurs only if $C_p$ is
actually an \component{h_p}, too. Such a component is uniquely identified by any pair of the form $(h_p,X_p)$ such that $X_p\in \nodes(\HG_1)$
is a representative of the component (e.g., the node in $C_p$ having the smallest position according to any fixed ordering over the nodes).
It follows that the new set of cops $M_{r}=h_{r}\cap \F(C_i)$ is uniquely determined by $h_{r}$ and $C_p$ and thus may be identified through a
triple $(h_{r},h_p,X_p)$. Thus, the maximum number of such sets $M_{r}$ of cops is $|\edges(\HG_2)|^2\times |\nodes(\HG_1)|$. Moreover, the
possible configurations $(h_{r},M_{r},C_{r})$ following $(h_p,M_p,C_p)$ in the game where the Captain plays according to $\sigma$ are
identified by quadruples of the form $(h_{r},h_p,X_p,X_{r})$, where $h_{r}$ is used both to identify itself and to determine the set $M_{r}$
together with $h_p$ and $X_p$, and where $X_{r}$ is a representative of the \component{M_{r}}.
In fact, if there is no \ecomponent{v_p,M_p}, then $X_{r}$ is a distinguished element not in $\nodes(\HG_1)$ (or some element in $M_p$ occupied
by some cop) meaning that the only configuration following $(h_p,M_p,C_p)$ is $(h_{r},M_{r},\emptyset)$ where the Robber is captured.
Overall, the maximum number of such configurations is $|\edges(\HG_2)|^2 \times |\nodes(\HG_1)|^2$.

Finally, consider the case where $h_p\cap C_p\neq\emptyset$. In this case, $M_{r}=h_p\cap \F(C_p)$. Since $C_p$ is an \component{M_p},
$\partial C_p\subseteq M_p\subseteq h_p$. It follows that the new nodes from $\F(C_p)$ to be included in $M_{r}$ belong to $C_p$, that is, we
may also write $M_{r}=M_p\cup (h_p\cap C_p)$. Note that no configuration of the game following this one can be of this type. Indeed, every
\component{M_{r}} $C_{r}$ where the Robber may go from $C_p$ will be a subset of $C_p$ (because $\partial C_p\subseteq M_p\subseteq
M_{r}\subseteq h_p$), and will have intersections with $h_p$. As a further consequence, such a $C_{r}$ must be an \component{h_p}. By
contradiction, if there is some node $X_p\in C_{r}\subseteq C_p$ that is \connected{M_{r}} to some $X_{r}$ in $h_p\setminus M_{r}$, then $X_p$
is also \connected{M_p} to $X_{r}$. However, this is impossible because $X_p$ is also in $C_p$ and hence $X_{r}$ would be in $C_p$, too, and
hence in $h_p\cap C_p$ and in $M_{r}$, by construction. Therefore, the possible configurations $(h_p,M_{r},C_{r})$ following $(h_p,M_p,C_p)$ in
the game where the Captain plays according to $\sigma$ are identified by pairs of the form $(h_p,X_p)$, where $X_p\in \nodes(\HG_1)$ is the
representative of the \component{h_p} $C_{r}$ (and where $M_{r}$ is computed from them).
As above, if there is no \ecomponent{v_p,M_p}, then $X_p$ is a distinguished element witnessing that the configuration is a capture
configuration of the form $(h_p,M_{r},\emptyset)$.
Overall, the maximum number of such configurations is $|\edges(\HG_2)|\times |\nodes(\HG_1)|$.~\hfill~$\Box$
\end{proof}

\begin{figure}[t]
  \vspace{2mm}
  \centering
  \fbox{\parbox{0.9\textwidth}{\small
  \vspace{-1mm}\begin{itemize}
  \item[]\hspace{-3mm}\emph{Boolean function} \textsc{GreedyWinningStrategy}$(h_p,M_p,C_p,i)$;
  \vspace{0mm}\item[] 
  /$\ast$  $(h_p,M_p,C_p)$ is an extended configuration over $(\HG_1,\HG_2)$,\\ \ \ \ \ \ $\hspace{10mm} i\geq 0$ is a natural number $\ast$/
  \
  \\ \vspace{-2mm}\hrule\ \\

\vspace{-4mm}
  \item[1)] \textbf{if} $i>\texttt{MaxGreedyStrat}(\HG_1,\HG_2)$, then \textbf{return} \textsc{False};
  \item[2)] \textbf{if} $h_p \cap C_p \neq \emptyset$, then  \textbf{let} $h_{r}=h_p$;
  \item[] \textbf{else}  \textbf{guess} a hyperedge $h_{r}\in\edges(\HG_2)$;
  \item[3)] \textbf{let} $M_{r}= h_{r} \cap \F(C_p) $;
  \item[4)] \textbf{for each} \ecomponent{(h_p,M_p,C_p),M_{r}} $C_{r}$ \textbf{do}
  \item[] \quad \ \textbf{if} \emph{not} \textsc{GreedyWinningStrategy}$(h_{r},M_{r},C_{r},i+1)$, then \textbf{return} \textsc{False};
  \item[5)] \textbf{return} \textsc{True};

 \vspace{-1mm}\end{itemize}}}
 \caption{\small \textsc{GreedyWinningStrategy}.}\label{fig:Algoritmo1}
\end{figure}

To see that the existence of a winning greedy strategy is decidable in polynomial time, consider the \textsc{GreedyWinningStrategy} algorithm
illustrated in Figure~\ref{fig:Algoritmo1}, which receives as input a configuration $(h_p,M_p,C_p)$ for the Robber and Captain game, plus a
``level'' $i$. Note that this algorithm is a high-level specification of an alternating Turing machine, say $\mathcal{M}_G$~\cite{john-90}.
After the first step, where we check that the number of recursive calls has not exceeded the number of all distinct configurations, the
algorithm suddenly evidences its \emph{non-deterministic} nature. Indeed, it guesses a hyperedge $h_{r}$ corresponding to the next move of the
Captain (existential step of $\mathcal{M}_G$). Eventually, it returns \textsc{True} if, and only if, the recursive calls
\textsc{GreedyWinningStrategy}$(h_{r},M_{r},C_{r},i+1)$ with $M_{r}= h_{r} \cap \F(C_p)$ succeed on each \ecomponent{(h_p,M_p,C_p),M_{r}}
(universal step of $\mathcal{M}_G$).

\begin{theorem}\label{thm:GreedyExistence}
Deciding the existence of a greedy winning strategy in the Robber and Captain game is feasible in polynomial time.
\end{theorem}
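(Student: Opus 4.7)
The plan is to turn the recursive specification \textsc{GreedyWinningStrategy} into a deterministic polynomial-time procedure by observing that it solves a reachability game on a polynomial-sized arena, and then invoking the classical attractor construction.

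First, I would establish correctness of the algorithm, namely that \textsc{GreedyWinningStrategy}$(h_p,M_p,C_p,0)$ returns \textsc{True} if and only if the Captain has a greedy winning strategy from the extended configuration $(h_p,M_p,C_p)$. This is a straightforward induction on the recursion: the existential guess at step~2 corresponds to the Captain's choice of a squad $h_r$ (with $h_r=h_p$ forced when $h_p\cap C_p\neq\emptyset$, as required by Definition~\ref{def:greedy}); the assignment $M_r=h_r\cap \F(C_p)$ at step~3 enforces the greedy constraint that every reachable cop in the chosen squad enters action; and the universal loop at step~4 over all \ecomponent{(h_p,M_p,C_p),M_r}s $C_r$ captures the Robber's adversarial response. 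The depth cutoff in step~1 is sound because any winning strategy has, by definition, an acyclic strategy graph whose nodes are distinct extended configurations, hence at most $\texttt{MaxGreedyStrat}(\HG_1,\HG_2)$ many (Lemma~\ref{lem:maxconf}); conversely, if the recursion tree were ever to exceed this depth, two nodes on the same root-to-leaf path would carry the same configuration, yielding a cycle in the induced strategy graph.

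To obtain a polynomial-time deterministic algorithm, I would then replace the non-deterministic recursion by a fixpoint computation over the state space of extended configurations. Build the arena $G=(V,A)$ whose vertex set $V$ consists of all triples $(h,M,C)$ arising as possible configurations, partitioned into \emph{Captain-vertices} (from which a greedy move $(h_r,M_r)$ is to be selected, with $M_r=h_r\cap \F(C)$) and \emph{Robber-vertices} (from which an \ecomponent{v,M_r} $C_r$ is to be selected), with edges encoding the alternation as in the algorithm. Capture configurations form the target set $W_0$. Then compute the attractor: iteratively add any Captain-vertex with some successor already in $W_k$, and any Robber-vertex all of whose successors lie in $W_k$, until stabilization. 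By Lemma~\ref{lem:maxconf}, $|V|$ is polynomial in $|\nodes(\HG_1)|+|\edges(\HG_2)|$; the out-degree is also polynomial (hyperedges times components); and each iteration can only add new vertices, so the attractor terminates after polynomially many rounds, each doing polynomial work. The initial configuration $(\emptyset,\emptyset,\nodes(\HG_1))$ lies in the attractor if and only if the Captain has a greedy winning strategy.

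The main technical obstacle I anticipate is simply verifying rigorously that the state of the recursion is fully captured by the triple $(h_p,M_p,C_p)$, independently of the history of play and of the counter~$i$. This is what licenses memoization and reduces the search tree (which could otherwise be exponential, since greedy strategies are non-monotone and may revisit vertices) to a polynomial game graph. Once this is pinned down, combining the correctness argument, the bound from Lemma~\ref{lem:maxconf}, and the polynomial-time attractor computation yields the theorem.
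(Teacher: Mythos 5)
Your proposal is correct and matches the paper's argument in substance: both hinge on Lemma~\ref{lem:maxconf} to bound the configuration space polynomially, establish soundness of the depth cutoff the same way, and then evaluate the resulting AND--OR game graph. The only cosmetic difference is that the paper packages the deterministic evaluation as a log-space alternating Turing machine and invokes Alternating Logspace $=$ P, whereas you run the attractor computation directly on the explicit arena --- which is exactly the marking procedure the paper itself describes immediately after the theorem to obtain Corollary~\ref{cor:computeGreedyStrat}.
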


\begin{proof}
Let $(\HG_1,\HG_2)$ be a pair of hypergraphs, and consider the execution of the Boolean function $\textsc{GreedyWinningStrategy}$ on input the
starting configuration $(\emptyset, \emptyset, \nodes(\HG_1),0)$. Due to its non-deterministic nature, it is easily seen that, by getting rid
of step {(1)}, it returns \textsc{True} if, and only if, the Captain has a greedy winning strategy in the game played on $(\HG_1,\HG_2)$ (which
we assume to be ``visible'' by the function at a every call, to avoid a longer signature).
Moreover, we claim that the check performed at step~{(1)} cannot lead to a wrong \textsc{False} output. Indeed, just observe that the number of
recursive calls is bounded by the number of all distinct configurations, which is $\texttt{MaxGreedyStrat}(\HG_1,\HG_2)$ at most, by
Lemma~\ref{lem:maxconf}. Therefore, if the recursion level $i$ exceeds this threshold, then we can safely answer \textsc{False}.

Let us now focus on the running time. We have already observed that $\textsc{GreedyWinningStrategy}$ may be implemented on an alternating
Turing machine $\mathcal{M}_G$, whose existential steps correspond to the guess statements at step~2, while universal steps are used for
checking that the conditions at step~4 are satisfied by all the relevant components. In addition, by indexing the various data structures and
by referring each component via one point contained in it (selected through any fixed criterium), the machine can be implemented to use
logarithmic many bits on its worktape. For instance, recall from the proof of Lemma~\ref{lem:maxconf} that every configuration is identified by
at most four elements of the form $(h_p,h_{r},X_p,X_{r})$ with $h_p,h_{r}\in\edges(\HG_2)$ and $X_p,X_{r}\in\nodes(\HG_1)$. Therefore, any
configuration may be encoded by (at most) four indexes whose maximum size is $\log \max \{|\edges(\HG_2)|,|\nodes(\HG_1)|\}$. Moreover, the
check at step~{(1)} ensures that the length of each branch of the computation tree of $\mathcal{M}_G$ is finite, and actually bounded by a
polynomial in the size of the input. For the sake of completeness, observe that all subtasks in the function, such as computing connected
components and the like, are easily implementable in nondeterministic logspace, so that such tasks just correspond to further
(polynomially-bounded) branches of the computation tree of $\mathcal{M}_G$. Thus, $\textsc{GreedyWinningStrategy}$ may be implemented in a
\emph{log-space} alternating Turing machine, which  immediately entails the result, because Alternating Logspace is equal to Polynomial
Time~\cite{chan-etal-81}.~\hfill~$\Box$
\end{proof}

It is well known that an alternating Turing machine $\mathcal{M}_G$ can be simulated by a standard machine in polynomial time. First, compute
the polynomially-many possible instant descriptions (IDs) of the machine, and build a graph representing the possible connections between any
pair of IDs, according to its transition relation. Then, evaluate this graph along some topological ordering as follows. Mark all IDs without
outcoming arcs associated with final accepting states; then mark all IDs associated with  existential states having a marked successor, or
associated with universal states, and whose successors are all marked. Then, the machine $\mathcal{M}_G$ accepts its input if, and only if, the
starting ID is marked. Moreover, the subgraph induced by the marked nodes encodes its accepting computations.

Moreover, from such a marked graph it is straightforward to compute the strategy graph of a greedy winning strategy, because IDs associated
with (children of) existential states encode the possible choices of the Captain.\footnote{For the sake of completeness note that, by using
these ideas, one might also provide  a direct dynamic programming algorithm to compute a strategy graph by using a bipartite graph representing
all possible configurations and positions of the Robber and Captain game. However, we find the non-deterministic function
$\textsc{GreedyWinningStrategy}$ more elegant and easy to present.} Just visit the graph starting from the initial configuration, but for each
ID associated with an existential state, select one child to be visited arbitrarily (all choices are marked and hence accepting).

\begin{corollary}\label{cor:computeGreedyStrat}
The strategy graph of a greedy winning strategy (if any) in the Robber and Captain game is computable in polynomial time.
\end{corollary}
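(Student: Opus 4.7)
The plan is to turn the alternating logspace algorithm \textsc{GreedyWinningStrategy} of Theorem~\ref{thm:GreedyExistence} into an explicit dynamic-programming procedure that, besides deciding the existence of a winning greedy strategy, also witnesses one. The key preliminary observation, already established in Lemma~\ref{lem:maxconf}, is that only polynomially many configurations can ever arise in a greedy play on $(\HG_1,\HG_2)$; hence the whole computation tree of $\mathcal{M}_G$ collapses to a graph of polynomial size.

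First, I would materialize the \emph{configuration graph} $\Gamma=(V,A)$, where $V$ is the set of all triples $(h_p,M_p,C_p)$ that may occur as inputs to \textsc{GreedyWinningStrategy}. By the bookkeeping used in the proof of Lemma~\ref{lem:maxconf} (each configuration is encoded by at most four indices $(h_p,h_r,X_p,X_r)$), these triples can be enumerated in polynomial time, together with the starting node $v_0=(\emptyset,\emptyset,\nodes(\HG_1))$ and the capture configurations of the form $(h_r,M_r,\emptyset)$. Next, for each non-capture configuration $v_p=(h_p,M_p,C_p)$, I would enumerate its \emph{admissible greedy moves}: if $h_p\cap C_p\neq\emptyset$, the unique move is obtained by keeping $h_r=h_p$ and setting $M_r=h_r\cap\F(C_p)$; otherwise every $h_r\in\edges(\HG_2)$ gives one candidate move $(h_r,M_r)$ with $M_r=h_r\cap\F(C_p)$. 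For each candidate $(h_r,M_r)$, I would then compute all \ecomponent{v_p,M_r}\ components $C_r$ (or detect that none exist, yielding a capture successor), add nodes $(h_r,M_r,C_r)$ to $V$, and add arcs from $v_p$ to them in $A$, tagged by the move $(h_r,M_r)$ that generated them. Enumerating components and adjacencies is a standard polynomial-time subtask.

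Second, I would mark the nodes of $\Gamma$ in a bottom-up fashion to identify the configurations from which the Captain can win greedily. Concretely, declare every capture configuration winning; then iterate the rule ``$v_p$ is winning provided there is some candidate move $(h_r,M_r)$ such that \emph{every} successor $(h_r,M_r,C_r)$ of $v_p$ tagged by $(h_r,M_r)$ is already marked winning,'' until a fixpoint is reached. Each iteration requires only polynomial time, and at most $|V|$ iterations suffice. Along with the marking, I would store, for each winning non-capture $v_p$, one witnessing move $\mu(v_p)=(h_r,M_r)$. This is precisely the deterministic simulation of the alternating machine $\mathcal{M}_G$ sketched before the statement of the corollary.

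Finally, if $v_0$ is marked winning, I build the strategy graph $G(\sigma)$ by a forward traversal: start from $v_0$, and for each visited configuration $v_p$ set $\sigma(v_p)=\mu(v_p)$ and enqueue all its successors under $\mu(v_p)$; otherwise report that no greedy winning strategy exists. The resulting $\sigma$ is greedy by construction (because every candidate move has the required form prescribed by Definition~\ref{def:greedy}), and winning because every path in $G(\sigma)$ ends in a capture configuration, so $G(\sigma)$ is acyclic. All steps run in polynomial time in $|\HG_1|+|\HG_2|$, which gives the claim. I do not anticipate any real obstacle: the only subtle point is the polynomial bound on the number of greedy configurations, but this is exactly what Lemma~\ref{lem:maxconf} guarantees.
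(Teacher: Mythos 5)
Your proposal is correct and is essentially the paper's own argument: the paper simulates the alternating logspace machine $\mathcal{M}_G$ by building the polynomial-size graph of its instantaneous descriptions, marking accepting IDs bottom-up (existential nodes need one marked successor, universal nodes need all), and then extracting the strategy graph by a forward traversal that picks one marked child at each existential node. Your direct dynamic program on game configurations is exactly the equivalent formulation the paper itself mentions in the footnote accompanying this corollary, and it rests on the same two ingredients you identify, namely the polynomial bound of Lemma~\ref{lem:maxconf} and the standard AND/OR fixpoint evaluation.
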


\subsection{Greedy Tree Projections and Larger Islands of Tractability}

From the previous sections (see Theorem~\ref{thm:costruzione} and Example~\ref{ex:greedy-strat}), we know that monotone winning strategies for
the Captain in the game over $(\HG_1,\HG_2)$ are associated with tree projections of $\HG_1$ w.r.t.~$\HG_2$, and that in some cases it is
possible that there is no monotone winning greedy strategy, although  monotone winning strategies (non-greedy) exist. In this section, we show
that from {\em any} (possibly non-monotone) greedy winning strategy a tree projection can be still computed in polynomial time. The key fact
here is that any non-monotone greedy strategy can be converted into a monotone one, though not a greedy one in general.

To show the result, it is useful to consider a special form of strategies that we call {\em nice} (for they remind the notion of nice tree
decompositions of graphs), where at every configuration the Captain first removes those cops that are no longer in the frontier.

Formally, $\sigma$ is a {\em nice} strategy if $\sigma(h_p,M_p,C_p)= (h_p,\partial C_p)$, whenever $\partial C_p \subset M_p$. Because such
inactive cops play no role in the Robber and Captain game, a winning nice strategy exists if (and only if) there exists a winning strategy, and
the same holds for greedy strategies. Just note that restricting the cops to the border of $C_p$ is a legal choice in greedy strategies (it
corresponds to the selection of the same squad $h_{r}=h_p$ before attacking the robber in the component $C_p$ with some further squad).
Clearly enough, such a nice strategy can be computed in polynomial time from any given strategy. Also, if desired, the above polynomial time
algorithm for computing a greedy strategy may be easily adapted to compute directly a winning nice greedy strategy (if any).

\begin{figure}[t]
  \centering
  \includegraphics[width=0.6\textwidth]{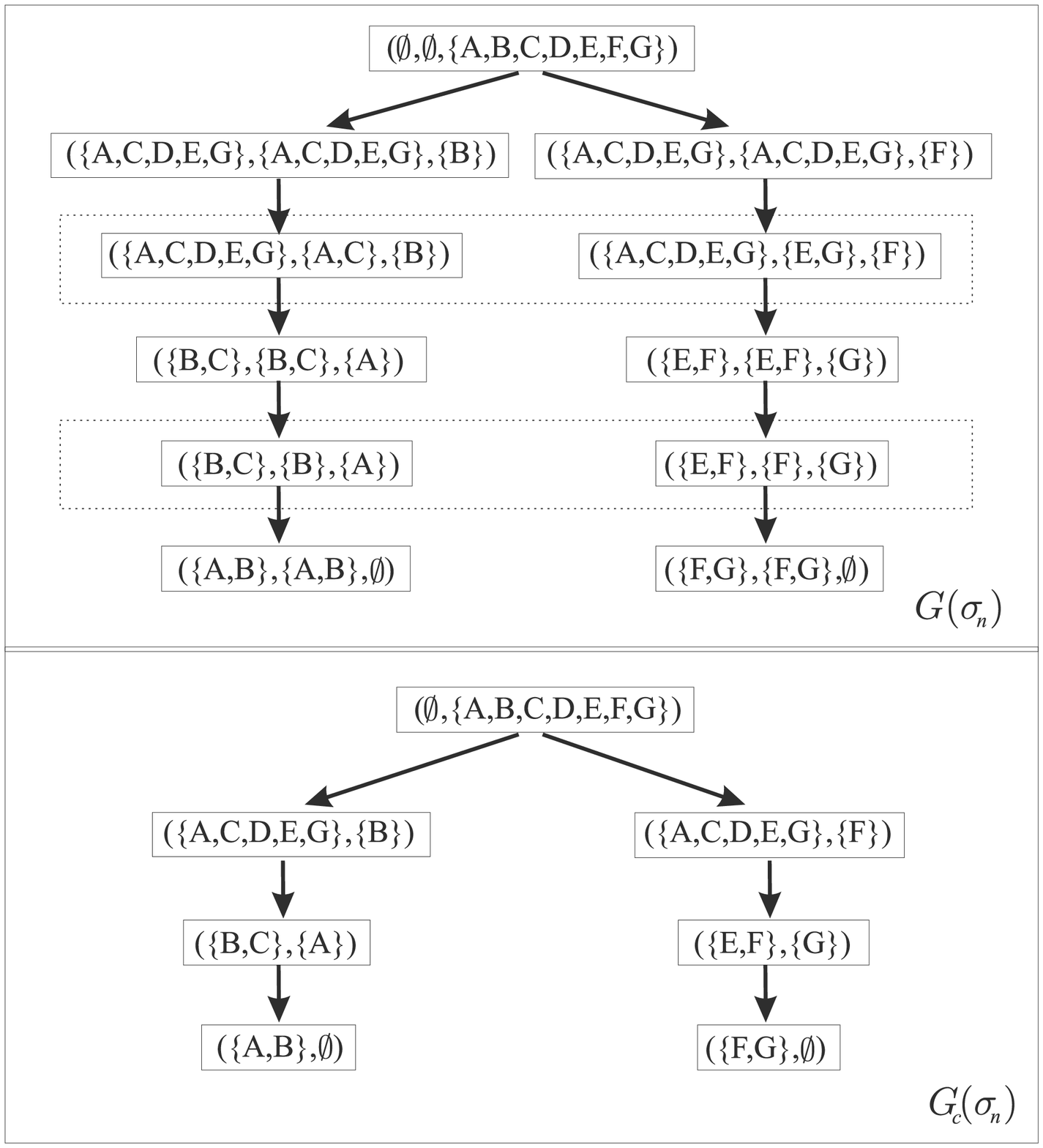}
  \caption{The strategy and component graphs for the nice strategy $\sigma_n$ in Example~\ref{ex:nice}.}\label{fig:nice}
\end{figure}

\begin{example}\label{ex:nice}\em
Consider again the setting discussed in Example~\ref{ex:construction} and illustrated in Figure~\ref{fig:greedy-bis}. Note that the strategy
$\sigma$ is not nice. Indeed, Figure~\ref{fig:nice} reports the strategy graph associated with a strategy $\sigma_n$ that is nice and that is
obtained from $\sigma$ by just explicitly adding the configurations where the Captain has to remove the cops that are no longer in the
frontier. \hfill $\lhd$
\end{example}

The reason for introducing these nice strategies is that they admit a more compact representation. First, given any configuration
$(h_p,M_p,C_p)$ and a Captain's choice $M_{r}$, the \ecomponents{(h_p,M_p,C_p),M_{r}} for the Robber are actually determined by $C_p$ and
$M_{r}$ only, because $\partial C_p$ is computable from $C_p$. Therefore, we use hereafter the simplified notation \ecomponent{C_p,M_{r}} to
refer to this set of \components{M_{r}}.
Moreover, in place of the strategy graph, we can use a \emph{component graph}, defined as follows.

\begin{definition}
Let $(\HG_1,\HG_2)$ be a pair of hypergraphs.
Let $G=(N,A)$ be a directed graph whose nodes are pairs of the form $(h_p,C_p)$, where $h_p\in\edges(\HG_2)$, and $C_p$ is either the emptyset
or a \component{\partial C_p} of $\HG_1$ such that $\partial C_p \subseteq h_p$. Then, we say that $G$ is a {\em component graph} if it meets
the following conditions:
\begin{enumerate}
\item[ (1)] There is a \emph{root} node $(\emptyset,\nodes(\HG_1))\in N$ that is the only node without incoming arcs.

\item[ (2)] Each node $(h_p,C_p)\in N$, with $C_p\neq \emptyset$, has outgoing arcs to $m\geq 0$ nodes $(h_{r},\bar C_1),\dots ,(h_{r},\bar
    C_m)$ such that, if $M_{r}$ is the set $\bigcup_{j=1}^m \partial \bar C_j \cup (C_p\setminus \bigcup_{j=1}^m \bar C_j)$, it holds that
    $M_{r}\subseteq h_{r}$ and the \ecomponents{C_p,M_{r}} are the components $\bar C_1,...,\bar C_m$.

\item[ (3)] Each node $(h_p,C_p)\in N$ has an outgoing arc to $(h_{r},\emptyset)$ if $C_p\subseteq h_{r}$. \hfill $\Box$
\end{enumerate}
\end{definition}

Note that every nice strategy $\sigma$ is encoded by the component graph $G_c(\sigma)=(N,A)$ defined as follows. There is a node $(h_p,C_p)$
(resp., $(h_p,\emptyset)$) in $N$ if there is a configuration $(h_p,\partial C_p,C_p)$ in the domain of $\sigma$ (resp., a capture
configuration $(h_p,\partial C_p,\emptyset$) induced by $\sigma$).
There is an arc in $A$ from a node $(h_p,C_p)$ to a node $(h_r,C_r)$ if there is an arc from $(h_p,M_p,C_p)$ to $(h_r,M_r,C_r)$ in the strategy
graph $G(\sigma)$. No more nodes and arcs occur in $N$ and $A$, respectively.
For instance, the graph depicted on the bottom part of Figure~\ref{fig:greedy-bis} is the component graph associated with the nice strategy
$\sigma_n$ of Example~\ref{ex:nice}.

Conversely, any component graph $G$ encodes a nice strategy $\sigma_{G}$, via the following procedure. Associate the root
$(\emptyset,\nodes(\HG_1))$ with the initial configuration $(\emptyset,\emptyset,\nodes(\HG_1))$. Inductively, assume that a node $(h_p,C_p)$
is associated with a configuration $(h_p,M_p,C_p)$, and that $(h_{r},\bar C_1),...,(h_{r},\bar C_m)$ are the labels of the nodes having an
incoming arc from $(h_p,M_p,C_p)$. Let $M_{r}= \bigcup_{j=1}^m \partial \bar C_j \cup (C_p\setminus \bigcup_{j=1}^m \bar C_j)$, with
$M_{r}\subseteq h_{r}$. Then, define $\sigma_{G}(h_p,M_p,C_p)=(h_{r},M_{r})$, and define $\sigma_G(h_{r},M_{r},\bar C_j)= (h_{r},\partial \bar
C_j)$, with $j\in \{1,...,m\}$, in the case where $\partial \bar C_j \subset M_{r}$.

\begin{theorem}\label{thm:greedy}
A tree projection of $\HG_1$ w.r.t.~$\HG_2$ can be computed in polynomial time if the Captain has a greedy winning strategy on $(\HG_1,\HG_2)$.
\end{theorem}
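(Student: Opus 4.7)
The plan is to combine the polynomial-time computability of greedy winning strategies (already established) with the monotonization machinery of Lemma~\ref{lem:construction}, so as to reduce to the already-available construction of Theorem~\ref{thm:costruzione}.

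First, I would invoke Corollary~\ref{cor:computeGreedyStrat} to compute, in polynomial time, the strategy graph $G(\sigma)$ of a greedy winning strategy $\sigma$ on $(\HG_1,\HG_2)$; w.l.o.g.\ make it nice by explicitly inserting the shrink-to-border moves (this only adds polynomially many configurations and can be done in polynomial time). By Lemma~\ref{lem:maxconf}, $G(\sigma)$ has polynomially many configurations.

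Second, I would monotonize $\sigma$ while preserving the ``winning'' property, thereby obtaining a monotone winning strategy $\hat\sigma$ (not necessarily greedy). Concretely, while some configuration $v_r=(h_r,M_r,C_r)$ of the current strategy has $\sigma(v_r)=(h_s,M_s)$ with non-empty escape door $\E((M_r,C_r),M_s)$, I would apply the local surgery from~(\ref{def:transformation}) at the parent configuration $v_p$, replacing $M_r$ by $M_r'=M_r\setminus\E((M_r,C_r),M_s)$ and the node $v_r$ by $v_r'=(h_r,M_r',C_r')$ with $C_r'\supseteq C_r$. By Lemma~\ref{lem:construction}(1)--(4) the effect on the strategy graph is local: parts (2)--(3) ensure that siblings of $v_r$ are either preserved or absorbed into $C_r'$, and part (4) ensures that the descendants of $v_r'$ under the next attack $M_s$ are exactly those of $v_r$, so the whole subtree can be reused. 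By Lemma~\ref{lem:construction}(5) the resulting strategy is still winning, and by Lemma~\ref{lem:construction}(1) the move at $v_r'$ is now monotone. Iterating over all non-monotone moves yields $\hat\sigma$.

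The main obstacle is to bound the number of such surgeries polynomially and to ensure the strategy graph stays polynomially sized throughout. For this, I would track the potential
\[
\Phi(\sigma)\;=\;\sum_{v=(h_v,M_v,C_v)\in\vertices(G(\sigma))} |M_v|,
\]
and observe that each transformation (i) does not increase the number of configurations, since by Lemma~\ref{lem:construction}(2)--(3) siblings are either kept or absorbed but never duplicated or split, and (ii) strictly decreases $\Phi$, because at the affected node $M_r'\subsetneq M_r$ while the remaining $M$-sets are unchanged. Since $\Phi(\sigma)\leq |\vertices(G(\sigma))|\cdot|\nodes(\HG_1)|$ is initially polynomial in the input (by Lemma~\ref{lem:maxconf}) and remains a non-negative integer, only polynomially many surgeries occur; each is implementable in polynomial time (connected-component computations and local rewirings).

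Finally, having produced the monotone winning strategy $\hat\sigma$ in polynomial time, I would apply the constructive direction of Theorem~\ref{thm:costruzione} to extract in polynomial time a tree projection of $\HG_1$ w.r.t.\ $\HG_2$, completing the argument.
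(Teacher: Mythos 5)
Your overall route is the same as the paper's: compute a nice greedy winning strategy in polynomial time, repeatedly apply the local surgery of Expression~\eqref{def:transformation} via Lemma~\ref{lem:construction} to eliminate non-monotone moves, and finally read off a tree projection from the resulting monotone winning strategy as in Theorem~\ref{thm:costruzione}. The genuine gap is in your termination bound. The strategy (equivalently, component) graph is a DAG whose nodes are configurations, and a configuration $v_r=(h_r,M_r,C_r)$ may have several parents. The surgery of Expression~\eqref{def:transformation} modifies the Captain's choice at \emph{one} parent $v_p$ only; for every other parent the strategy still prescribes $(h_r,M_r)$, so $v_r$ cannot be deleted and a \emph{new} node $v_r'=(h_r,M_r',C_r')$ must be added alongside it. Hence your claim (i) fails --- the number of configurations can grow --- and claim (ii) fails with it: $\Phi$ gains the term $|M_r'|$ contributed by the new node while losing nothing at the surviving $v_r$, so $\Phi$ can strictly increase. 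Trying to repair this by performing the replacement in place for all parents simultaneously is not covered by Lemma~\ref{lem:construction}, which concerns a modification at a single predecessor configuration; and unfolding the DAG into a tree so that parents are unique can blow the size up exponentially, so Lemma~\ref{lem:maxconf} would no longer give a polynomial initial value for $\Phi$.

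The paper closes exactly this gap by processing the nodes in a fixed topological order from the leaves towards the root and proving inductively that every newly inserted node $v_j'$ already heads a monotone subgame: its outgoing move is monotone by Lemma~\ref{lem:construction}~(1), and its successors are the unchanged, already-processed successors of $v_j$ by Lemma~\ref{lem:construction}~(4). Consequently no new node is ever reprocessed, each original node is transformed at most once per incoming arc, and the number of iterations is bounded by the number of nodes of the component graph times its maximum in-degree, which is polynomial by Lemma~\ref{lem:maxconf}. To complete your proof you need either this ordered processing with its counting argument, or a corrected potential that properly accounts for the duplication of a configuration across its parents; as written, your while-loop is not shown to terminate within polynomially many surgeries.
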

\begin{proof}
By Theorem~\ref{thm:GreedyExistence}, we can decide in polynomial time whether a winning greedy strategy for the Captain in the game played on
$(\HG_1,\HG_2)$ exists or not. In the negative case, we are done. Otherwise, compute in polynomial time a winning nice greedy strategy $\sigma$
(or turn a given strategy into a nice one), and compute its component graph $G_c(\sigma)$.
Make a copy $G'=(N',A')$ of $G_c(\sigma)$, and note that $G'$ is a directed acyclic graph, because it encodes a winning strategy.

Let $\overrightarrow N =v_1,\dots,v_{|N'|}$ be the topologically ordered sequence of the nodes of $G'$, where the  nodes without outgoing arcs,
called leaves, are in the first positions, and the node without incoming arcs, its root, is at the last position. Note that leaves correspond
to capture configurations for the robber, while the root $v_{|N'|}=(\emptyset, \nodes(\HG_1))$ is associated with the starting configuration
$(\emptyset, \emptyset, \nodes(\HG_1))$ of the game. Moreover, if $(v,v')\in A'$, the node $v$ is said to be a parent of $v'$, while $v'$ is
said to be a child of $v$. Then, modify the graph $G'$, by navigating the sequence $\overrightarrow N$ using an index $j$.

Starting with $j=1$,  while $j< |N'|$, consider the current node $v_j$ in the sequence, associated with a configuration $(h_j,M_j,C_j)$
(initially, the first leaf) in the domain of $\sigma_{G'}$. If every child of $v_j$ is labeled by some $(h'',C'')$ with $C''\subseteq C_j$,
then let index $j := j+1$ and continue the ``while'' loop, or stop and output the current graph $G'$ if $v_j$ is the root. Otherwise, let $v_s$
be a child of $v_j$ labeled by $(h_s,C_s)\in N'$ such that $C_s\not\subseteq C_j$, and associated with the configuration $(h_s,M_s,C_s)$. That
is, $\sigma_{G'}(h_j,M_j,C_j)=(h_s,M_s)$ is a non-monotone move. Then, take any parent $v_p$ of $v_j$, and let $(h_p,M_p,C_p)$ the
configuration associated with $v_p$ (whose label is thus $(h_p,C_p)$). Modify the graph so that  $\sigma_{G'}(h_p,M_p,C_p)=(h_j,M_j')$, where
$M_j'=M_j\setminus \E(v_j,M_s)$. In particular, let $C'_j$ be the \component{M'_j} that properly includes $C_j$, and for which thus $C_p\cup
C'_j$ is \connected{M_p\cap M'_j}. Then, the modified component graph will also encode the choice $\sigma_{G'}(h_j,M_j',C_j')=(h_j,\partial
C_j')$ if $\partial C_j' \subset M_j'$,  and $\sigma_{G'}(h_j,\partial C_j',C_j')=(h_s,M_s)$. The transformation of the graph is as follows:

\begin{itemize}
\item[(i)] Add a node $v_j'$ labeled by $(h_j,C'_j)$ to $N'$ and to the sequence $\overrightarrow N$ in the position before $v_j$, and add
    to $A'$ an arc from $v_j'$  to each child of $v_j$, i.e., to nodes labeled by $(h_s,C'')$, for each  \ecomponent{C_j',M_s} $C''$.

\item[(ii)] Remove from $A'$ all outgoing arcs of $v_p$ to nodes whose labels do not contain \ecomponents{C_p,M_j'} (in particular, the arc
    towards $v_j$ is removed).

\item[(iii)] Add to $A'$ an arc from $v_p$ to $v_j'$.

\item[(iv)] Remove from $N'$ any node different from the root which is left without incoming arcs, and continue the ``while'' loop
    considering again node $v_j$, or the next available node in $\overrightarrow N$ if $v_j$ has been removed by $N'$.
\end{itemize}

\begin{figure}[t]
  \centering
  \includegraphics[width=0.9\textwidth]{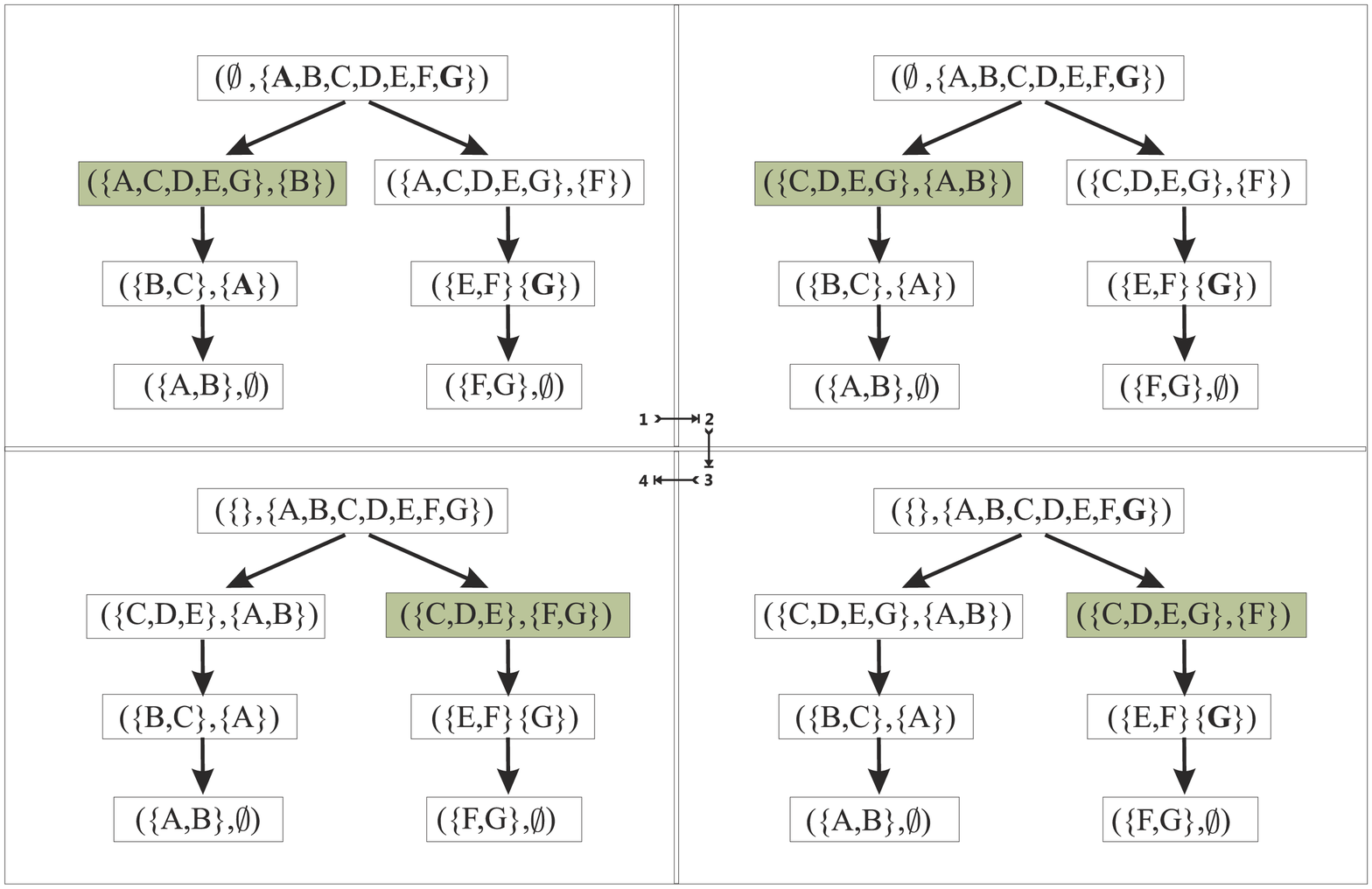}
  \caption{Illustration of the algorithm in the proof of Theorem~\ref{thm:greedy}.}\label{fig:greedybis}
\end{figure}

\begin{example}\em
The application of the above procedure to the nice strategy $\sigma_n$ discussed in Example~\ref{ex:nice} is illustrated in
Figure~\ref{fig:greedybis}. Note that two non-monotone moves are removed in total. Note that, at the end of the transformation, we get a
component graph encoding precisely the monotone strategy $\bar \sigma$, whose strategy graph has been illustrated in
Figure~\ref{fig:strategy-new-bis}. \hfill $\lhd$
\end{example}

First observe that every iteration of the loop at step~1 above, precisely implements on the graph $G'$ the transformation (of the non-monotone
strategy encoded by $G'$) described by Expression~\eqref{def:transformation}, and whose properties are described by
Lemma~\ref{lem:construction}. In more detail, with these properties in mind, by executing steps~(i)--(iii) we replace the Captain's choice
$(h_j,M_j)$ at $(h_p,M_p,C_p)$ by the new choice $(h_j,M_j')$, and we get the following situation: (a) Because of the new choice $M_j'$, only
one new $\ecomponent{C_p,M_j'}$ is available to the robber, that is, the $\component{M_j'}$ $C_j'$ properly including the $\component{M_j}$
$C_j$. As a consequence, at step~(i) the one node $v_j'$ corresponding to this component is added to $N'$.  (b) The $\ecomponents{C_j',M_s}$
are the same as the $\ecomponents{C_j,M_s}$, so that  the outgoing arcs of $v_j'$ will be the same as the node $v_j$. That is, we keep the same
winning strategy as before, as the Robber's options after the Captain's choice $M_s$  are the same as before (and hence the Captain knows how
to successfully attack them). (c) The set of $\ecomponents{C_p,M_j'}$, with the exception of the new $C_j'$, are a subset of the
$\ecomponents{C_p,M_j}$. In fact, some components may collapse after the new choice of the Captain. Then, at step~(iv), we remove the nodes
associated with $\ecomponents{C_p,M_j}$ that are now left without incoming arcs. For instance, it is possible that we delete $v_j$ if $v_p$ was
its only parent, or it is possible that we delete some nodes associated with collapsed components. Note that the new graph $G'$ obtained from
these steps is still a component graph, hence it encodes a (new) nice strategy $\sigma_{G'}$.

Therefore, Lemma~\ref{lem:construction} entails that, after each iteration and thus after the entire procedure, the strategy $\sigma_{G'}$ is a
winning strategy. We claim that it is actually a monotone winning strategy, by a simple inductive argument: if $v_j$ is the current node, after
the execution of steps (i)--(iv), $\sigma_{G'}$ is a monotone winning strategy for the game starting at the configuration $v_j$. Then, the
claim follows because, for $j=|N'|$, it means that  $\sigma_{G'}$ is a monotone winning strategy for the whole game starting at the root. The
base case is when the algorithm starts at $j=1$, and hence the statement holds because the first position in $\overrightarrow N$ is occupied by
some leaf, which is a capture configuration of the winning strategy.
Now assume that the statement holds for $j-1$, and consider the execution of the above procedure on node $v_j$. Note that the proposed
transformation deals with just one (possibly new) component $C_j'$ instead of the strictly smaller $C_j$; everything else in the strategy does
not change, in particular no node preceding $v_j$ in the topological order is affected by the transformation.
Then, the monotonicity of the strategy on the game starting at $v_j$ immediately follows from the induction hypothesis and from
Lemma~\ref{lem:construction}.(1), which says that $\E(v_j',M_s)=\emptyset$ and hence that this move is monotone, so that $C''\subseteq C'_j$,
for each $\ecomponent{C_j',M_{i+1}}$ $C''$.

Because each iteration in feasible in polynomial time, it just remains to show that the whole procedure requires at most polynomially many
iterations. To this end, note that whenever some node $v_j$ encodes a non-monotone move, one node $v_j'$ is added to $N'$ for each parent $v_p$
of $v_j$. Indeed, the node $v_j$ is considered again after the first iteration where it was evaluated, if it still has incoming arcs (see
step~(iv)). However, after steps (i)--(iv), $\sigma_{G'}$ is a monotone winning strategy for the game starting at the new configuration $v_j'$.
Therefore, no new node will be subject to further transformations in subsequent iterations along the given topological ordering of $N'$. It
follows that the number of iterations of the described procedure is bounded by $\nodes(G_c(\sigma))\times {\it MaxIn}$, where ${\it MaxIn}$ is
the largest in-degree over the nodes of $G_c(\sigma)$. Thus, the number of iterations is bounded by a polynomial in the size of the strategy
graph of the greedy winning strategy, which is in its turn polynomial in the size of $(\HG_1,\HG_2)$.

Finally, from the monotone winning strategy  $\sigma_{G'}$ encoded by the output $G'$ of the above procedure, a tree projection $\HG_a$ of
$(\HG_1,\HG_2)$ is immediately available. Just define $\nodes(\HG_a)=\nodes(\HG_1)$ and $\edges(\HG_a)= \{ M \mid \sigma_{G'}(v)=(h,M) \textit{
for some configuration } v \textit{ in the domain of } \sigma_{G'}\}$. See~\cite{GS08}, for more detail about such a relationship between
monotone strategies and tree projections.~\hfill~$\Box$
\end{proof}

With the above result in place, let $\C_{gtp}$ denote the class of all pairs $(Q,\V)$ such that there exists a greedy winning strategy $\sigma$
for the Captain in the game $\CR(\HG_Q,\HG_\V)$. As shown in the proof of Theorem~\ref{thm:greedy}, based on $\sigma$ a tree projection of
$\HG_Q$ w.r.t.~$\HG_\V$, which we call \emph{greedy tree projection}, can be computed in polynomial time. Therefore, the following is
immediately established.

\begin{corollary}\label{cor:greedyTP}
$\C_{gtp}$ is an island of tractability.
\end{corollary}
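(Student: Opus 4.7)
The plan is to verify the two requirements that define an island of tractability: efficient recognition of membership in $\C_{gtp}$, and efficient query evaluation for any pair $(Q,\V)\in\C_{gtp}$.

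First, for the recognition side, I would simply invoke Theorem~\ref{thm:GreedyExistence}, which already establishes that deciding whether the Captain has a greedy winning strategy on a given pair of hypergraphs is feasible in polynomial time. Applied to $(\HG_Q,\HG_\V)$, this yields a polynomial-time membership test for $\C_{gtp}$.

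Second, for the evaluation side, given $(Q,\V)\in\C_{gtp}$ and any (legal) database $\DB$, I would appeal to Theorem~\ref{thm:greedy}: from the existence of a greedy winning strategy, we can compute in polynomial time an actual tree projection $\HG_a$ of $\HG_Q$ w.r.t.~$\HG_\V$. Since $\HG_a$ is acyclic, a join tree of $\HG_a$ is obtainable in linear time, and we can then evaluate $Q$ using the views of $\V$ appearing in $\HG_a$ (each hyperedge of $\HG_a$ is contained in some hyperedge of $\HG_\V$) by the standard Yannakakis-style procedure on acyclic instances, exactly as recalled in Section~\ref{sec:acyclic}. Because $\V$ is a view system for $Q$ and $\DB$ is legal, the resulting computation correctly returns $Q^\onDB$ in time polynomial in $\|Q\|+\|\V\|+\|\DB\|$.

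I do not foresee a genuine obstacle here: the entire weight of the argument has already been shifted into Theorem~\ref{thm:GreedyExistence} and Theorem~\ref{thm:greedy}, together with the classical tractability of acyclic query answering once a tree projection is in hand. The only point worth stating cleanly is that the tree projection produced by Theorem~\ref{thm:greedy} is expressible through hyperedges covered by views in $\V$, so the join-tree evaluation really can be carried out using the available views, and hence the combined procedure is polynomial both in recognition and in evaluation, which is exactly what ``island of tractability'' requires.
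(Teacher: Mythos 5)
Your proposal is correct and follows exactly the route the paper takes: the paper justifies the corollary as ``immediately established'' from Theorem~\ref{thm:GreedyExistence} (polynomial-time recognition of greedy winning strategies) together with Theorem~\ref{thm:greedy} (polynomial-time construction of the tree projection), after which efficient evaluation via the resulting acyclic instance is the classical tree-projection tractability result recalled from~\cite{GS84}. Your additional remark that the hyperedges of the computed tree projection are covered by available views, so the join-tree evaluation can actually be carried out, is a correct and harmless elaboration of what the paper leaves implicit.
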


\subsection{Captain vs Marshal}

A related class of tractable pairs has been defined in~\cite{adler08} in terms of the \emph{Robber and Marshal game} played by one Marshal and
the Robber on the hypergraphs $(\HG_1,\HG_2)$. This game has been originally defined on a single hypergraph to characterize hypertree
decompositions~\cite{gott-etal-03}, and its natural extension to pairs of hypergraphs has been defined and studied in~\cite{adler08}.
The game is as follows. The Marshal may control one hyperedge of $\HG_2$, at each step.
The Robber stands on a node and can run at great speed along hyperedges of $\HG_1$; however, (s)he is not permitted to run through a node that
is controlled by the Marshal. Thus, a \emph{configuration} is a pair $(h,C)$, where $h$ is the hyperedge controlled by the Marshal, and $C$ is
an \component{h} where the Robber stands.
Let $(h_p,C_p)$ be a configuration. This is a capture configuration, where the Marshal wins, if $C_p\subseteq h_p$. Otherwise, the Marshal
moves to another hyperedge $h_{r}\in \edges(\HG_2)$; while (s)he moves, the Robber may run through those nodes that are left by the Marshal or
not yet occupied. Thus, the Robber selects an \component{h_{r}} $C_{r}$ such that $C_{r}\cup C_p$ is \connected{h_p\cap h_{r}}.
We say that the Marshal has a \emph{winning strategy} if, starting from the initial configuration $(\emptyset,\node)$, (s)he may end up the
game in a capture position, no matter of the Robber's moves. A winning strategy is \emph{monotone} if the Marshal may monotonically shrink the
set of nodes where the Robber stands.

Because only nodes in the frontier are actually used at each step in the monotone Robber and Marshal game, the monotone variants of the above
two games clearly define the same hypergraph properties.
\begin{fact}\label{prop:greedy-monotone}
The following are equivalent:
\begin{itemize}
  \item[(1)] There is a monotone winning strategy for the Marshal in the Robber and Marshal game on $(\HG_1,\HG_2)$.

  \item[(2)] There is a monotone winning greedy-strategy for the Captain in the Robber and Captain game on $(\HG_1,\HG_2)$.
\end{itemize}
\end{fact}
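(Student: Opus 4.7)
The plan is to establish a direct correspondence between the two games in the monotone regime via a single component-level lemma, and then translate strategies in both directions.

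First, I would observe that in both monotone games, from a configuration with Robber-component $C_p$, the next component $C_r$ chosen by the Robber is required to satisfy $C_r \subseteq C_p$. Under this restriction, the ``connectedness with $C_p$'' conditions ($C_p \cup C_r$ is \connected{M_p \cap M_r} for the Captain, and $C_p \cup C_r$ is \connected{h_p \cap h_r} for the Marshal) are automatic: when $C_r \subseteq C_p$, the union is just $C_p$, and $C_p$ is by construction a component of the previous move's control set, hence \connected{V} for any $V$ contained in that control set. Thus, in the monotone setting, after the controller commits to a choice (hyperedge $h_r$ for the Marshal, or greedy squad yielding $M_r = h_r \cap \F(C_p)$ for the Captain), the Robber's legal responses are exactly the \components{h_r} (resp., \components{M_r}) contained in $C_p$.

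The key lemma is then: \emph{for every $C_p$ and every $h_r \in \edges(\HG_2)$, with $M_r := h_r \cap \F(C_p)$, the family of \components{h_r} contained in $C_p$ coincides with the family of \components{M_r} contained in $C_p$.} The heart of the argument is that every node of $h_r \setminus M_r = h_r \setminus \F(C_p)$ lies outside the frontier of $C_p$, so by definition of $\F(C_p) = \nodes(\edges(C_p))$, such a node shares no hyperedge with any node of $C_p$. Therefore, if $X,Y \in C_p$ and a hyperedge $h$ witnesses that $X$ is adjacent to $Y$ through $h$, then $h \subseteq \F(C_p)$, so $h \cap h_r \subseteq M_r$; hence $h$ avoids $M_r$ iff it avoids $h_r$, which shows that $M_r$-adjacency and $h_r$-adjacency agree on $C_p$. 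A symmetric argument handles extensions beyond $C_p$: a would-be witness for non-maximality lying outside $C_p$ must sit in $\partial C_p \subseteq \F(C_p)$, so its membership in $h_r$ and in $M_r$ again coincide. Combining both directions shows the two component families agree.

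With this lemma, both implications are immediate by game-tree translation. For (1) $\Rightarrow$ (2), given a monotone winning Marshal strategy $\mu$, define the greedy Captain strategy $\sigma$ that, at configuration $(h_p,M_p,C_p)$, plays the squad $h_r := \mu(h_p,C_p)$ with $M_r := h_r \cap \F(C_p)$; by the lemma the Robber's \component{M_r} responses inside $C_p$ are precisely the same as those available in the Marshal game, so $\sigma$ is a (greedy) monotone winning strategy. For (2) $\Rightarrow$ (1), given a monotone winning greedy Captain strategy $\sigma$ whose next squad choice is $h_r$, let the Marshal play $h_r$; again the Robber's options match, and monotonicity is preserved. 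The main obstacle is conceptual rather than technical: making sure that the automatic-connectedness observation really does let us ignore the difference between the $M$-indexed and $h$-indexed connectivity conditions across the entire (possibly non-monotone in isolation, but monotone by hypothesis) strategy tree; once the component lemma is in place, the translation is purely bookkeeping.
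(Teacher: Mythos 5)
Your overall route is the same as the paper's (which justifies this fact with a single sentence: only frontier nodes matter in the monotone Marshal game), and your component lemma is correct as stated: for a hyperedge $h$ meeting $C_p$ one has $h\subseteq \F(C_p)$, hence $h\cap h_r = h\cap M_r$ and so $h\setminus M_r = h\setminus h_r$, which makes $[M_r]$- and $[h_r]$-adjacency agree on every path that stays inside $C_p$, and the maximality argument via $\partial C_p\subseteq \F(C_p)$ also goes through.

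There is, however, one step that as written is circular and needs to be made explicit. You assert that ``in the monotone setting the Robber's legal responses are exactly the components contained in $C_p$,'' and then use this to conclude that the translated move is monotone (and, implicitly, that the translated strategy is even defined and winning on all configurations the Robber can actually reach). But an option is \emph{any} $[M_r]$-component $C_r$ with $C_p\cup C_r$ $[M_p\cap M_r]$-connected; whether all such $C_r$ lie inside $C_p$ is precisely the monotonicity of the move you are constructing, so you cannot assume it. The missing argument is short and uses exactly your frontier observation: if $C_r\not\subseteq C_p$, any $[M_p\cap M_r]$-path from $C_p$ to a node of $C_r\setminus C_p$ must first leave $C_p$ through a node $Z$ lying in a hyperedge that meets $C_p$, so $Z\in\F(C_p)\setminus C_p=\partial C_p$, and $Z$ must avoid the new control set; since $\partial C_p\subseteq\F(C_p)$, such a $Z$ avoids $M_r=h_r\cap\F(C_p)$ if and only if it avoids $h_r$ (and, conversely, any $Z\in\partial C_p$ outside the new control set does yield such an option). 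Hence the Marshal move to $h_r$ and the greedy Captain move to $M_r$ are monotone for exactly the same choices of $h_r$ (namely those with $\partial C_p\subseteq h_r$), and only then does your component lemma license identifying the two option sets. With this inserted, the induction over the strategy tree, and the matching of the capture conditions ($C_p\subseteq h_r$ iff $C_p\subseteq M_r$, again because $C_p\subseteq\F(C_p)$), complete the proof.
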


Let $\C_{rm}$ denote the class of all pairs $(Q,\V)$ such that there exists a  monotone winning strategy for the Marshal on $(\HG_Q,\HG_\V)$.
From the results in~\cite{adler08,adler-thesis}, $\C_{rm}$ is an island of tractability as well. However, the set of tractable instances
identified by greedy winning strategies in the Robber and Captain game properly includes this class. The reason is that greedy winning
strategies are allowed to be non-monotone.

\begin{theorem}\label{thm:comparazione}
$\C_{rm} \subset \C_{gtp}$.
\end{theorem}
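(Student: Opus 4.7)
My plan is to establish the inclusion $\C_{rm} \subseteq \C_{gtp}$ as a direct consequence of Fact~\ref{prop:greedy-monotone}, and then to prove the strictness by exhibiting a witness pair that lies in $\C_{gtp} \setminus \C_{rm}$, reusing the pair $(\HG_1,\HG_2)$ already analyzed in Example~\ref{ex:greedy-strat}.

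For the inclusion, I would take any $(Q,\V) \in \C_{rm}$. By definition, the Marshal has a monotone winning strategy on $(\HG_Q,\HG_\V)$. By Fact~\ref{prop:greedy-monotone}, there is a monotone winning \emph{greedy} strategy for the Captain on the same pair. Since the defining condition of $\C_{gtp}$ only requires the existence of \emph{some} (possibly non-monotone) greedy winning strategy for the Captain, this immediately yields $(Q,\V) \in \C_{gtp}$. Hence $\C_{rm}\subseteq \C_{gtp}$.

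For the strict containment, the plan is to read off a separating instance from the hypergraphs $\HG_1,\HG_2$ displayed in Figure~\ref{fig:greedy-bis}, turning them into a pair $(Q,\V)$ in the obvious way (one query atom per hyperedge of $\HG_1$, one view per hyperedge of $\HG_2$). Example~\ref{ex:construction} exhibits a (non-monotone) winning greedy strategy $\sigma$ on $(\HG_1,\HG_2)$, so $(Q,\V)\in \C_{gtp}$. On the other hand, Example~\ref{ex:greedy-strat} argues that \emph{no} monotone greedy winning strategy exists for the Captain on this pair (any greedy attack is forced to later abandon a previously controlled node such as $A$ or $G$). Applying Fact~\ref{prop:greedy-monotone} in the contrapositive direction, the Marshal cannot have a monotone winning strategy on $(\HG_1,\HG_2)$ either; that is, $(Q,\V)\notin \C_{rm}$.

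The main conceptual point — and the only delicate step — is the asymmetry hidden in the definition of $\C_{gtp}$: it allows non-monotone greedy strategies, whereas $\C_{rm}$ is defined only via \emph{monotone} Marshal play. The whole argument hinges on using Fact~\ref{prop:greedy-monotone} only at the monotone level (so it can be applied in both directions), and on pointing out that on the witness pair the Captain's advantage lies precisely in his ability to voluntarily release cops, which a Marshal restricted to monotone play cannot simulate. No further technical machinery is required; the proof reduces to stitching together Fact~\ref{prop:greedy-monotone} and the analysis already carried out in Examples~\ref{ex:construction} and~\ref{ex:greedy-strat}.
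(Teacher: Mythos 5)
Your proposal is correct and follows essentially the same route as the paper: the inclusion is obtained from Fact~\ref{prop:greedy-monotone} (a monotone winning Marshal strategy yields a monotone, hence in particular some, greedy winning Captain strategy), and strictness is witnessed by the pair of Figure~\ref{fig:greedy-bis}, where Examples~\ref{ex:construction} and~\ref{ex:greedy-strat} give a non-monotone greedy winning strategy but rule out monotone ones, so that Fact~\ref{prop:greedy-monotone} (in the contrapositive) excludes membership in $\C_{rm}$. The only cosmetic difference is that the paper asserts directly that the Marshal has no monotone winning strategy on that pair, whereas you derive it from the absence of a monotone greedy Captain strategy; the two are equivalent by the same Fact.
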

\begin{proof}
Because greedy strategies are not required to be monotone, $\C_{rm} \subseteq \C_{gtp}$ follows from Fact~\ref{prop:greedy-monotone}.
For the proper inclusion, just consider again Example~\ref{ex:greedy-strat}. The pair of hypergraphs shown in Figure~\ref{fig:greedy-bis} is
such that the Marshal has no monotone winning strategy, while the Captain has a (non-monotone) winning greedy strategy.\footnote{This example
is in fact inspired by a similar simpler pair of hypergraphs where no monotone strategy for the Marshal exists, described
in~\cite{adler08}.}~\hfill~$\Box$
\end{proof}

For completeness, recall that the non-monotone variant of the Marshal and Robber game is instead too powerful to be useful. Indeed, there are
pairs of hypergraphs where the Marshal has a non-monotone winning strategy but no tree projection exists. We refer the interested reader
to~\cite{adler08} for more detail about the monotonicity gap in the Robber and Marshal game, and to~\cite{GS10} for a measure of distance
between non-monotone strategies in the Robber and Marshal game and tree projections.

\subsection{Greedy Decomposition Methods}

The tractability result about the general case of greedy tree projections can be immediately applied to every structural decomposition method,
in order to get new tractable variants of these methods.

Recall from Definition~\ref{def:dm} that a structural decomposition method {\tt DM} is a pair of polynomial-time computable functions $\lDM$
and $\rDM$ that, given a conjunctive query $Q$ and a database $\DB'$, compute a view system  $\V=\lDM(Q)$ and a database $\DB''=\rDM(Q,\DB')$
over the vocabulary of $\V$ that may be used to answer $Q$ on $\DB'$. In particular, the decompositions of $Q$ according to {\tt DM} are tree
projections of $\HG_Q$ w.r.t.~$\HG_\V$. Then, it is natural to consider the greedy variant of any structural decomposition method {\tt DM},
denoted by $\textit{greedy-}{\tt DM}$, whose associated decompositions are the \emph{greedy} tree projections of $\HG_Q$ w.r.t.~$\HG_\V$.

From Corollary~\ref{cor:greedyTP}, every decomposition method, possibly an intractable one such as the generalized hypertree decomposition
method, defines an island of tractability by means of its greedy variant.

\begin{fact}
Let {\tt DM} be a structural decomposition method and let $\textit{greedy-}{\tt DM}$ be its greedy variant. Then, the class of all queries
having a $\textit{greedy-}{\tt DM}$ decomposition is recognizable in polynomial time, and every query in the class may be evaluated in
polynomial time over any given database.
\end{fact}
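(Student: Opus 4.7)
The plan is to reduce both claims to machinery already established in the paper, by viewing a \textit{greedy-}{\tt DM} decomposition as a greedy tree projection of $\HG_Q$ w.r.t.\ $\HG_\V$ where $\V=\lDM(Q)$. Since $\lDM$ is polynomial-time computable by Definition~\ref{def:dm}, the view system $\V$ and its hypergraph $\HG_\V$ are produced in polynomial time from $Q$. Then, Theorem~\ref{thm:GreedyExistence} gives polynomial-time recognition of the existence of a greedy winning strategy for the Captain in the Robber and Captain game on $(\HG_Q,\HG_\V)$, which by definition of $\textit{greedy-}{\tt DM}$ is exactly what it means for $Q$ to admit a greedy decomposition. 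This settles the recognition half.

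For the evaluation half, first I would invoke Corollary~\ref{cor:computeGreedyStrat} to produce, in polynomial time, the strategy graph of a (possibly non-monotone) greedy winning strategy; then Theorem~\ref{thm:greedy} yields, again in polynomial time, an actual tree projection $\HG_a$ of $\HG_Q$ w.r.t.\ $\HG_\V$, together with a join tree $\JT_a$ for $\HG_a$. In parallel, given the input database $\DB'$ over the vocabulary of $Q$, one computes the view database $\DB''=\rDM(Q,\DB')$ in polynomial time, and forms $\DB=\DB'\cup\DB''$, which is legal by Definition~\ref{def:dm}.

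Now the evaluation proceeds by the standard Yannakakis-style procedure on $\JT_a$: for each vertex $h$ of $\JT_a$ pick any view $w_h\in\V$ with $h\subseteq\vars(w_h)$ (which exists since $\HG_a\leq\HG_\V$), perform bottom-up semijoins of the projections $w_h^\onDB[h]$ along $\JT_a$, and then a top-down pass to produce a full reducer. Because each $w_h^\onDB$ is subproblem-complete by Definition~\ref{def:dm}, these projected view relations together with the relations of the original query atoms form a legal database for the acyclic ``projected'' query associated with $\HG_a$, and by the classical results on acyclic conjunctive queries this computes $Q^\onDB$ (or any desired projection thereof) in polynomial time; adding the query-atom relations along the tree in the same top-down pass extracts the actual answers to $Q$.

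The only delicate point, which I would treat carefully, is bridging the gap between ``a tree projection exists'' (a purely structural statement) and the soundness/completeness of the join-tree evaluation: one must check that the view instances on the nodes of $\JT_a$, combined with the query-atom relations of $\DB'$, suffice to simulate the semijoin/full-reducer computation on $\HG_Q$ itself. This is handled exactly by the subproblem completeness requirement of Definition~\ref{def:dm} plus view consistency of $\views(Q)$ in any legal database, which together imply that $Q^\onDB=\bigwedge_{w_q\in\views(Q)} w_q^\onDB$ and that every projected $w_h^\onDB[h]$ is a superset of $Q^\onDB[h]$; standard acyclic-query reasoning on $\JT_a$ then delivers the answers in polynomial time.
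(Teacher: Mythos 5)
Your proposal is correct and follows essentially the same route as the paper, which presents this Fact as an immediate consequence of Corollary~\ref{cor:greedyTP}: recognition via Theorem~\ref{thm:GreedyExistence} applied to $(\HG_Q,\HG_{\lDM(Q)})$, and evaluation by extracting a greedy tree projection via Corollary~\ref{cor:computeGreedyStrat} and Theorem~\ref{thm:greedy} and then running standard acyclic (Yannakakis-style) processing on the resulting join tree. The only difference is that you spell out the routine correctness argument for the join-tree evaluation (subproblem completeness plus legality of $\DB'\cup\rDM(Q,\DB')$), which the paper leaves implicit.
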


For a notable example, consider the method based on generalized hypertree decompositions. Let $k\geq 1$. Recall that the width-$k$ generalized
hypertree decompositions of a query $Q$ are the tree projections of $(\HG_Q,\HG_Q^{k})$, as the view set $\it v\mbox{-}hw_k(Q)$ contains one
distinct view over each set of variables that can be covered by at most $k$ query-atoms. Then, the width-$k$ {\em greedy
hypertree-decompositions} (we omit ``generalized'', for short) of $Q$ are the greedy tree projections of $(\HG_Q,\HG_Q^{k})$. Accordingly, the
{\em greedy (generalized) hypertree-width} of $Q$, denoted by $\textit{gr-hw}$, is the smallest $k$ such that $Q$ has a greedy hypertree
decomposition.
In fact, this greedy variant provides a new tractable approximation of the (intractable) notion of generalized hypertree decomposition, which
is better than (standard) hypertree decompositions.

\begin{figure}[t]
  \centering
  \includegraphics[width=0.98\textwidth]{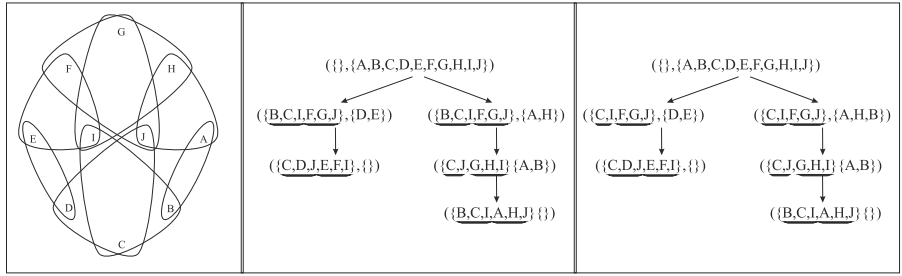}
  \caption{Examples in the proof of Fact~\ref{greedy-hw}.}
  \label{fig:GMS}
\end{figure}

\begin{fact}\label{greedy-hw}
For any query $Q$, $\textit{ghw}(Q)\leq \textit{gr-hw}(Q) \leq \textit{hw}(Q)$ holds. Moreover, there are queries $Q$ for which
$\textit{gr-hw}(Q) < \textit{hw}(Q)$, even for $\textit{gr-hw}(Q)=2$.
\end{fact}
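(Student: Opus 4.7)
The argument splits into three independent parts, matching the three claims.

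First, for the inequality $\textit{ghw}(Q)\leq \textit{gr-hw}(Q)$, I would simply unfold definitions. A greedy hypertree decomposition of width~$k$ is, by construction, a greedy tree projection of $(\HG_Q,\HG_Q^{k})$, hence in particular a tree projection of $(\HG_Q,\HG_Q^{k})$. By the characterization of generalized hypertree width in terms of tree projections recalled in Section~\ref{sec:framework}, this gives $\textit{ghw}(Q)\leq k$ whenever $\textit{gr-hw}(Q)\leq k$.

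Second, for $\textit{gr-hw}(Q)\leq \textit{hw}(Q)$, the plan is to route through the Robber and Marshal game. Assume $\textit{hw}(Q)\leq k$; by the game-theoretic characterization of (standard) hypertree width, the Marshal has a \emph{monotone} winning strategy in the Robber and Marshal game on $(\HG_Q,\HG_Q^{k})$ (each Marshal position corresponds to a hyperedge of $\HG_Q^{k}$, i.e., a union of at most $k$ query atoms). By Fact~\ref{prop:greedy-monotone}, this monotone Marshal strategy yields a monotone winning \emph{greedy} strategy for the Captain in the Robber and Captain game on the same pair. A monotone greedy winning strategy is a fortiori a greedy winning strategy, so by Theorem~\ref{thm:greedy} there is a greedy tree projection of $(\HG_Q,\HG_Q^{k})$, yielding $\textit{gr-hw}(Q)\leq k$.

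Third, for the strict inequality, I would exhibit a concrete query $Q$ whose hypergraph $\HG_Q$ is (essentially) the hypergraph $\HG_1$ of Example~\ref{ex:greedy-strat} (depicted in Figure~\ref{fig:greedy-bis}), arranged so that the hyperedge $\{A,C,D,E,G\}$ is the union of exactly two query-atoms and thus belongs to $\HG_Q^{2}$; concretely, one can replace that long hyperedge by two overlapping query atoms whose union is $\{A,C,D,E,G\}$, without altering the other hyperedges. The pair $(\HG_Q,\HG_Q^{2})$ then coincides (up to the trivial renaming of edges) with the pair $(\HG_1,\HG_2)$ of Example~\ref{ex:greedy-strat}: the Captain has the non-monotone greedy winning strategy already exhibited there, so $\textit{gr-hw}(Q)\leq 2$, while (as already observed in Example~\ref{ex:greedy-strat}) no monotone greedy winning strategy exists on this pair, hence by Fact~\ref{prop:greedy-monotone} no monotone Marshal strategy with squads of size at most two exists. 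By the game-characterization of hypertree width this forces $\textit{hw}(Q)>2$. Combined with part one, which gives $\textit{ghw}(Q)\leq 2$ and hence $\textit{gr-hw}(Q)\geq \textit{ghw}(Q)$ being at most~$2$, we conclude $\textit{gr-hw}(Q)=2<\textit{hw}(Q)$.

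\textbf{Main obstacle.} The routine inequalities are immediate once the game/tree-projection correspondences are in place; the only delicate point is the example. The challenge is to verify that the hypergraph used to witness the strict inequality really admits a \emph{non-monotone} greedy Captain strategy at width~$2$ while genuinely ruling out every monotone Marshal strategy of the same width. The example of Figure~\ref{fig:greedy-bis}, already analyzed in Examples~\ref{ex:construction} and~\ref{ex:greedy-strat}, is tailored precisely for this purpose, so the task reduces to packaging its hyperedges as the hypergraph of a concrete query in which each hyperedge of size exceeding one single atom can be realized as a union of at most two query atoms.
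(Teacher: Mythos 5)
Your first two inequalities are correct and follow essentially the same route as the paper: $\textit{ghw}(Q)\leq \textit{gr-hw}(Q)$ by unfolding definitions, and $\textit{gr-hw}(Q)\leq \textit{hw}(Q)$ by passing through the equivalence between the monotone Robber and Marshals game on $\HG_Q$, the monotone one-Marshal game on $(\HG_Q,\HG_Q^{k})$, and (via Fact~\ref{prop:greedy-monotone}) monotone greedy Captain strategies.

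The third part has a genuine gap. You claim that, after realizing $\{A,C,D,E,G\}$ as the union of two query atoms, the pair $(\HG_Q,\HG_Q^{2})$ ``coincides (up to the trivial renaming of edges)'' with the pair $(\HG_1,\HG_2)$ of Example~\ref{ex:greedy-strat}. It does not: $\HG_Q^{2}$ contains \emph{every} union of at most two hyperedges of $\HG_Q$ (e.g., $\{A,B\}\cup\{B,C\}$, $\{A,B\}\cup\{F,G\}$, each single atom combined with each of the two new atoms covering $\{A,C,D,E,G\}$, and so on), whereas $\HG_2$ in that example is a small, handcrafted set of squads. The non-existence of a monotone greedy Captain strategy (equivalently, of a monotone Marshal strategy) was argued in Example~\ref{ex:greedy-strat} only for the restricted squad set $\HG_2$, and that argument does not transfer to the much richer squad set $\HG_Q^{2}$; a monotone Marshal strategy using one of the extra squads could well exist, in which case $\textit{hw}(Q)\leq 2$ and the example collapses. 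A secondary issue: inserting two new atoms whose union is $\{A,C,D,E,G\}$ changes $\HG_1$ itself, hence the frontiers $\F(C)$ and the components on the Robber's side, so even the existence of the non-monotone greedy winning strategy would have to be re-verified on the modified hypergraph rather than read off from Example~\ref{ex:construction}. The paper sidesteps all of this by taking the query $Q_0$ of~\cite{CJG08,GMS07}, for which $\textit{hw}(Q_0)=3$ and $\textit{ghw}(Q_0)=2$ are already established in~\cite{GMS07}, and then directly exhibiting a greedy winning strategy for the Captain on $(\HG_{Q_0},\HG_{Q_0}^{2})$; only the lower bound on $\textit{gr-hw}$ and the upper bound $\textit{gr-hw}(Q_0)\leq 2$ need to be checked, the separation from $\textit{hw}$ being imported from the literature. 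To repair your argument you would either need to prove $\textit{hw}(Q)>2$ for your constructed query against \emph{all} squads in $\HG_Q^{2}$, or switch to a witness whose hypertree width is already known.
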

\begin{proof}
The first relationship is immediate: in the first inequality we use the fact that greedy hypertree decompositions are a special case of
generalized hypertree decompositions, while the second inequality holds because the notion of hypertree decomposition is characterized by the
monotone Robber and Marshals game, played on $\HG_Q$ by a Robber and $k$ Marshals~\cite{gott-etal-03}. This game is equivalent to play the
monotone game with one Marshal on the pair of hypergraphs $(\HG_Q,\HG_Q^{k})$, which is the same as playing the monotone Robber and Captain
game.

For the strict upper bound  $\textit{gr-hw}(Q) < \textit{hw}(Q)$, consider the query $Q_0$, taken from~\cite{CJG08,GMS07}, whose hypergraph
$\HG_{Q_0}$  is depicted in the left part of Figure~\ref{fig:GMS}. For this query, it is shown in~\cite{GMS07} that $\textit{hw}(Q_0)=3$ and
$\textit{ghw}(Q_0)=2$. However, $\textit{gr-hw}(Q_0)=2$ holds. Indeed, there is a winning greedy strategy for the Captain in the game played on
$(\HG_{Q_0},\HG_{Q_0}^{2})$, as shown in the central part of Figure~\ref{fig:GMS}, and thus there exists a greedy tree projection of
$\HG_{Q_0}$ w.r.t.~$\HG_{Q_0}^{2}$.
In the figure, the set of selected cops at each step is underlined in such a way that the reader may identify the original pair of hyperedges
from $\HG_{Q_0}$ that forms the chosen squad in $\HG_{Q_0}^{2}$. Note that the strategy is non-monotone, as it is witnessed by the right branch
where the Robber can return on the node $B$. However, by using the construction in Theorem~\ref{thm:greedy}, it can be turned into a monotone
(while not greedy) one, by removing the escape door $B$ in the first move of the Captain (see the right part of the figure). From the monotone
strategy, we immediately get the desired tree projection.~\hfill~$\Box$
\end{proof}

More general examples are given by the {\em subedge-based} decomposition methods, defined in~\cite{GMS07}. Recall that a subedge-method ${\tt
DM}$ is based on a function $f$ associating with each integer $k \geq 1$ and each hypergraph $\HG_Q=(V,E)$ of some query $Q$ a set $f (\HG_Q,
k)$ of subedges of $\HG_Q$, that is, a set of subsets of hyperedges in $E$. Moreover, the set of width-$k$ ${\tt DM}$-decompositions of $Q$ can
be obtained as follows: (1) obtain a hypertree decomposition $\HD$ of $\HG_f = (V, E\cup f (\HG, k))$, and (2) convert $\HD$ into a generalized
hypertree decomposition of $\HG_Q$ by replacing each subedge $h\in f (\HG_Q, k)\setminus E$ occurring in $\HD$ by some hyperedge $h'\in E$ such
that$h\subseteq h'$ (which exists because $h$ is a subedge).

Because such a method is based on width-$k$ hypertree decompositions, in the tree projection framework it can be recast as follows. A width-$k$
${\tt DM}$-decomposition is any tree decomposition of $\HG_Q$ w.r.t.~$\HG_f^k$ associated with some monotone winning strategy of the Robber and
Marshal game on this pair of hypergraphs. On the other hand, according to its greedy variant $\textit{greedy-}{\tt DM}$, the width-$k$
decompositions are the greedy tree projections of $\HG_Q$ w.r.t.~$\HG_f^k$. It follows that the greedy variant of this method is more powerful,
in general.

\begin{fact}\label{fact:subedge}
Let ${\tt DM}$ be any {subedge-based} decomposition method. Let $k\geq 1$ and let $Q$ be a query. Then, a width-$k$ ${\tt DM}$-decomposition of
$Q$ exists only if a width-$k$ $\textit{greedy-}{\tt DM}$-decomposition of $Q$ exists. The converse does not hold, in general.
\end{fact}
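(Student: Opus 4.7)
The plan is to derive both parts of the statement from machinery already established in the excerpt, treating the first part as a chaining of game-theoretic equivalences and the second part by reusing the separating example from Fact~\ref{greedy-hw}.

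For the forward direction, I would unfold the definition of a width-$k$ ${\tt DM}$-decomposition of $Q$: it is obtained from a hypertree decomposition of $\HG_f = (V, E \cup f(\HG_Q,k))$ of width $k$ by replacing subedges in $f(\HG_Q,k)\setminus E$ with covering hyperedges from $E$. In the tree projection framework, such a decomposition corresponds exactly to a tree projection of $\HG_Q$ with respect to $\HG_f^k$ that is produced by a monotone winning strategy of the Robber and Marshal game on $(\HG_Q, \HG_f^k)$. Invoking Fact~\ref{prop:greedy-monotone}, this monotone strategy for the Marshal translates into a monotone winning greedy-strategy for the Captain in the Robber and Captain game on $(\HG_Q, \HG_f^k)$, which is, a fortiori, a greedy winning strategy. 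Then, by Theorem~\ref{thm:greedy}, a greedy tree projection of $\HG_Q$ with respect to $\HG_f^k$ can be computed in polynomial time, and this is by definition a width-$k$ $\textit{greedy-}{\tt DM}$-decomposition.

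For the non-converse part, I would exhibit a subedge-based method and a query where the gap is witnessed. The cleanest choice is to specialize to the trivial subedge function $f(\HG,k) = \emptyset$, for which $\HG_f^k = \HG_Q^k$, so that width-$k$ ${\tt DM}$-decompositions coincide with width-$k$ (standard) hypertree decompositions and their greedy variants coincide with width-$k$ greedy hypertree decompositions. Then I would recall that the query $Q_0$ exhibited in the proof of Fact~\ref{greedy-hw} satisfies $\textit{hw}(Q_0) = 3$ but $\textit{gr-hw}(Q_0) = 2$, so for $k=2$ it admits a width-$k$ greedy-${\tt DM}$-decomposition but no width-$k$ ${\tt DM}$-decomposition.

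The only subtle point I foresee is checking that the identification between hypertree decompositions of $\HG_f$ of width $k$ and monotone winning strategies in the Robber and Marshal game on $(\HG_Q, \HG_f^k)$ really corresponds to the final ${\tt DM}$-decompositions of $\HG_Q$ after the subedge-replacement step; this is a known reformulation (it is precisely how the single-Marshal game on a pair of hypergraphs encodes the $k$-Marshals game on $\HG_f$), but it must be invoked explicitly. Beyond that, the argument is a routine composition of the earlier results and needs no further calculation.
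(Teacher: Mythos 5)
Your proposal is correct and follows essentially the same route as the paper: the forward direction is exactly the content of Theorem~\ref{thm:comparazione} (which the paper cites directly, and which you merely unfold into its ingredients, Fact~\ref{prop:greedy-monotone} plus the observation that monotone greedy strategies are in particular greedy), and the non-converse part uses the same specialization to $f(\HG_Q,k)=\emptyset$ and the same witness $Q_0$ from Fact~\ref{greedy-hw}. The ``subtle point'' you flag is already built into the paper's recasting of subedge-based methods in the paragraph preceding the statement, so nothing further is needed.
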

\begin{proof}
The first entailment follows from Theorem~\ref{thm:comparazione}. The fact that the converse does not hold in general, follows from
Fact~\ref{greedy-hw}, because the hypertree decomposition method is a subedge-based method (based on the function
$f(\HG_Q,k)=\emptyset$).~\hfill~$\Box$
\end{proof}

This is a remarkable result, as in~\cite{GMS07} some examples of subedge-based decomposition methods, such as the {\em component hypertree
decompositions}, are shown to generalize most previous proposals of tractable structural decomposition methods, such as hypertree and
spread-cut decompositions (in fact, all of them, but the approximation of fractional hypertree decomposition, later introduced in~\cite{M09}).
From Fact~\ref{fact:subedge}, their greedy variants are even more powerful.

\section{Tractability of Tree Projections over Small Arity Structures}\label{smallArities}

In this (light) section, we consider the case of relational structures having small arity, which is a relevant special case in real-world
applications.

In fact, observe that any variable that is not involved in any join operation in a conjunctive query (that is, any variable that occurs in one
atom only) is irrelevant and may be projected out in a preprocessing phase. It follows that the {\em effective arity} to be considered in our
structural techniques is actually determined by the largest number of variables that any atom has in common with other atoms (i.e., those
variables involved in join operations), independently of the arity of the relations in the original database schema. This number is often
small, in practice.\footnote{In fact, it is easy to further generalize this line of reasoning, by considering as ``effective arity'' the
maximum cardinality over the hyperedges in the GYO-reduct of $\HG_Q$. (Recall that the GYO reduct  of a hypergraph is obtained by iteratively
removing nodes that occur in one hyperedge only and hyperedges included in other hyperedges, until no further removal is possible---see,
e.g.,~\cite{ullm-89}.)}

Therefore, it is interesting to investigate whether the general problem of computing a tree projection of a pair of hypergraphs is any easier
in the case of small arity structures (for the sake of presentation, we just consider here the standard structure arity, leaving to the
interested reader the straightforward extension to the above mentioned ``effective arity''). We next show that the problem is indeed in
polynomial-time for bounded-arity structures, and it is moreover {\em fixed-parameter tractable (FPT)}, if the arity is used as a parameter of
the problem. This is not difficult to prove, but it was never stated before (as far as we know), and we believe it is important to pinpoint
this tractability result.

Recall that a problem is FPT if there is an algorithm that solves the problem in {\em fixed-parameter polynomial-time}, that is, with a cost
$f(k) O(n^{O(1)})$, for some computable function $f$ that is applied to the parameter $k$ only. In other words, this algorithm not only runs in
polynomial time if $k$ is bounded by a fixed number, but it also exhibits a ``nice'' dependency on the parameter, because $k$ is not in the
exponent of the input size $n$. Let \textit{p-TP} be the problem of computing a tree projection of $\HG_Q$ w.r.t.~$\HG_{\V}$, for a given pair
$(Q,\V)$, parameterized by the maximum arity of the relations occurring in $(Q,\V)$.

\begin{theorem}\label{theo:fp}
The problem $\textit{p-TP}$ is fixed-parameter tractable.
\end{theorem}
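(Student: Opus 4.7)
The plan is to exploit the game-theoretic characterization of tree projections given in Section~\ref{GTC} and reduce the problem to a bottom-up dynamic programming search over a state space whose size is polynomial in $n$ and exponential only in the arity parameter $k$. By \cite{GS08} (combined with Theorem~\ref{thm:costruzione}), a tree projection of $\HG_Q$ w.r.t.~$\HG_\V$ exists if and only if the Captain has a \emph{monotone} winning strategy in the Robber and Captain game on $(\HG_Q,\HG_\V)$, and such a strategy can be translated into a tree projection in polynomial time. Hence it suffices to decide the existence of a monotone winning strategy and exhibit one in fixed-parameter polynomial time.

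The key observation is that, in any configuration $(h,M,C)$ arising along a monotone strategy, we have $M\subseteq h$ and $h\in\edges(\HG_\V)$, so $|M|\leq |h|\leq k$. Therefore the number of distinct Captain's positions $(h,M)$ is at most $|\edges(\HG_\V)|\cdot 2^k$. Coupled with the fact that, for a fixed separator $M$, there are at most $|\nodes(\HG_Q)|$ many \components{M}, the total number of relevant configurations is bounded by $|\edges(\HG_\V)|\cdot 2^k\cdot |\nodes(\HG_Q)|$, which is of the form $f(k)\cdot \mathrm{poly}(n)$.

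Given this state space, I would set up the dynamic programming as follows. Mark each capture configuration (i.e., any $(h,M,C)$ with no \ecomponent{C,M'} available after some move $(h',M')$) as \emph{winning}. Then, in order of increasing $|C|$ (which is sound because monotonicity forces $C'\subseteq C$ at every Captain's move), mark $(h,M,C)$ as winning whenever there exist $h'\in \edges(\HG_\V)$ and $M'\subseteq h'\cap\F(C)$ such that every \iecomponent{(h,M,C),M'} $C'\subseteq C$ yields a configuration $(h',M',C')$ already marked winning. The initial configuration $(\emptyset,\emptyset,\nodes(\HG_Q))$ is winning iff a monotone winning strategy exists. Enumerating candidate moves costs at most $|\edges(\HG_\V)|\cdot 2^k$ per state, and computing the induced components takes polynomial time, so the total running time is $O\bigl(2^{2k}\cdot |\edges(\HG_\V)|^2 \cdot n^{O(1)}\bigr)$, which is fixed-parameter polynomial in $k$.

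Finally, from the winning-configurations table one reconstructs a strategy graph $G(\sigma)$ of a monotone winning strategy by greedily selecting, at each winning state, one witnessing move $(h',M')$; the corresponding tree projection is then obtained, in polynomial time, by taking one hyperedge $M'$ per Captain's position used along $\sigma$ (see~\cite{GS08} and Theorem~\ref{thm:greedy}). The principal technical point to get right is the well-foundedness of the DP: one must argue that the ``useful'' Captain's moves can be restricted to those with $M'\subseteq h'\cap\F(C)$ and that monotone strategies are closed under the implied component ordering, so that evaluating states by increasing $|C|$ respects all dependencies. Everything else is a routine combination of the game-theoretic characterization with the arity bound.
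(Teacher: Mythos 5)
Your proof is correct in substance but takes a genuinely different route from the paper's. The paper builds no new dynamic program at all: it observes that, when the arity is at most $k$, one can compute in time $O(2^k\times|\edges(\HG_\V)|)$ the \emph{simplicial} version $\HG_s$ of $\HG_\V$, whose hyperedges are all nonempty subsets of hyperedges of $\HG_\V$; tree projections of $(\HG_Q,\HG_\V)$ and of $(\HG_Q,\HG_s)$ coincide, and---since every candidate cop set $M'\subseteq h\cap\F(C)$ is now itself a squad---the \emph{greedy} strategies on $(\HG_Q,\HG_s)$ are exactly the unrestricted strategies on $(\HG_Q,\HG_\V)$. The theorem then follows immediately from Theorems~\ref{thm:GreedyExistence} and~\ref{thm:greedy}, which were already established for greedy strategies. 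You instead re-derive the algorithmic core by hand: a backward-induction table over configurations $(h,M,C)$, with the same $2^k$ factor coming from enumerating $M\subseteq h$, together with the monotone-strategy equivalence of Theorem~\ref{thm:costruzione}. This works, but it obliges you to re-prove what the paper gets for free, and the one point you flag---well-foundedness of the induction---is a real (if repairable) issue: a monotone move can leave the component unchanged, with only the position $(h,M)$ changing (for instance when $M'\subseteq\partial C$ still separates $C$), so ordering states by $|C|$ alone does not respect all dependencies; you need either a fixpoint iteration within each $|C|$-level or a normalization lemma eliminating such non-progress moves. The paper's reduction sidesteps this entirely and, via the greedy machinery, even accommodates non-monotone strategies. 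Both arguments yield the same single-exponential dependence on the arity parameter.
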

\begin{proof}
Let $(Q,\V)$ be an input pair for $\textit{p-TP}$, let $(\HG_Q,\HG_{\V})$ be the pair of associated hypergraphs, and let $k$ be the parameter.

Compute the simplicial version $\HG_s$ of the hypergraph $\HG_{\V}$, that is, the hypergraph having the same set of nodes as $\HG_{\V}$, and
where $\edges(\HG_s)= \{ h'\neq\emptyset \mid h'\subseteq h, h\in \edges(\HG_{\V})\}$. Therefore, $\edges(\HG_s)$ contains all subsets of every
hyperedge of $\HG_{\V}$.
Clearly, $\HG_s$ can be computed in time $O(2^k \times |\edges(\HG_{\V})|)$, and the tree projections of $(\HG_Q,\HG_{\V})$ are the same as the
tree projections of $(\HG_Q,\HG_s)$. To conclude, observe that any tree projection of the latter pair can be computed in polynomial-time by
Theorem~\ref{thm:greedy} and the fact that, having a squad for every possible set of cops in any squad/hyperedge of $\HG_{\V}$, the greedy
strategies in the game $\CR(\HG_Q,\HG_s)$ are precisely the (unrestricted) strategies in the game $\CR(\HG_Q,\HG_{\V})$, which characterize the
tree projections of $(\HG_Q,\HG_{\V})$.\footnote{Note that the same relationship holds for the monotone strategies and, hence, for the
Marshal's strategies in the Robber and Marshal game over the pair $(\HG_Q,\HG_s)$, as observed by Adler~\cite{adler-thesis}.}~\hfill~$\Box$
\end{proof}

The above tractability result is smoothly inherited by all structural decomposition methods {\tt DM} such that the arity of the views in $\lDM$
is $O(f(k))$ for some computable function $f$ that does not depend on the size of the input. For instance, this is the case for the methods
based on bounded (generalized hyper)tree decompositions, but not for fractional hypertree decompositions. In particular, if $w$ is the fixed
maximum width for a class of queries having bounded generalized hypertree width,  the maximum arity of the computed views is $w\times k$. Thus,
if $\textit{p-ghw}_w$ denotes the problem of computing a width-$w$ generalized hypertree decomposition of a query, parameterized by the maximum
arity of the query atoms, we immediately get the following result.

\begin{corollary}
The problem $\textit{p-ghw}_w$ is fixed-parameter tractable.
\end{corollary}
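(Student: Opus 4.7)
My plan is to reduce $\textit{p-ghw}_w$ directly to the problem $\textit{p-TP}$ whose fixed-parameter tractability has just been established in Theorem~\ref{theo:fp}. Recall that a width-$w$ generalized hypertree decomposition of $Q$ is precisely a tree projection of $(\HG_Q,\HG_Q^{w})$, where $\HG_Q^{w}$ is the hypergraph whose hyperedges are all unions of at most $w$ hyperedges of $\HG_Q$. Thus, from the input $Q$ I would first build the pair $(\HG_Q,\HG_Q^{w})$, and then invoke the FPT algorithm of Theorem~\ref{theo:fp} on this pair.

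The main technical point to check is that the parameter of the derived $\textit{p-TP}$ instance remains controlled by the parameter $k$ of the original $\textit{p-ghw}_w$ instance (the maximum arity of the query atoms). Here I would simply observe that every hyperedge of $\HG_Q^{w}$ is the union of at most $w$ hyperedges of $\HG_Q$, each of cardinality at most $k$, so that the arity of $(\HG_Q,\HG_Q^{w})$ is bounded by $w\cdot k$. Since $w$ is a fixed constant (not part of the input, but a parameter of the decomposition method), this is a function of $k$ alone.

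The construction of $\HG_Q^{w}$ itself takes time $O(|\edges(\HG_Q)|^{w})$, which is polynomial in the input size (because $w$ is fixed); and by Theorem~\ref{theo:fp} the subsequent tree projection computation runs in time $f(w\cdot k)\cdot n^{O(1)}$, for some computable $f$. Composing these two polynomial factors with the function $f$ applied only to $w\cdot k$ yields an overall running time of the form $g(k)\cdot n^{O(1)}$, with $g(k)=f(w\cdot k)$, which is precisely the definition of fixed-parameter tractability with respect to $k$.

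I do not anticipate any real obstacle: the result is essentially a direct corollary of Theorem~\ref{theo:fp} once one makes explicit the standard recasting of width-$w$ generalized hypertree decompositions as tree projections w.r.t.~$\HG_Q^{w}$, plus the arithmetic observation that the arity blows up by at most the fixed factor~$w$. The only point worth being careful about is to insist that $w$ is a constant of the problem (hence does not enter the $n^{O(1)}$ in a harmful way), and that, conversely, $k$ is the parameter, so that any exponential dependence on $k$ is admissible.
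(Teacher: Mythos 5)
Your proposal is correct and follows essentially the same route as the paper: the paper likewise observes that the views of the width-$w$ generalized hypertree decomposition method are exactly the hyperedges of $\HG_Q^{w}$, that their arity is bounded by $w\times k$ where $k$ is the maximum arity of the query atoms, and then invokes Theorem~\ref{theo:fp}. Your additional remarks on the polynomial cost of building $\HG_Q^{w}$ and on $w$ being a fixed constant rather than part of the parameter are exactly the points the paper relies on implicitly.
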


We believe that this is a useful result. Indeed, even if for queries $Q$ having maximum arity $k$ we have $ghw(Q)\leq tw(Q) \leq k \times
ghw(Q)$, we know that the problem of evaluating queries is not fixed-parameter tractable, with respect to the (generalized hyper)treewidth
parameter. It follows that, under usual fixed-parameter complexity assumptions, an exponential dependency on such width parameters is
unavoidable, hence evaluating such queries has a cost of the form $O(n^{f(w)})$, where $w$ is the treewidth (or the hypertree width) and $n$ is
the combined size of the database and the query (which is typically largely dominated by the size of the database). We thus argue that
employing generalized hypertree width instead of treewidth provides an exponential saving in the query-evaluation time, in general, and it is
convenient even for small arity instances. Moreover, recall that the computation of the decomposition depends on the hypergraph only (and not
on the database) and, unlike other fixed-parameter algorithms, the algorithm described in Theorem~\ref{theo:fp} is ``practical,'' as there are
no huge constants and the dependence on the arity parameter is single-exponential.

\section{Conclusion}\label{sec:conclusion}

In this paper, we have fully characterized the power of algorithms for evaluating conjunctive queries (and constraint satisfaction problems)
based on enforcing local consistency. We studied both the general framework where consistency is enforced over arbitrary views and  the more
specific cases where views are computed according to structural decomposition methods.
These results have already found application to the problems of enumerating query answers~\cite{GS10} and computing optimal
solutions~\cite{GS11}.

In addition to the questions mentioned in the Introduction, it is worthwhile recalling another open question that eventually finds an answer
with these results. The question was raised in~\cite{GS84}, where the \emph{tree projection theorem} was proved.
Roughly, a query program $P$ is a finite sequence of steps involving project, select and join operations. The relation computed in the final
step is the result of $P$. The {tree projection theorem} states that a query program $P$ solves a query $Q$ (i.e., the result of $P$ always
coincides with the answers of $Q$ over its set of output variables) if, and only if, there is a tree projection of $Q$ w.r.t. the hypergraph
associated with the various relations/views determined by $P$. A crucial point here is that $P$ is a fixed program, so that the number of its
operations does not depend on the database size. The natural question in~\cite{GS84} was therefore to ask what happens if $P$ is allowed to
contain a ``semijoin loop,'' that is, a loop that is to be executed until nothing changes in the involved relations/views. Is it the case that
the tree projection theorem still holds for such programs, where the number of steps is data-dependent?
The results in the paper provide a positive answer to this question for the setting of simple queries (implicitly) considered in~\cite{GS84}
and, in fact, also a complete answer covering the general case where queries may contain more atoms over the same relation symbol.

Finally, by exploiting a recent hypergraph-game characterization of tree projections, we also identified new islands of (structural)
tractability, and we pinpointed the fixed-parameter tractability of tree projections and of (most) structural decomposition methods when small
arity structures are considered. We believe that such results may be very useful in practical applications, and we are currently working on
direct implementations of the proposed techniques in real-world database management systems.

There are still a number of interesting questions to be answered about structural decomposition methods. For instance, even for the bounded
arity case, the frontier of tractability for the problem of enumerating with polynomial delay the answers of a conjunctive query $Q$ over a
given arbitrary set of output variables is not known (see~\cite{G07,BDGM09}). Moreover, in the general unbounded-arity case, the frontier of
tractability is not known even for Boolean conjunctive queries. In fact, in the unbounded arity case, the notion of submodular width~\cite{M10}
allows us to identify the class of conjunctive queries that are fixed-parameter tractable (where the parameter is the size of the query),
assuming the exponential-time hypothesis. As a consequence, we now have an interesting gap to be explored between the polynomial-time
tractability of instances having bounded fractional hypertree width~\cite{GM06,M09} and the fixed-parameter tractability of instances having
bounded submodular width.

\end{document}